\newif\ifconf\conffalse
\newtheorem{theorem}{Theorem}[section]
\newtheorem{lemma}[theorem]{Lemma}
\newtheorem{definition}[theorem]{Definition}
\newtheorem{claim}[theorem]{Claim}
\newcommand{\inner}[1]{\langle#1\rangle}
\newcommand{\abs}[1]{\left|#1\right|}
\newcommand{\norm}[2]{\left \lVert#2\right \rVert_{#1}}
\newcommand{\head}[2]{#1_{[#2]}}
\newcommand{\tail}[2]{#1_{\overline{[#2]}}}
\newcommand{\range}[3]{#1_{\overline{[#2]}\cap[#3]}}
\newcommand{\gaplinf}{\mathsf{Gap}\ell_{\infty}}
\newcommand{\indlinf}{\mathsf{Ind}\ell_{\infty}}
 \gdef\xxxmark{%
   \expandafter\ifx\csname @mpargs\endcsname\relax 
     \expandafter\ifx\csname @captype\endcsname\relax 
       \marginpar{xxx}
     \else
       xxx 
     \fi
   \else
     xxx 
   \fi}
 \gdef\xxx{\@ifnextchar[\xxx@lab\xxx@nolab}
 \long\gdef\xxx@lab[#1]#2{{\bf [\xxxmark #2 ---{\sc #1}]}}
 \long\gdef\xxx@nolab#1{{\bf [\xxxmark #1]}}
\DeclareMathOperator*{\argmin}{arg\,min}
\DeclareMathOperator{\supp}{supp}
\DeclareMathOperator{\E}{E}
\def\R{\mathbb{R}}
\def\eps{\epsilon}
\title{$(1+\eps)$-approximate Sparse Recovery}
\author{
\IEEEauthorblockN{Eric Price}
\IEEEauthorblockA{MIT CSAIL\\
ecprice@mit.edu}
\and
\IEEEauthorblockN{David P. Woodruff}
\IEEEauthorblockA{IBM Almaden\\
dpwoodru@us.ibm.com}
}
\author{Eric Price\\MIT \and David P. Woodruff\\IBM Almaden}
\date{2011-08-12}
\begin{document}
\maketitle

\begin{abstract}
The problem central to sparse recovery and compressive sensing is that of \emph{stable sparse
  recovery}: we want a distribution $\mathcal{A}$ of matrices $A \in
\R^{m \times n}$ such that, for any $x \in \R^n$ and with probability
$1 - \delta > 2/3$ over $A \in \mathcal{A}$, there is an algorithm to
recover $\hat{x}$ from $Ax$ with
\begin{align}
  \norm{p}{\hat{x} - x} \leq C \min_{k\text{-sparse } x'} \norm{p}{x - x'}
\end{align}
for some constant $C > 1$ and norm $p$. 

The measurement complexity of this problem is well understood for
constant $C > 1$.  However, in a variety of applications it is
important to obtain $C = 1+\eps$ for a small $\eps > 0$, and this
complexity is not well understood.  We resolve the dependence on
$\eps$ in the number of measurements required of a $k$-sparse recovery
algorithm, up to polylogarithmic factors for the central cases of
$p=1$ and $p=2$.  Namely, we give new algorithms and lower bounds that
show the number of measurements required is $k/\eps^{p/2}
\textrm{polylog}(n)$.  For $p=2$, our bound of $\frac{1}{\eps}k\log
(n/k)$ is tight up to \emph{constant} factors. We also give matching
bounds when the output is required to be $k$-sparse, in which case we
achieve $k/\eps^p \textrm{polylog}(n)$. This shows the distinction
between the complexity of sparse and non-sparse outputs is
fundamental.
\end{abstract} 
\ifconf
\else
\fi

\section{Introduction}

Over the last several years, substantial interest has been
generated in the problem of solving underdetermined linear systems
subject to a sparsity constraint.  The field, known as
\emph{compressed sensing} or \emph{sparse recovery}, has applications
to a wide variety of fields that includes data stream
algorithms~\cite{M05}, medical or geological
imaging~\cite{CRT06,D06}, and genetics testing~\cite{SAZ}. 
The approach uses the power of a \emph{sparsity} constraint: a vector
$x'$ is \emph{$k$-sparse} if at most $k$ coefficients are non-zero.  A
standard formulation for the problem is that of \emph{stable sparse
  recovery}: we want a distribution $\mathcal{A}$ of matrices $A \in
\R^{m \times n}$ such that, for any $x \in \R^n$ and with probability
$1 - \delta > 2/3$ over $A \in \mathcal{A}$, there is an algorithm to
recover $\hat{x}$ from $Ax$ with
\begin{align}\label{eq:lplp}
  \norm{p}{\hat{x} - x} \leq C \min_{k\text{-sparse } x'} \norm{p}{x - x'}
\end{align}
for some constant $C > 1$ and norm $p$\footnote{Some formulations
  allow the two norms to be different, in which case $C$ is not
  constant.  We only consider equal norms in this paper.}.  We call
this a \emph{$C$-approximate $\ell_p/\ell_p$ recovery scheme} with
\emph{failure probability $\delta$}.  We refer to the elements of $Ax$
as \emph{measurements}.

It is known~\cite{CRT06,GLPS} that such recovery schemes exist for $p
\in \{1,2\}$ with $C = O(1)$ and $m = O(k \log \frac{n}{k})$.
Furthermore, it is known~\cite{DIPW,FPRU} that any such recovery
scheme requires $\Omega(k \log_{1 + C} \frac{n}{k})$ measurements.
This means the measurement complexity is well understood for $C = 1 + \Omega(1)$,
but not for $C = 1 + o(1)$.

A number of applications would like to have $C = 1+\eps$ for small
$\eps$.  For example, a radio wave signal can be modeled as $x = x^* +
w$ where $x^*$ is $k$-sparse (corresponding to a signal over a narrow
band) and the noise $w$ is i.i.d. Gaussian with $\norm{p}{w} \approx D
\norm{p}{x^*}$~\cite{TDB}.  Then sparse recovery with $C = 1 +
\alpha/D$ allows the recovery of a $(1-\alpha)$ fraction of the true
signal $x^*$.  Since $x^*$ is concentrated in a small band while $w$
is located over a large region, it is often the case that $\alpha/D
\ll 1$.

The difficulty of $(1+\eps)$-approximate recovery has seemed to depend
on whether the output $x'$ is required to be $k$-sparse or can have
more than $k$ elements in its support.  Having $k$-sparse output is
important for some applications (e.g. the aforementioned radio waves)
but not for others (e.g. imaging).  Algorithms that output a
$k$-sparse $x'$ have used $\Theta(\frac{1}{\eps^p}k\log n)$
measurements~\cite{CCF,CM04,CM06,W09}.  In contrast,~\cite{GLPS} uses
only $\Theta(\frac{1}{\eps}k\log (n/k))$ measurements for $p = 2$ and
outputs a non-$k$-sparse $x'$.

\begin{figure*}\label{fig:results}
  \begin{center}
    \renewcommand{\arraystretch}{1.5}
    \begin{tabular}{|l|l|l|l|}
      \hline
      &  & Lower bound & Upper bound\\
      \hline
      $k$-sparse output  & $\ell_1$ & $\Omega(\frac{1}{\eps}(k \log \frac{1}{\eps} + \log \frac{1}{\delta}))$ & $O(\frac{1}{\eps} k \log n)$\cite{CM04}\\
      & $\ell_2$ & $\Omega(\frac{1}{\eps^2}(k + \log \frac{1}{\delta}))$ & $O(\frac{1}{\eps^2} k \log n)$\cite{CCF,CM06,W09}\\
      \hline
      Non-$k$-sparse output  & $\ell_1$ & $\Omega(\frac{1}{\sqrt{\eps}\log^2(k/\eps)}k)$ & $O(\frac{\log^3(1/\eps)}{\sqrt{\eps}}k\log n)$\\
      & $\ell_2$ & $\Omega(\frac{1}{\eps}k \log (n/k))$ & $O(\frac{1}{\eps}k \log (n/k))$\cite{GLPS}\\
      \hline
    \end{tabular}
    \renewcommand{\arraystretch}{1}
  \end{center}
  \caption{Our results, along with existing upper bounds.  Fairly minor
    restrictions on the relative magnitude of parameters apply; see
    the theorem statements for details.}
\end{figure*}

\paragraph{Our results}
We show that the apparent distinction between complexity of sparse and
non-sparse outputs is fundamental, for both $p = 1$ and $p = 2$.  We
show that for sparse output, $\Omega(k / \eps^p)$ measurements are
necessary, matching the upper bounds up to a $\log n$ factor.  For
general output and $p=2$, we show $\Omega(\frac{1}{\eps}k \log (n/k))$
measurements are necessary, matching the upper bound up to a constant
factor.  In the remaining case of general output and $p=1$, we show
$\widetilde\Omega(k / \sqrt{\eps})$ measurements are necessary.  We
then give a novel algorithm that uses
$O(\frac{\log^3(1/\eps)}{\sqrt{\eps}}k \log n)$ measurements, beating
the $1/\eps$ dependence given by all previous algorithms.  As a
result, all our bounds are tight up to factors logarithmic in $n$.
The full results are shown in Figure~\ref{fig:results}.

In addition, for $p = 2$ and general output, we show that thresholding
the top $2k$ elements of a Count-Sketch~\cite{CCF} estimate gives
$(1+\eps)$-approximate recovery with $\Theta(\frac{1}{\eps}k\log n)$
measurements.  This is interesting because it highlights the
distinction between sparse output and non-sparse output:~\cite{CM06}
showed that thresholding the top $k$ elements of a Count-Sketch
estimate requires $m = \Theta(\frac{1}{\eps^2}k\log n)$.
While~\cite{GLPS} achieves $m = \Theta(\frac{1}{\eps}k\log (n/k))$ for
the same regime, it only succeeds with constant probability while ours
succeeds with probability $1 - n^{-\Omega(1)}$; hence ours is the most
efficient known algorithm when $\delta = o(1), \eps = o(1),$ and $k <
n^{0.9}$.

\paragraph{Related work}
Much of the work on sparse recovery has relied on the Restricted
Isometry Property~\cite{CRT06}.  None of this work has been able to
get better than $2$-approximate recovery, so there are relatively few
papers achieving $(1+\eps)$-approximate recovery.  The existing ones
with $O(k \log n)$ measurements are surveyed above (except
for~\cite{IR}, which has worse dependence on $\eps$ than~\cite{CM04}
for the same regime).

A couple of previous works have studied the $\ell_\infty/\ell_p$
problem, where every coordinate must be estimated with small error.
This problem is harder than $\ell_p/\ell_p$ sparse recovery with
sparse output.  For $p=2$,~\cite{W09} showed that schemes using
Gaussian matrices $A$ require $m = \Omega(\frac{1}{\eps^2} k \log
(n/k))$.  For $p=1$,~\cite{CM05} showed that any sketch requires
$\Omega(k/\eps)$ bits (rather than measurements).

Independently of this work and of each other, multiple
authors~\cite{CD11,IT10,ASZ10} have matched our
$\Omega(\frac{1}{\eps}k \log (n/k))$ bound for $\ell_2/\ell_2$ in
related settings.  The details vary, but all proofs are broadly
similar in structure to ours: they consider observing a large set of
``well-separated'' vectors under Gaussian noise.  Fano's inequality
gives a lower bound on the mutual information between the observation
and the signal; then, an upper bound on the mutual information is
given by either the Shannon-Hartley theorem or a KL-divergence
argument.  This technique does not seem useful for the other problems
we consider in this paper, such as lower bounds for $\ell_1/\ell_1$ or
the sparse output setting.

\paragraph{Our techniques}
For the upper bounds for non-sparse output, we observe that the hard
case for sparse output is when the noise is fairly concentrated, in
which the estimation of the top $k$ elements can have $\sqrt{\eps}$
error.  Our goal is to recover enough mass from outside the top $k$
elements to cancel this error.  The upper bound for $p = 2$ is a
fairly straightforward analysis of the top $2k$ elements of a
Count-Sketch data structure.

The upper bound for $p = 1$ proceeds by subsampling the vector at rate
$2^{-i}$ and performing a Count-Sketch with size proportional to
$\frac{1}{\sqrt{\eps}}$, for $i \in \{0,1, \dotsc, O(\log
(1/\eps))\}$.  The intuition is that if the noise is well spread over
many (more than $k/\eps^{3/2}$) coordinates, then the $\ell_2$ bound
from the first Count-Sketch gives a very good $\ell_1$ bound, so the
approximation is $(1+\eps)$-approximate.  However, if the noise
is concentrated over a small number $k/\eps^c$ of coordinates, then
the error from the first Count-Sketch is proportional to $1 +
\eps^{c/2 + 1/4}$.  But in this case, one of the subsamples will only
have $O(k/\eps^{c/2 - 1/4}) < k/\sqrt{\eps}$ of the coordinates with
large noise.  We can then recover those coordinates with the
Count-Sketch for that subsample.  Those coordinates contain an
$\eps^{c/2 + 1/4}$ fraction of the total noise, so recovering them
decreases the approximation error by exactly the error induced from
the first Count-Sketch.

The lower bounds use substantially different techniques for sparse
output and for non-sparse output.  For sparse output, we use
reductions from communication complexity to show a lower bound in
terms of bits.  Then, as in~\cite{DIPW}, we embed $\Theta(\log n)$
copies of this communication problem into a single vector.  This
multiplies the bit complexity by $\log n$; we also show we can round
$Ax$ to $\log n$ bits per measurement without affecting recovery,
giving a lower bound in terms of measurements.

We illustrate the lower bound on bit complexity for sparse output
using $k=1$.  Consider a vector $x$ containing $1/\eps^p$ ones and
zeros elsewhere, such that $x_{2i} + x_{2i+1} = 1$ for all $i$.  For
any $i$, set $z_{2i} = z_{2i+1} = 1$ and $z_{j} = 0$ elsewhere.  Then
successful $(1+\eps/3)$-approximate sparse recovery from $A(x + z)$
returns $\hat{z}$ with $\supp(\hat{z}) = \supp(x) \cap \{2i, 2i+1\}$.
Hence we can recover each bit of $x$ with probability $1-\delta$,
requiring $\Omega(1 / \eps^p)$ bits\footnote{For $p = 1$, we can
  actually set $\abs{\supp(z)} = 1/\eps$ and search among a set of
  $1/\eps$ candidates.  This gives $\Omega(\frac{1}{\eps}\log
  (1/\eps))$ bits.}.  We can generalize this to $k$-sparse output for
$\Omega(k / \eps^p)$ bits, and to $\delta$ failure probability with
$\Omega(\frac{1}{\eps^p}\log \frac{1}{\delta})$.  However, the two
generalizations do not seem to combine.

For non-sparse output, we split between $\ell_2$ and $\ell_1$.  In
$\ell_2$, we consider $A(x + w)$ where $x$ is sparse and $w$ has
uniform Gaussian noise with $\norm{2}{w}^2 \approx
\norm{2}{x}^2/\eps$.  Then each coordinate of $y = A(x + w) = Ax + Aw$
is a Gaussian channel with signal to noise ratio $\eps$.  This channel
has channel capacity $\eps$, showing $I(y; x) \leq \eps m$.  Correct
sparse recovery must either get most of $x$ or an $\eps$ fraction of
$w$; the latter requires $m = \Omega(\eps n)$ and the former requires
$I(y; x) = \Omega(k \log (n/k))$.  This gives a tight
$\Theta(\frac{1}{\eps}k \log (n/k))$ result.  Unfortunately, this does
not easily extend to $\ell_1$, because it relies on the Gaussian
distribution being both stable and maximum entropy under $\ell_2$; the
corresponding distributions in $\ell_1$ are not the same.

Therefore for $\ell_1$ non-sparse output, we have yet another
argument.  The hard instances for $k=1$ must have one large value (or
else $0$ is a valid output) but small other values (or else the
$2$-sparse approximation is significantly better than the $1$-sparse
approximation).  Suppose $x$ has one value of size $\eps$ and $d$
values of size $1/d$ spread through a vector of size $d^2$.  Then a
$(1 + \eps/2)$-approximate recovery scheme must either locate the
large element or guess the locations of the $d$ values with
$\Omega(\eps d)$ more correct than incorrect.  The former requires $1
/ (d\eps^2)$ bits by the difficulty of a novel version of the
Gap-$\ell_\infty$ problem.  The latter requires $\eps d$ bits because
it allows recovering an error correcting code.  Setting $d =
\eps^{-3/2}$ balances the terms at $\eps^{-1/2}$ bits.  Because some
of these reductions are very intricate, this extended abstract does
not manage to embed $\log n$ copies of the problem into a single
vector.  As a result, we lose a $\log n$ factor in a universe of size
$n = \text{poly}(k/\eps)$ when converting to measurement complexity
from bit complexity.

\section{Preliminaries}

\paragraph{Notation}
We use $[n]$ to denote the set $\{1 \ldots n\}$.  For any set $S
\subset [n]$, we use $\overline{S}$ to denote the complement of $S$,
i.e., the set $[n]\setminus S$.  For any $x \in \R^n$, $x_i$ denotes
the $i$th coordinate of $x$, and $x_S$ denotes the vector $x' \in
\R^n$ given by $x'_i = x_i$ if $i \in S$, and $x'_i = 0$ otherwise.
We use $\supp(x)$ to denote the support of $x$.

\section{Upper bounds}

The algorithms in this section are indifferent to permutation of the
coordinates.  Therefore, for simplicity of notation in the analysis,
we assume the coefficients of $x$ are sorted such that $\abs{x_1} \geq
\abs{x_2} \geq \dotsc \geq \abs{x_n} \geq 0$.

\paragraph{Count-Sketch}
Both our upper bounds use the Count-Sketch~\cite{CCF} data structure.
The structure consists of $c\log n$ hash tables of size $O(q)$, for
$O(cq\log n)$ total space; it can be represented as $Ax$ for a matrix
$A$ with $O(cq\log n)$ rows.  Given $Ax$, one can construct $x^*$
with
\begin{align}\label{eq:count-sketch}
  \norm{\infty}{x^* - x}^2 \leq \frac{1}{q}\norm{2}{x_{\overline{[q]}}}^2
\end{align}
with failure probability $n^{1-c}$.

\subsection{Non-sparse $\ell_2$}

It was shown in~\cite{CM06} that, if $x^*$ is the result of a
Count-Sketch with hash table size $O(k/\eps^2)$, then outputting the
top $k$ elements of $x^*$ gives a $(1+\eps)$-approximate
$\ell_2/\ell_2$ recovery scheme.  Here we show that a seemingly minor
change---selecting $2k$ elements rather than $k$ elements---turns this
into a $(1+\eps^2)$-approximate $\ell_2/\ell_2$ recovery scheme.

\begin{theorem}\label{thm:csExtended}
  Let $\hat{x}$ be the top $2k$ estimates from a Count-Sketch
  structure with hash table size $O(k/\eps)$.  Then with failure
  probability $n^{-\Omega(1)}$,
  \[
  \norm{2}{\hat{x}-x} \leq (1 + \eps)\norm{2}{\tail{x}{k}}.
  \]
  Therefore, there is a $1+\eps$-approximate $\ell_2/\ell_2$ recovery
  scheme with $O(\frac{1}{\eps}k\log n)$ rows.
\end{theorem}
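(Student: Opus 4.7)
The plan is to combine the per-coordinate guarantee of Count-Sketch with the observation that taking the top $2k$ estimates (rather than the top $k$) provides enough slack to absorb the estimation error into the residual tail. Concretely, I would instantiate Count-Sketch with hash table size $q = Ck/\eps$ for a large constant $C$; equation~(\ref{eq:count-sketch}) then yields $\alpha := \norm{\infty}{x^* - x}$ with $\alpha^2 \leq \frac{\eps}{Ck}\norm{2}{\tail{x}{k}}^2$ with probability $1 - n^{-\Omega(1)}$. Let $T = [k]$, let $S$ denote the support of $\hat{x}$, and write $A = T \cap S$, $B = T \setminus S$, $C = S \setminus T$, so that $\abs{A} + \abs{B} = k$ and $\abs{A} + \abs{C} = 2k$; in particular $\abs{C} = k + \abs{B}$.

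The first step is the decomposition
\begin{align*}
\norm{2}{\hat{x} - x}^2 &= \norm{2}{x^*_S - x_S}^2 + \norm{2}{x_{\overline{S}}}^2 \\
&\leq 2k\alpha^2 + \norm{2}{\tail{x}{k}}^2 + \norm{2}{x_B}^2 - \norm{2}{x_C}^2,
\end{align*}
using Pythagoras together with the identity $\norm{2}{x_{\overline{S}}}^2 = \norm{2}{\tail{x}{k}}^2 + \norm{2}{x_B}^2 - \norm{2}{x_C}^2$. The first term $2k\alpha^2$ is $O(\eps/C)\norm{2}{\tail{x}{k}}^2$, so the task reduces to showing $\norm{2}{x_B}^2 - \norm{2}{x_C}^2 \leq O(\eps)\norm{2}{\tail{x}{k}}^2$.

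For the key inequality, observe that every $j \in C \subseteq S$ satisfies $\abs{x^*_j} \geq \abs{x^*_i}$ for every $i \in B \subseteq \overline{S}$, so the triangle inequality gives $\abs{x_i} \leq \abs{x_j} + 2\alpha$ for all $i \in B$ and $j \in C$. Letting $m = \min_{j \in C}\abs{x_j}$, this yields $\abs{x_i} \leq m + 2\alpha$ for every $i \in B$, hence
\begin{align*}
\norm{2}{x_B}^2 - \norm{2}{x_C}^2 \leq \abs{B}(m + 2\alpha)^2 - \abs{C} m^2 = -k m^2 + 4\abs{B} m \alpha + 4\abs{B}\alpha^2,
\end{align*}
using $\abs{C} - \abs{B} = k$. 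The main obstacle is the cross-term $4\abs{B} m \alpha$, which handled naively only yields an $O(\sqrt{\eps})$ bound on the whole difference. I would resolve this via AM-GM: $4\abs{B} m \alpha \leq k m^2 + 4\abs{B}^2 \alpha^2/k \leq k m^2 + 4k\alpha^2$ (using $\abs{B} \leq k$). The $k m^2$ exactly cancels the negative term, leaving $\norm{2}{x_B}^2 - \norm{2}{x_C}^2 \leq 8k\alpha^2 = O(\eps/C)\norm{2}{\tail{x}{k}}^2$.

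Putting everything together, $\norm{2}{\hat{x} - x}^2 \leq (1 + O(1/C)\eps)\norm{2}{\tail{x}{k}}^2$, which is at most $(1+\eps)^2 \norm{2}{\tail{x}{k}}^2$ once $C$ is chosen to be a sufficiently large constant. The measurement count is $O(q \log n) = O(\frac{1}{\eps} k \log n)$, as claimed.
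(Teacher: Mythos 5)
Your proof is correct and takes essentially the same route as the paper's: the same Pythagorean split $\norm{2}{\hat{x}-x}^2 = \norm{2}{(x^*-x)_S}^2 + \norm{2}{x_{\overline{S}}}^2$, the same comparison of the false-negative set $[k]\setminus S$ against the false-positive set $S\setminus[k]$ via the $\ell_\infty$ guarantee, and the same cancellation of the $km^2$ term (you use AM-GM where the paper completes the square, which is the identical algebraic move). The only thing to tidy is notation: you reuse $C$ for both the hash-table constant and the set $S\setminus T$, and $A$ for both the sketch matrix and $T\cap S$.
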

\begin{proof}
  Let the hash table size be $O(ck/\eps)$ for constant $c$, and let
  $x^*$ be the vector of estimates for each coordinate.  Define $S$ to
  be the indices of the largest $2k$ values in $x^*$, and $E =
  \norm{2}{\tail{x}{k}}$.

  By~\eqref{eq:count-sketch}, the standard analysis of Count-Sketch:
  \[
  \norm{\infty}{x^*-x}^2 \leq \frac{\eps}{ck}E^2.
  \]
  so
  \begin{align}
    \norm{2}{x^*_S - x}^2 - E^2 
  \ifconf\\\notag\fi
  =
  \norm{2}{x^*_S - x}^2 - \norm{2}{\tail{x}{k}}^2
    \ifconf\\\fi
    \leq& \norm{2}{(x^*-x)_S}^2 +
    \norm{2}{x_{[n]\setminus S}}^2 - \norm{2}{\tail{x}{k}}^2\notag\\
    \leq& \abs{S} \norm{\infty}{x^*-x}^2 + \norm{2}{x_{[k] \setminus S}}^2 - \norm{2}{x_{S \setminus [k]}}^2\notag\\
    \leq& \frac{2\eps}{c} E^2 + \norm{2}{x_{[k] \setminus S}}^2 - \norm{2}{x_{S \setminus [k]}}^2\label{eq:l2l2initial}
  \end{align}

  Let $a = \max_{i \in [k]\setminus S} x_i$ and $b = \min_{i \in S
    \setminus [k]} x_i$, and let $d = \abs{[k] \setminus S}$.  The
  algorithm passes over an element of value $a$ to choose one of value
  $b$, so
  \[
  a \leq b + 2\norm{\infty}{x^*-x} \leq b +  2\sqrt{\frac{\eps}{ck}}E.
  \]
  Then
  \begin{align*}
    \norm{2}{x_{[k] \setminus S}}^2 - \norm{2}{x_{S \setminus [k]}}^2  
    \ifconf\\\fi
    \leq& da^2 - (k + d)b^2\\
    \leq& d(b + 2\sqrt{\frac{\eps}{ck}}E)^2 - (k+d)b^2\\
    \leq& -kb^2 + 4\sqrt{\frac{\eps}{ck}}dbE + \frac{4\eps}{ck}dE^2\\
    \leq& -k(b - 2\sqrt{\frac{\eps }{ck^3}} dE)^2 + \frac{4\eps}{ck^2}dE^2(k-d)\\
    \leq& \frac{4d(k-d)\eps}{ck^2}E^2 \leq \frac{\eps}{c}E^2
  \end{align*}
  and combining this with~\eqref{eq:l2l2initial} gives
  \[
  \norm{2}{x^*_S - x}^2 - E^2\leq \frac{3\eps}{c}E^2
  \]
  or
  \[
  \norm{2}{x^*_S - x} \leq (1+\frac{3\eps}{2c})E
  \]
  which proves the theorem for $c \geq 3/2$.
\end{proof}

\subsection{Non-sparse $\ell_1$}

\begin{theorem}\label{thm:l1upper}
  There exists a $(1+\eps)$-approximate $\ell_1/\ell_1$ recovery
  scheme with $O(\frac{\log^3 1/\eps}{\sqrt{\eps}} k\log n)$
  measurements and failure probability $e^{-\Omega(k/\sqrt{\eps})} +
  n^{-\Omega(1)}$.
\end{theorem}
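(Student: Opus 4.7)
The plan is to run a bank of Count-Sketches at $R+1 = \Theta(\log(1/\eps))$ geometrically decreasing subsampling rates and fuse their outputs. For each level $i \in \{0,1,\ldots,R\}$, include each coordinate of $x$ in a subset $T_i$ independently with probability $2^{-i}$, and compute a Count-Sketch of $x_{T_i}$ with hash-table size $q = \Theta(k/\sqrt{\eps})$ and enough hash-table repetitions to drive the total failure probability below $n^{-\Omega(1)} + e^{-\Omega(k/\sqrt{\eps})}$ after union-bounding over levels. By~\eqref{eq:count-sketch}, this yields estimates $x^{(i)}$ with $\norm{\infty}{x^{(i)} - x_{T_i}}^2 \leq O(\sqrt{\eps}/k)\,\norm{2}{\tail{(x_{T_i})}{q}}^2$ simultaneously across all levels. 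I would form the output $\hat{x}$ supported on a set $S$ obtained by taking the top $\Theta(k)$ estimates from every level, with $\hat{x}_j$ the estimate from the finest level containing $j$. Tracking the repetitions carefully across the $O(\log(1/\eps))$ levels gives the claimed measurement count $O(\frac{\log^3(1/\eps)}{\sqrt{\eps}} k\log n)$.

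Let $E = \norm{1}{\tail{x}{k}}$. I would bound $\norm{1}{\hat{x} - x}$ by case-analyzing the distribution of mass in $\tail{x}{k}$. Group the tail coordinates into dyadic magnitude buckets $B_j$, and call $B_j$ \emph{spread} if $\abs{B_j} \geq k/\eps^{3/2}$ and \emph{concentrated} otherwise. Coordinates in spread buckets carry relatively little $\ell_2$ mass, so the level-$0$ Count-Sketch estimates every $j \in [k]$ with $\ell_\infty$ error whose total contribution over the top $k$ entries is at most $O(\eps) E$. For a concentrated bucket with $\abs{B_j} = \Theta(k/\eps^c)$ and $c > 3/2$, pick the level $i(j)$ satisfying $2^{-i(j)}\abs{B_j} = \Theta(k/\sqrt{\eps})$; Chernoff bounds imply $T_{i(j)}$ contains $\Theta(k/\sqrt{\eps})$ elements of $B_j$, and at that level the $\ell_2$ norm of $x_{T_{i(j)}}$ beyond the top $q$ coordinates is small enough that Count-Sketch identifies and accurately estimates every element of $B_j \cap T_{i(j)}$. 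The reduction in $\norm{1}{x_{\overline{S}}}$ from recovering these coordinates exactly offsets the $\eps^{c/2+1/4} E$ error that level $0$ alone would have incurred on the same bucket, as in the intuition sketched in the introduction.

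The main obstacle is the coordinated bookkeeping across the $O(\log(1/\eps))$ dyadic magnitude scales and $O(\log(1/\eps))$ subsampling levels, so the error telescopes to $(1+\eps)E$ rather than $(1+\eps\log(1/\eps))E$. Concretely I would need to establish: (i) simultaneous concentration of $\abs{T_i \cap B_j}$ around its expectation for every pair $(i,j)$, which costs polylogarithmic factors in the hash-table repetitions; (ii) an argument that the Count-Sketch at level $i(j)$ is not misled by the contributions of other concentrated buckets that happen to be oversampled at that level, so the ``heavy hitters'' of $x_{T_{i(j)}}$ really do come from $B_j$; and (iii) a per-bucket accounting showing that the $\sqrt{\eps}$-factor overhead in the $\ell_\infty$ error is absorbed by the $\ell_1$ mass recovered at level $i(j)$. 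Tuning the constants in $q$ and $R$ so that all three items hold simultaneously, while keeping the output support size $\abs{S} = O(k \log(1/\eps))$ so the accumulated per-coordinate error remains $O(\eps) E$, is the delicate part of the full proof.
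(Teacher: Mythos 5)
Your high-level architecture is the same as the paper's: run Count-Sketches on subsamples at rates $2^{-i}$ for $i = 0, \ldots, \Theta(\log(1/\eps))$ with hash tables of width $\Theta(k/\sqrt{\eps})$, and combine recovered coordinates across levels. However, there is a genuine quantitative gap in the output support, and it is not a tunable bookkeeping detail.

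You propose taking the top $\Theta(k)$ estimates from each level, for a total output support $|S| = O(k\log(1/\eps))$. This is too small for the cancellation to go through. Working with $f = \sqrt{\eps}$ as in the paper: on the hard instance where $\tail{x}{k}$ consists of $\approx k/f^{1+c}$ entries of equal magnitude $v$ (with $c \in (0,2)$), the level-$0$ estimation error on $\head{x}{k}$ is $\Theta(kv/f^{c/2}) = \Theta(f^{1+c/2}) \cdot E$, which exceeds $f^2 E = \eps E$. Cancelling it requires recovering $\Omega(kv/f^{c/2})$ of $\ell_1$ mass from the subsampled level $i(j)$ with rate $2^{-i(j)} \approx f^c$, i.e.\ including $\Omega(k/f^{c/2}) = \Omega(k\,2^{i(j)/2})$ coordinates from that level, which is $\gg k$ for any $c > 0$. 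Including only $\Theta(k)$ per level recovers at most $O(kv)$ of mass, off by a factor of $f^{-c/2}$. The paper resolves this exactly by taking $|S_j| = 2^{j/2}k$ coordinates from level $j$, so the final output is $\Theta(k/\sqrt{\eps})$-sparse, not $\Theta(k\log(1/\eps))$-sparse. (Your own intuitive sketch of recovering ``every element of $B_j \cap T_{i(j)}$,'' which has size $\Theta(k/\sqrt{\eps})$, is in tension with your stated $O(k\log(1/\eps))$ support bound.)

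Two more minor remarks. Your obstacles (i)--(iii) are real, but the paper's accounting avoids a per-magnitude-bucket case analysis: it instead tracks the rank-based quantities $a_j = k 2^{j/2}|x_{2^j k/f}|$, and restricts the cancellation claim to the dominant levels $U = \{j : a_j \geq \|a\|_1/(2r)\}$, for which $\sum_{i \geq j} a_i^2 \leq 2r a_j^2$. This sidesteps the worry that other ``oversampled'' buckets confuse the Count-Sketch at level $i(j)$, since the Count-Sketch guarantee (Lemma~\ref{lemma:subsampledsketch}) is stated in terms of the $\ell_2$ norm of the entire subsampled tail, not per-bucket. Finally, the $\log^3(1/\eps)$ in the measurement count comes from $r \cdot c$ with $c = \Theta(r^2)$ and $r = \Theta(\log(1/\eps))$; the extra $r^2$ factor in the hash-table width is needed precisely to absorb the sums over the $O(r)$ levels in the cancellation step, which is a sharper reason than the generic ``polylogarithmic factors in the hash-table repetitions'' you cite.
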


Set $f = \sqrt{\eps}$, so our goal is to get $(1+f^2)$-approximate
$\ell_1/\ell_1$ recovery with $O(\frac{\log^3 1/f}{f}k\log n)$
measurements.

For intuition, consider 1-sparse recovery of the following vector $x$:
let $c \in [0, 2]$ and set $x_1 = 1/f^9$ and $x_2, \dotsc,
x_{1 + 1/f^{1+c}} \in \{\pm 1\}$.  Then we have
\begin{align*}
  \norm{1}{\tail{x}{1}} &= 1/f^{1+c}
\end{align*}
and by~\eqref{eq:count-sketch}, a Count-Sketch with $O(1/f)$-sized
hash tables returns $x^*$ with
\begin{align*}
  \norm{\infty}{x^*-x} \leq \sqrt{f}\norm{2}{\tail{x}{1/f}} \approx 1/f^{c/2} = f^{1 + c/2}\norm{1}{\tail{x}{1}}.
\end{align*}
The reconstruction algorithm therefore cannot reliably find any of the
$x_i$ for $i > 1$, and its error on $x_1$ is at least $f^{1 +
  c/2}\norm{1}{\tail{x}{1}}$.  Hence the algorithm will not do better
than a $f^{1 + c/2}$-approximation.

However, consider what happens if we subsample an $f^c$ fraction of
the vector.  The result probably has about $1/f$ non-zero values, so a
$O(1/f)$-width Count-Sketch can reconstruct it exactly.  Putting this
in our output improves the overall $\ell_1$ error by about $1/f =
f^c\norm{1}{\tail{x}{1}}$.  Since $c < 2$, this more than cancels the
$f^{1+c/2}\norm{1}{\tail{x}{1}}$ error the initial Count-Sketch makes
on $x_1$, giving an approximation factor better than $1$.

This tells us that subsampling can help.  We don't need to subsample
at a scale below $k/f$ (where we can reconstruct well already) or
above $k/f^3$ (where the $\ell_2$ bound is small enough already), but
in the intermediate range we need to subsample.  Our algorithm
subsamples at all $\log 1/f^2$ rates in between these two endpoints,
and combines the heavy hitters from each.

First we analyze how subsampled Count-Sketch works.

\begin{lemma}\label{lemma:subsampledsketch}
  Suppose we subsample with probability $p$ and then apply
  Count-Sketch with $\Theta(\log n)$ rows and $\Theta(q)$-sized hash
  tables.  Let $y$ be the subsample of $x$.  Then with failure
  probability $e^{-\Omega(q)} + n^{-\Omega(1)}$ we recover a $y^*$ with
  \[
  \norm{\infty}{y^* - y} \leq \sqrt{p/q}\norm{2}{\tail{x}{q/p}}.
  \]
\end{lemma}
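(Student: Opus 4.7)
The plan is to start from the standard Count-Sketch guarantee \eqref{eq:count-sketch} applied to $y = x_S$ (where $S$ is the random subsample) and translate $\norm{2}{\tail{y}{\Theta(q)}}^2$ back into $\norm{2}{\tail{x}{q/p}}^2$. If the hash table has size $Bq$ for a constant $B$ to be fixed, Count-Sketch gives $\norm{\infty}{y^*-y}^2 \leq (Bq)^{-1}\norm{2}{\tail{y}{Bq}}^2$ with failure probability $n^{-\Omega(1)}$. Writing $\sigma^2 := \norm{2}{\tail{x}{q/p}}^2$, it then suffices to prove $\norm{2}{\tail{y}{Bq}}^2 = O(p\sigma^2)$ with failure probability $e^{-\Omega(q)}$.

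The main step is to identify a set $H$ of ``heavy'' coordinates that the top $Bq$ entries of $y$ must absorb, leaving $\tail{y}{Bq}$ supported on uniformly small coordinates. Put $T = [q/p]$ and $L = \{i \notin T : x_i^2 > p\sigma^2/q\}$; since $\sum_{i \in L} x_i^2 \leq \sigma^2$, we get $|L| \leq q/p$, so $|H| := |T \cup L| \leq 2q/p$. Because each coordinate lies in $S$ independently with probability $p$, $\E|H \cap S| \leq 2q$, and Chernoff gives $|H \cap S| \leq Bq$ with failure probability $e^{-\Omega(q)}$ for $B$ a sufficiently large absolute constant. Conditioned on this event, the definitions of $T$ and $L$ guarantee that every coordinate in $H$ has magnitude at least as large as any coordinate in $\overline{H}$, so the top $Bq$ entries of $y$ absorb $H \cap S$, giving
\[
\norm{2}{\tail{y}{Bq}}^2 \;\leq\; \sum_{i \in \overline{H}} x_i^2 \, \mathbf{1}[i \in S].
\]

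Finally, I bound the right-hand side by Bernstein's inequality. Each summand is at most $p\sigma^2/q$ by the definition of $L$, the expectation of the sum is at most $p\sigma^2$, and its variance is at most $p \cdot (p\sigma^2/q) \cdot \sigma^2 = p^2\sigma^4/q$. Setting the deviation to $p\sigma^2$ balances both terms inside Bernstein and yields a tail probability of $\exp(-\Omega(q))$. Combining the three failure events---Count-Sketch's hash-table analysis, the Chernoff bound on $|H \cap S|$, and the Bernstein bound---proves the lemma.

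The main obstacle is the calibration of the threshold defining $L$. The naive choice $\sigma^2/q$---the threshold already implicit in Count-Sketch---makes the individual summands in the Bernstein step a factor of $1/p$ too large, yielding only $e^{-\Omega(pq)}$ concentration, which is too weak when $p$ is small. The correct choice $p\sigma^2/q$ enlarges $L$ by a factor of $1/p$; this still fits inside a hash table of size $\Theta(q)$ because $p$ also shrinks $\E|H \cap S|$ correspondingly, and it tightens the Bernstein exponent to the $e^{-\Omega(q)}$ required by the statement.
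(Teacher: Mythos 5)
Your proposal is correct and takes essentially the same approach as the paper: identify a $\Theta(q/p)$-size set of heavy coordinates whose intersection with the sample is $O(q)$ by Chernoff, observe that the remaining coordinates each contribute at most $O(p\sigma^2/q)$, and apply a second concentration bound (you use Bernstein; the paper uses the multiplicative Chernoff form) to show their sampled sum is $O(p\sigma^2)$. The paper takes its heavy set to be $[\,3q/(2p)\,]$ and deduces the per-coordinate magnitude bound from the tail norm, while you split it explicitly into $T = [q/p]$ plus the large coordinates $L$; this is a cosmetic repackaging of the same argument.
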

\begin{proof}
  Recall the following form of the Chernoff bound: if $X_1, \dotsc,
  X_m$ are independent with $0 \leq X_i \leq M$, and $\mu \geq \E[\sum
    X_i]$, then
  \[
  \Pr[\sum X_i \geq \frac{4}{3}\mu] \leq e^{-\Omega(\mu/M)}.
  \]

  Let $T$ be the set of coordinates in the sample.  Then $\E[\abs{T
    \cap [\frac{3q}{2p}]}] = 3q/2$, so
  \[
  \Pr\left[ \abs{T \cap [\frac{3q}{2p}]} \geq 2q\right] \leq e^{-\Omega(q)}.
  \]
  Suppose this event does not happen, so $\abs{T \cap [\frac{3q}{2p}]}
  < 2q$.  We also have
  \[
  \norm{2}{\tail{x}{q/p}} \geq \sqrt{\frac{q}{2p}}\abs{x_{\frac{3q}{2p}}}.
  \]
  Let $Y_i = 0$ if $i \notin T$ and $Y_i = x_i^2$ if $i \in T$.  Then
  \[
  \E[\sum_{i > \frac{3q}{2p}} Y_i] = p\norm{2}{\tail{x}{\frac{3q}{2p}}}^2 \leq p\norm{2}{\tail{x}{q/p}}^2
  \]
  For $i > \frac{3q}{2p}$ we have
  \[
  Y_i \leq \abs{x_{\frac{3q}{2p}}}^2 \leq \frac{2p}{q}\norm{2}{\tail{x}{q/p}}^2
  \]
  giving by Chernoff that 
  \begin{align*}
    \Pr[\sum Y_i \geq \frac{4}{3}p\norm{2}{\tail{x}{q/p}}^2] \leq e^{-\Omega(q/2)}
  \end{align*}
  But if this event does not happen, then
  \begin{align*}
    \norm{2}{\tail{y}{2q}}^2 \leq \sum_{i \in T, i > \frac{3q}{2p}} x_i^2 = \sum_{i > \frac{3q}{2p}} Y_i \leq \frac{4}{3}p\norm{2}{\tail{x}{q/p}}^2
  \end{align*}
  By~\eqref{eq:count-sketch}, using $O(2q)$-size hash tables gives a
  $y^*$ with
  \[
  \norm{\infty}{y^* - y} \leq \frac{1}{\sqrt{2q}} \norm{2}{\tail{y}{2q}} \leq \sqrt{p/q}\norm{2}{\tail{x}{q/p}}
  \]
  with failure probability $n^{-\Omega(1)}$, as desired.
\end{proof}

Let $r = 2\log 1/f$.  Our algorithm is as follows: for $j \in
\{0,\dotsc, r\}$, we find and estimate the $2^{j/2}k$ largest elements
not found in previous $j$ in a subsampled Count-Sketch with
probability $p = 2^{-j}$ and hash size $q = ck/f$
for some parameter $c = \Theta(r^2)$.  We output $\hat{x}$, the union of all
these estimates.  Our goal is to show
\begin{align*}
  \norm{1}{\hat{x}-x} - \norm{1}{\tail{x}{k}} \leq O(f^2)\norm{1}{\tail{x}{k}}.
\end{align*}

For each level $j$, let $S_j$ be the $2^{j/2}k$ largest coordinates in
our estimate not found in $S_1 \cup \dotsb \cup S_{j-1}$.  Let $S =
\cup S_j$.  By Lemma~\ref{lemma:subsampledsketch}, for each $j$ we
have (with failure probability $e^{-\Omega(k/f)} + n^{-\Omega(1)}$) that
\begin{align*}
  \norm{1}{(\hat{x}-x)_{S_j}} &\leq \abs{S_j}\sqrt{\frac{2^{-j} f}{ck}} \norm{2}{\tail{x}{2^jck/f}}\\
  &\leq 2^{-j/2}\sqrt{\frac{fk}{c}} \norm{2}{\tail{x}{2k/f}}
\end{align*}
and so
\begin{align}\label{eq:xSerror}
  \norm{1}{(\hat{x}-x)_{S}} &= \sum_{j=0}^r \norm{1}{(\hat{x}-x)_{S_j}}
  \leq \frac{1}{(1-1/\sqrt{2})\sqrt{c}} \sqrt{fk} \norm{2}{\tail{x}{2k/f}}
\end{align}

By standard arguments, the $\ell_\infty$ bound for $S_0$ gives
\begin{align}\label{eq:ellinfell12}
  \norm{1}{\head{x}{k}} \leq \norm{1}{x_{S_0}} + k\norm{\infty}{\hat{x}_{S_0} -x_{S_0}} \leq \sqrt{fk/c}\norm{2}{\tail{x}{2k/f}}
\end{align}

Combining Equations~\eqref{eq:xSerror} and~\eqref{eq:ellinfell12} gives
\begin{align}
  \norm{1}{\hat{x} - x} - \norm{1}{\tail{x}{k}}
  \ifconf\\\fi
  =& \norm{1}{(\hat{x}-x)_{S}} + \norm{1}{x_{\overline{S}}} - \norm{1}{\tail{x}{k}}\notag\\
  =& \norm{1}{(\hat{x}-x)_{S}} + \norm{1}{\head{x}{k}} - \norm{1}{x_{S}}\notag\\
  =& \norm{1}{(\hat{x}-x)_{S}} + (\norm{1}{\head{x}{k}} - \norm{1}{x_{S_0}}) - \sum_{j=1}^r \norm{1}{x_{S_j}}\notag\\
  \leq& \left(\frac{1}{(1-1/\sqrt{2})\sqrt{c}} + \frac{1}{\sqrt{c}}\right)\sqrt{fk}\norm{2}{\tail{x}{2k/f}}
  - \sum_{j=1}^r \norm{1}{x_{S_j}}\notag\\
  =& O(\frac{1}{\sqrt{c}})\sqrt{fk}\norm{2}{\tail{x}{2k/f}} - \sum_{j=1}^r \norm{1}{x_{S_j}}\label{eq:totalerror}
\end{align}

We would like to convert the first term to depend on the $\ell_1$
norm.  For any $u$ and $s$ we have, by splitting into chunks of size
$s$, that
\begin{align*}
  \norm{2}{\tail{u}{2s}} \leq \sqrt{\frac{1}{s}}\norm{1}{\tail{u}{s}}\\
  \norm{2}{\range{u}{s}{2s}} \leq \sqrt{s}\abs{u_{s}}.
\end{align*}
Along with the triangle inequality, this gives us that
\begin{align}
  \sqrt{kf}\norm{2}{\tail{x}{2k/f}} &\leq \sqrt{kf}\norm{2}{\tail{x}{2k/f^3}}
 + \sqrt{kf}\sum_{j=1}^{r} \norm{2}{\range{x}{2^jk/f}{2^{j+1}k/f}}\notag\\
  &\leq f^2\norm{1}{\tail{x}{k/f^3}} + \sum_{j=1}^{r} k2^{j/2}\abs{x_{2^jk/f}}\notag
\end{align}
so
\begin{align}
  \norm{1}{\hat{x} - x} - \norm{1}{\tail{x}{k}}
\ifconf\\\fi
 \leq&  O(\frac{1}{\sqrt{c}})f^2\norm{1}{\tail{x}{k/f^3}} + \sum_{j=1}^{r} O(\frac{1}{\sqrt{c}})k2^{j/2}\abs{x_{2^jk/f}}
- \sum_{j=1}^{r}\norm{1}{x_{S_j}} \label{eq:l2l1telescope}
\end{align}
Define $a_j = k2^{j/2}\abs{x_{2^jk/f}}$.  The first term grows as
$f^2$ so it is fine, but $a_j$ can grow as $f2^{j/2} > f^2$.  We need to
show that they are canceled by the corresponding $\norm{1}{x_{S_j}}$.
In particular, we will show that $\norm{1}{x_{S_j}} \geq \Omega(a_j) -
O(2^{-j/2}f^2 \norm{1}{\tail{x}{k/f^3}})$ with high probability---at
least wherever $a_j \geq \norm{1}{a}/(2r)$.

Let $U \in [r]$ be the set of $j$ with $a_j \geq \norm{1}{a} / (2r)$,
so that $\norm{1}{a_{U}} \geq \norm{1}{a}/2$.  We have

\begin{align}
  \norm{2}{\tail{x}{2^jk/f}}^2 &= \norm{2}{\tail{x}{2k/f^3}}^2 +
  \sum_{i=j}^r \norm{2}{\range{x}{2^jk/f}{2^{j+1}k/f}}^2\notag\\
  &\leq \norm{2}{\tail{x}{2k/f^3}}^2 + \frac{1}{kf}\sum_{i=j}^r a_j^2\label{eq:taill2squared}
\end{align}

For $j \in U$, we have
\begin{align*}
  \sum_{i=j}^r a_i^2 \leq a_j \norm{1}{a} \leq 2ra_j^2
\end{align*}
so, along with $(y^2+z^2)^{1/2} \leq y+z$, we turn Equation~\eqref{eq:taill2squared} into
\begin{align*}
  \norm{2}{\tail{x}{2^jk/f}} &\leq \norm{2}{\tail{x}{2k/f^3}} +
  \sqrt{\frac{1}{kf}\sum_{i=j}^r a_j^2}\\
  &\leq \sqrt{\frac{f^3}{k}}\norm{1}{\tail{x}{k/f^3}} + \sqrt{\frac{2r}{kf}}a_j
\end{align*}

When choosing $S_j$, let $T \in [n]$ be the set of indices chosen in
the sample.  Applying Lemma~\ref{lemma:subsampledsketch} the estimate
$x^*$ of $x_T$ has
\begin{align*}
  \norm{\infty}{x^* - x_T} &\leq \sqrt{\frac{f}{2^jck}}\norm{2}{\tail{x}{2^jk/f}}\\
  &\leq \sqrt{\frac{1}{2^jc}}\frac{f^2}{k}\norm{1}{\tail{x}{k/f^3}} + \sqrt{\frac{2r}{2^jc}}\frac{a_j}{k}\\
  &= \sqrt{\frac{1}{2^jc}}\frac{f^2}{k}\norm{1}{\tail{x}{k/f^3}} + \sqrt{\frac{2r}{c}}\abs{x_{2^jk/f}}
\end{align*}
for $j \in U$.

Let $Q = [2^jk/f] \setminus (S_0 \cup \dotsb \cup S_{j-1})$.  We have
$\abs{Q} \geq 2^{j-1}k/f$ so $\E[\abs{Q \cap T}] \geq k/2f$ and
$\abs{Q \cap T} \geq k/4f$ with failure probability
$e^{-\Omega(k/f)}$.  Conditioned on $\abs{Q \cap T} \geq k/4f$, since
$x_{T}$ has at least $\abs{Q \cap T} \geq k/(4f) = 2^{r/2}k/4 \geq
2^{j/2}k/4$ possible choices of value at least $\abs{x_{2^jk/f}}$,
$x_{S_j}$ must have at least $k2^{j/2}/4$ elements at least
$\abs{x_{2^jk/f}} - \norm{\infty}{x^*-x_T}$.  Therefore, for $j \in U$,
\begin{align*}
  \norm{1}{x_{S_j}} &\geq - \frac{1}{4\sqrt{c}} f^2 \norm{1}{\tail{x}{k/f^3}} + \frac{k2^{j/2}}{4}(1 - \sqrt{\frac{2r}{c}})\abs{x_{2^jk/f}}
\end{align*}
and therefore
\begin{align}\label{eq:subsamples}
  \sum_{j=1}^r \norm{1}{x_{S_j}} \geq \sum_{j \in U} \norm{1}{x_{S_j}}
 \geq& \sum_{j \in U} - \frac{1}{4\sqrt{c}} f^2 \norm{1}{\tail{x}{k/f^3}} + \frac{k2^{j/2}}{4}(1 - \sqrt{\frac{2r}{c}})\abs{x_{2^jk/f}} \notag\\
  \geq& -\frac{r}{4\sqrt{c}}f^2\norm{1}{\tail{x}{k/f^3}} + \frac{1}{4}(1 - \sqrt{\frac{2r}{c}})\norm{1}{a_U}\notag\\
  \geq& -\frac{r}{4\sqrt{c}}f^2\norm{1}{\tail{x}{k/f^3}} + \frac{1}{8}(1 - \sqrt{\frac{2r}{c}}) \sum_{j=1}^r k2^{j/2}\abs{x_{2^jk/f}}
\end{align}

Using~\eqref{eq:l2l1telescope} and~\eqref{eq:subsamples} we get
\begin{align*}
  \norm{1}{\hat{x} - x} - \norm{1}{\tail{x}{k}}
\leq&  \left(\frac{r}{4\sqrt{c}} + O(\frac{1}{\sqrt{c}})\right)f^2\norm{1}{\tail{x}{k/f^3}}
 + \sum_{j=1}^{r} \left(O(\frac{1}{\sqrt{c}}) + \frac{1}{8}\sqrt{\frac{2r}{c}}-\frac{1}{8}\right)k2^{j/2}\abs{x_{2^jk/f}}\\
  \leq& f^2\norm{1}{\tail{x}{k/f^3}} \leq f^2\norm{1}{\tail{x}{k}}
\end{align*}
for some $c = O(r^2)$.  Hence we use a total of $\frac{rc}{f} k\log n
= \frac{\log^3 1/f}{f}k\log n$ measurements for $1+f^2$-approximate
$\ell_1/\ell_1$ recovery.

For each $j \in \{0,\dotsc, r\}$ we had failure probability
$e^{-\Omega(k/f)} + n^{-\Omega(1)}$ (from Lemma~\ref{lemma:subsampledsketch}
and $\abs{Q \cap T} \geq k/2f$).  By the union bound, our overall
failure probability is at most
\[
(\log \frac{1}{f})(e^{-\Omega(k/f)} + n^{-\Omega(1)}) \leq e^{-\Omega(k/f)} + n^{-\Omega(1)},
\]
proving Theorem~\ref{thm:l1upper}.

\section{Lower bounds for non-sparse output and $p=2$}

In this case, the lower bound follows fairly straightforwardly from the
Shannon-Hartley information capacity of a Gaussian channel.

We will set up a communication game.  Let
$\mathcal{F} \subset \{S \subset [n] \mid \abs{S} = k\}$ be a family
of $k$-sparse supports such that:
\begin{itemize}
\item $\abs{S \Delta S'} \geq k$ for $S \neq S' \in \mathcal{F}$,
\item $\Pr_{S \in \mathcal{F}} [i \in S] = k/n$ for all $i \in [n]$, and
\item $\log \abs{\mathcal{F}} = \Omega(k \log (n/k))$.
\end{itemize}
This is possible; for example, a random linear code on $[n/k]^k$ with
relative distance $1/2$ has these
properties~\cite{Gnotes}.\footnote{This assumes $n/k$ is a prime power
  larger than 2.  If $n/k$ is not prime, we can choose $n' \in [n/2,
  n]$ to be a prime multiple of $k$, and restrict to the first $n'$
  coordinates.  This works unless $n/k < 3$, in which case a bound of
  $\Theta(\min(n, \frac{1}{\eps}k \log (n/k))) = \Theta(k)$ is
  trivial.}

Let $X = \{x \in \{0, \pm 1\}^n \mid \supp(x) \in \mathcal{F}\}$.  Let
$w \sim N(0, \alpha\frac{k}{n}I_n)$ be i.i.d. normal with variance
$\alpha k/n$ in each coordinate.  Consider the following process:

\paragraph{Procedure} First, Alice chooses $S \in \mathcal{F}$
uniformly at random, then $x \in X$ uniformly at random subject to
$\supp(x) = S$, then $w\sim N(0, \alpha\frac{k}{n}I_n)$.  She sets $y
= A(x + w)$ and sends $y$ to Bob. Bob performs sparse recovery on $y$
to recover $x' \approx x$, rounds to $X$ by $\hat{x} =
\argmin_{\hat{x} \in X} \norm{2}{\hat{x} - x'}$, and sets $S' =
\supp(\hat{x})$.  This gives a Markov chain $S \to x \to y \to x' \to
S'$.

If sparse recovery works for any $x + w$ with probability $1-\delta$
as a distribution over $A$, then there is some specific $A$ and random
seed such that sparse recovery works with probability $1-\delta$ over
$x + w$; let us choose this $A$ and the random seed, so that Alice and
Bob run deterministic algorithms on their inputs.

\begin{lemma}\label{lemma:infupper}
  $I(S; S') = O(m \log (1 + \frac{1}{\alpha}))$.
\end{lemma}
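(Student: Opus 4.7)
The plan is to pass through the data-processing inequality and then reduce everything to a standard Gaussian-channel capacity calculation. By the Markov chain $S \to x \to y \to x' \to S'$, it suffices to prove $I(x; y) = O(m \log(1 + 1/\alpha))$, since $I(S; S') \leq I(x; y)$.

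First I would argue that we may assume WLOG that $A$ has orthonormal rows. If not, write $A = U A'$ where $A'$ has orthonormal rows and $U \in \R^{m \times m}$ is invertible; then $Ax$ and $A'x$ generate the same sigma-algebra, so both mutual informations are equal. Under this assumption, $Aw \sim N(0, \tfrac{\alpha k}{n} I_m)$, meaning the $m$ measurements are corrupted by i.i.d.\ Gaussian noise of variance $\alpha k/n$.

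Next I would compute the covariance of the signal portion $Ax$. Because the signs of $x$ on its support are independent Rademacher, $\E[x_i x_j] = 0$ for $i \neq j$, and by the uniformity condition $\Pr_{S \in \mathcal{F}}[i \in S] = k/n$ we have $\E[x_i^2] = k/n$. Hence $\E[xx^T] = \tfrac{k}{n}I_n$, so $\E[(Ax)(Ax)^T] = \tfrac{k}{n}AA^T = \tfrac{k}{n}I_m$. In particular each coordinate $y_i = (Ax)_i + (Aw)_i$ satisfies $\E[y_i^2] \leq (1+\alpha)k/n$.

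Then I would finish with the standard Gaussian-channel bound:
\begin{align*}
I(x; y) &= h(y) - h(y \mid x) = h(y) - h(Aw) \\
&\leq \sum_{i=1}^m h(y_i) - \tfrac{m}{2}\log(2\pi e \cdot \tfrac{\alpha k}{n}) \\
&\leq \tfrac{m}{2}\log\!\bigl(2\pi e (1+\alpha)\tfrac{k}{n}\bigr) - \tfrac{m}{2}\log\!\bigl(2\pi e \cdot \tfrac{\alpha k}{n}\bigr) \\
&= \tfrac{m}{2}\log\!\bigl(1 + \tfrac{1}{\alpha}\bigr),
\end{align*}
using subadditivity of differential entropy and the fact that a Gaussian of the same variance maximizes entropy for each $y_i$. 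The main thing to be careful about is the orthonormal-rows reduction and verifying $\E[xx^T] = (k/n)I_n$; after that, the capacity bound is routine.
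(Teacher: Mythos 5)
Your proof is correct and follows essentially the same route as the paper: both assume $A$ has orthonormal rows, compute that each measurement is a Gaussian channel with signal power $k/n$ and noise power $\alpha k/n$, and bound $I(S;S')$ by the sum of per-coordinate channel capacities using data processing and subadditivity of entropy. The only cosmetic difference is that the paper cites the Shannon--Hartley theorem for the per-coordinate capacity $\tfrac12\log(1+1/\alpha)$, while you unroll that theorem into its one-line maximum-entropy proof; the intermediate computation of $\E[xx^T] = \tfrac{k}{n}I_n$ (using independence of the Rademacher signs and $\Pr[i\in S]=k/n$) matches the paper's computation of $\E_x[\langle v^i,x\rangle^2]=k/n$.
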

\begin{proof}
  Let the columns of $A^T$ be $v^1, \dotsc, v^m$.  We may assume that
  the $v^i$ are orthonormal, because this can be accomplished via a
  unitary transformation on $Ax$.  Then we have that $y_i =
  \inner{v^i, x + w} = \inner{v^i, x} + w'_i$, where $w'_i \sim N(0,
  \alpha k\norm{2}{v^i}^2/n) = N(0, \alpha k/n)$ and
  \[
  \E_x[\inner{v^i, x}^2] = \E_{S}[ \sum_{j \in S} (v^i_j)^2] =
  \frac{k}{n}
  \]
  Hence $y_i = z_i + w'_i$ is a Gaussian channel with power constraint
  $\E[z_i^2] \leq \frac{k}{n}\norm{2}{v^i}^2$ and noise variance
  $\E[(w'_i)^2] = \alpha \frac{k}{n}\norm{2}{v^i}^2$.  Hence by the
  Shannon-Hartley theorem this channel has information capacity
  \[
  \max_{v_i} I(z_i; y_i) = C \leq \frac{1}{2} \log (1 +
  \frac{1}{\alpha}).
  \]
  By the data processing inequality for Markov chains and the chain
  rule for entropy, this means
  \begin{align}
    I(S; S') &\leq I(z; y) = H(y) - H(y \mid z) = H(y) - H(y-z \mid z)
    \notag\\&= H(y) - \sum H(w'_i \mid z, w'_1, \dotsc, w'_{i-1}) 
    \notag\\&= H(y) - \sum H(w'_i) \notag \leq \sum H(y_i) - H(w'_i)
    \notag\\&= \sum H(y_i) - H(y_i \mid z_i) =
    \sum I(y_i; z_i)
    \notag\\&\leq \frac{m}{2} \log (1 + \frac{1}{\alpha}).
  \end{align}
\end{proof}

We will show that successful recovery either recovers most of $x$, in
which case $I(S; S') = \Omega(k \log (n/k))$, or recovers an $\eps$
fraction of $w$.  First we show that recovering $w$ requires $m =
\Omega(\eps n)$.

\begin{lemma}\label{lemma:gaussianhard}
  Suppose $w \in \R^n$ with $w_i \sim N(0, \sigma^2)$ for all $i$ and
  $n = \Omega(\frac{1}{\eps^2}\log (1/\delta))$, and
  $A \in \R^{m \times n}$ for $m < \delta\eps n$.  Then any algorithm
  that finds $w'$ from $Aw$ must have $\norm{2}{w'-w}^2 > (1
  - \eps) \norm{2}{w}^2$ with probability at least $1-O(\delta)$.
\end{lemma}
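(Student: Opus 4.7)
The plan is to exploit that $Aw$ reveals information about $w$ only in an $m$-dimensional subspace, so the orthogonal $(n-m)$-dimensional component of $w$ is essentially unrecoverable and already carries almost all of $\norm{2}{w}^2$.

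First I would let $P_A$ denote orthogonal projection onto the row space of $A$; since $A$ is a bijection from its row space to $\R^m$, $Aw$ and $P_A w$ carry the same information, so any estimator $w'$ is a deterministic function of $P_A w$. The isotropy of $w \sim N(0, \sigma^2 I_n)$ implies that $g := (I-P_A)w$ is independent of $P_A w$ and is itself isotropic Gaussian with variance $\sigma^2$ in the $(n-m)$-dimensional kernel of $A$. Setting $h := (I-P_A)w'$ (a deterministic function of $P_A w$, hence independent of $g$), orthogonal decomposition gives
\[
\norm{2}{w-w'}^2 \;\ge\; \norm{2}{(I-P_A)(w-w')}^2 \;=\; \norm{2}{g-h}^2.
\]

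Next, conditional on $P_A w$ the vector $h$ is fixed, and $\norm{2}{g-h}^2$ is $\sigma^2$ times a non-central chi-squared variable with $n-m$ degrees of freedom and non-centrality $\norm{2}{h/\sigma}^2$, so its mean equals $(n-m)\sigma^2 + \norm{2}{h}^2 \ge (n-m)\sigma^2$. Laurent--Massart-style lower-tail concentration then yields
\[
\Pr\bigl[\, \norm{2}{g-h}^2 \le (1-\eps/4)(n-m)\sigma^2 \,\big|\, P_A w \,\bigr] \le \delta,
\]
since $n-m = \Omega((1/\eps^2)\log(1/\delta))$ (because $m < \delta\eps n < n/2$ under our hypothesis on $n$). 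Integrating out $P_A w$ preserves this bound. Separately, chi-squared concentration gives $\norm{2}{w}^2 \le (1+\eps/4)n\sigma^2$ with failure probability $\delta$.

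Combining the two good events via a union bound and using $(n-m)/n \ge 1 - \delta\eps$,
\[
\frac{\norm{2}{w-w'}^2}{\norm{2}{w}^2} \;\ge\; \frac{(1-\eps/4)(1-\delta\eps)}{1+\eps/4} \;>\; 1-\eps
\]
for $\delta, \eps$ below an absolute constant. The one subtle step is the non-central chi-squared lower-tail bound: the adversary is free to choose $h$ as an arbitrary function of $P_A w$, but because the mean only grows and the deviation only sharpens with non-centrality, the bound holds uniformly in $h$; this is the reason the decomposition into $P_A w$ and $(I-P_A)w$ is set up before invoking concentration, rather than afterwards.
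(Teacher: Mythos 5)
Your proof is correct and takes essentially the same approach as the paper's: both observe that only the projection of $w$ onto the row space of $A$ is revealed, that the orthogonal component is an independent isotropic Gaussian carrying almost all of $\norm{2}{w}^2$, and that no estimator independent of that component can drive its squared error below roughly $(n-m)\sigma^2$. The only cosmetic difference is that the paper justifies the worst-case-over-$h$ step via Gaussian ball anticoncentration (centering a ball at the mean maximizes its probability), while you invoke stochastic domination of the non-central chi-squared over the central one --- these are the same fact.
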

\begin{proof}
  Note that $Aw$ merely gives the projection of $w$ onto $m$
  dimensions, giving no information about the other $n-m$ dimensions.
  Since $w$ and the $\ell_2$ norm are rotation invariant, we may
  assume WLOG that $A$ gives the projection of $w$ onto the first $m$
  dimensions, namely $T = [m]$.  By the norm concentration of
  Gaussians, with probability $1-\delta$ we have $\norm{2}{w}^2 <
  (1+\eps)n\sigma^2$, and by Markov with probability $1-\delta$ we
  have $\norm{2}{w_T}^2 < \eps n \sigma^2$.

  For any fixed value $d$, since $w$ is uniform Gaussian and
  $w'_{\overline{T}}$ is independent of $w_{\overline{T}}$,
  \begin{align*}
    \Pr[\norm{2}{w'-w}^2 < d] &\leq \Pr[\norm{2}{(w'-w)_{\overline{T}}}^2 < d]
\leq \Pr[\norm{2}{w_{\overline{T}}}^2 < d].
  \end{align*}
  Therefore
  \begin{align*}
    \Pr[\norm{2}{w'-w}^2 < (1-3\eps)\norm{2}{w}^2]
    \leq&
    \Pr[\norm{2}{w'-w}^2 < (1-2\eps)n\sigma^2] \\\leq&
    \Pr[\norm{2}{w_{\overline{T}}}^2 < (1-2\eps)n\sigma^2] \\\leq&
    \Pr[\norm{2}{w_{\overline{T}}}^2 < (1-\eps)(n-m)\sigma^2] \leq
    \delta
  \end{align*}
  as desired.  Rescaling $\eps$ gives the result.
\end{proof}

\begin{lemma}\label{lemma:inflower}
  Suppose $n = \Omega(1/\eps^2 + (k/\eps)\log (k/\eps))$ and $m = O(\eps n)$.  Then $I(S; S')
  = \Omega(k \log (n/k))$ for some $\alpha = \Omega(1/\eps)$.
\end{lemma}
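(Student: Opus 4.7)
The plan is to establish $\Pr[S' = S] \geq 1/2$, after which Fano's inequality immediately yields $H(S \mid S') \leq 1 + \tfrac{1}{2} \log |\mathcal{F}|$, and hence $I(S; S') \geq \tfrac{1}{2} H(S) - 1 = \Omega(k \log(n/k))$. I fix $\alpha = c/\eps$ for a small absolute constant $c$, chosen so that the trivial output $x' = 0$ gives $\norm{2}{x+w}^2 \approx k + \alpha k$, which exceeds the sparse recovery budget $(1+\eps)^2 \norm{2}{w_{\overline{S}}}^2 \approx (1+O(\eps))\alpha k$ by a constant factor; this forces any valid algorithm to exploit the support structure of $x$ rather than ignore it.

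First I condition on two events that jointly hold with probability at least $2/3 - o(1)$: sparse recovery succeeds, giving $\norm{2}{x' - (x+w)}^2 \leq (1+\eps)^2 \norm{2}{w_{\overline{S}}}^2$; and standard Gaussian concentration of $w$, which (using $n = \Omega(1/\eps^2 + (k/\eps)\log(k/\eps))$) yields $\norm{2}{w}^2 = (1 \pm \eps) \alpha k$, $\norm{2}{w_S}^2 = O(\alpha k^2/n)$, small $\norm{\infty}{w}$, and top-$k$ of $x+w$ equal to $S$.

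Next, I apply Lemma \ref{lemma:gaussianhard} to the estimator $g_x(Aw) := x'(A(x+w)) - x$, which is a deterministic function of $Aw$ after conditioning on $x$. For a sufficiently small implicit constant in $m = O(\eps n)$, Lemma \ref{lemma:gaussianhard} gives $\norm{2}{(x'-x) - w}^2 > (1-\eps) \norm{2}{w}^2$ with probability $1 - O(\delta)$. Combined with the sparse recovery upper bound on the same quantity, this sandwiches $\norm{2}{(x'-x) - w}^2$ into a narrow window near $\norm{2}{w}^2$. The window means the algorithm cannot spend its limited capacity on reconstructing the Gaussian noise $w$; the only way it can match the sparse recovery upper bound is for $x'$ to approximately recover the sparse signal $x$ itself.

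Finally, the rounding step gives $\hat{x} = x$, and hence $S' = S$, provided $\norm{2}{x' - x}^2 < k/4$. By the code property $|S \Delta \tilde{S}| \geq k$, any $\tilde{x} \in X$ with $\supp(\tilde{x}) \neq S$ satisfies $\norm{2}{\tilde{x} - x}^2 \geq k$; the triangle inequality then gives $\norm{2}{x' - \tilde{x}} \geq \sqrt{k} - \sqrt{k}/2 > \norm{2}{x'-x}$, forcing the closest codeword to have support $S$. The main obstacle is this last conversion: a naive triangle inequality bounds $\norm{2}{x'-x}^2$ only by $O(\alpha k) = O(k/\eps)$, which is too weak by a factor of $1/\eps$. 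The actual argument must exploit the specific alignment enforced by the thin window on $\norm{2}{(x'-x) - w}^2$, using the small constant $c$ in $\alpha = c/\eps$ and the fact that $\norm{2}{w_S}^2 \ll \eps \norm{2}{w}^2$ for our large $n$ to control the cross term in the expansion $\norm{2}{x'-x}^2 = \norm{2}{(x'-x)-w}^2 + 2\langle x'-x, w\rangle - \norm{2}{w}^2$.
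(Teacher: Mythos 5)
Your high-level plan (sandwich the recovery error between the lower bound from Lemma~\ref{lemma:gaussianhard} and the upper bound from the recovery guarantee, then round and apply Fano) is in the spirit of the paper's proof, but the central step fails: the sandwich on the scalar $\norm{2}{(x'-x)-w}^2$ does \emph{not} constrain $\norm{2}{x'-x}$. Your own expansion makes this visible. You have $\norm{2}{x'-x}^2 = \norm{2}{(x'-x)-w}^2 + 2\langle x'-x,w\rangle - \norm{2}{w}^2$, but also $2\langle x'-x,w\rangle = \norm{2}{x'-x}^2 + \norm{2}{w}^2 - \norm{2}{(x'-x)-w}^2$; substituting the second into the first is an identity, so the lower bound from Lemma~\ref{lemma:gaussianhard} is simply cancelled and you recover nothing beyond the recovery guarantee you started with. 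Concretely, for any magnitude $a \leq 2\norm{2}{w}$, taking $x'-x$ of norm $a$ at an angle $\theta$ to $w$ with $\cos\theta \approx a/(2\norm{2}{w})$ satisfies both sandwich inequalities, so $\norm{2}{x'-x}$ can be as large as $\Theta(\norm{2}{w}) = \Theta(\sqrt{k/\eps}) \gg \sqrt{k}$. The small constant $c$ and the bound $\norm{2}{w_S}^2 \ll \eps\norm{2}{w}^2$ do not repair this, since the dangerous part of the cross term is $\langle (x'-x)_{\overline{S}}, w_{\overline{S}}\rangle$, and nothing you have controls it.

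The missing idea is to split on $T = S \cup S'$ and apply Lemma~\ref{lemma:gaussianhard} only on $\overline{T}$. Because $x_{\overline{T}} = 0$ and $\hat{x}_{\overline{T}} = 0$, the recovery budget decomposes as $\norm{2}{x'_T - (x+w)_T}^2 + \norm{2}{x'_{\overline T} - w_{\overline T}}^2 \leq (1+\eps)\norm{2}{w}^2$, and since $\abs{T} = 2k$ we have $\norm{2}{w_T}^2 \ll \eps\norm{2}{w}^2$. Thus if the signal error $\norm{2}{x'_T - x}^2$ exceeds $9\eps\norm{2}{w}^2$, the $T$ term already uses at least $4\eps\norm{2}{w}^2$ of the budget, forcing $\norm{2}{x'_{\overline T} - w_{\overline T}}^2 \leq (1-2\eps)\norm{2}{w}^2$; this makes $x'_{\overline T}$ a too-accurate estimate of the independent Gaussian $w_{\overline T}$, which Lemma~\ref{lemma:gaussianhard} (applied on the $n - \abs{T}$ coordinates of $\overline T$) rules out. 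So with high probability $\norm{2}{x'_T - x}^2 \leq 9\eps\norm{2}{w}^2 = O(c)k$, and \emph{this} quantity, restricted to $T$, is what the rounding argument needs, since any competing codeword also has support in $T$. Your global application of the Gaussian lemma bypasses the decomposition that makes the budget argument bite, and as a result the final step cannot be completed.
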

\begin{proof}
  Consider the $x'$ recovered from $A(x + w)$, and let $T = S \cup
  S'$.  Suppose that $\norm{\infty}{w}^2 \leq O(\frac{\alpha k}{n}\log
  n)$ and $\norm{2}{w}^2 / (\alpha k) \in [1
  \pm \eps]$, as happens with probability at least (say) $3/4$.  Then
  we claim that if recovery is successful, one of the following must
  be true:
  \begin{align}
    \norm{2}{x'_T-x}^2 &\leq 9\eps \norm{2}{w}^2\label{eq:closetox}\\
    \norm{2}{x'_{\overline{T}}-w}^2 &\leq (1-2\eps) \norm{2}{w}^2\label{eq:closetow}
  \end{align}
  To show this, suppose $\norm{2}{x'_T-x}^2 > 9\eps \norm{2}{w}^2 \geq
  9\norm{2}{w_T}^2$ (the last by $\abs{T} = 2k = O(\eps n / \log n)$).  Then
  \begin{align*}
    \norm{2}{(x' - (x + w))_T}^2 &> (\norm{2}{x'-x}-
    \norm{2}{w_T})^2\\
    &\geq (2\norm{2}{x'-x}/3)^2 \geq 4\eps\norm{2}{w}^2.
  \end{align*}
  Because recovery is successful,
  \[
  \norm{2}{x' - (x + w)}^2 \leq  (1+\eps)\norm{2}{w}^2.
  \]
  Therefore
  \begin{align*}
    \norm{2}{x'_{\overline{T}}-w_{\overline{T}}}^2 + \norm{2}{x'_T -
      (x+w)_T}^2 &= \norm{2}{x' - (x + w)}^2\\
    \norm{2}{x'_{\overline{T}}-w_{\overline{T}}}^2 + 4\eps\norm{2}{w}^2 &< (1+\eps)\norm{2}{w}^2\\
    \norm{2}{x'_{\overline{T}}-w}^2 - \norm{2}{w_T}^2 &< (1-3\eps)\norm{2}{w}^2 
    \leq (1-2\eps)\norm{2}{w}^2
  \end{align*}
  as desired.  Thus with $3/4$ probability, at least one
  of~\eqref{eq:closetox} and \eqref{eq:closetow} is true.

  Suppose Equation~\eqref{eq:closetow} holds with at least $1/4$
  probability.  There must be some $x$ and $S$ such that the same
  equation holds with $1/4$ probability.  For this $S$, given $x'$ we
  can find $T$ and thus $x'_{\overline{T}}$.  Hence for a uniform
  Gaussian $w_{\overline{T}}$, given $Aw_{\overline{T}}$ we can
  compute $A(x + w_{\overline{T}})$ and recover $x'_{\overline{T}}$
  with $\norm{2}{x'_{\overline{T}} - w_{\overline{T}}}^2 \leq
  (1-\eps)\norm{2}{w_{\overline{T}}}^2$.  By
  Lemma~\ref{lemma:gaussianhard} this is impossible, since $n-\abs{T}
  = \Omega(\frac{1}{\eps^2})$ and $m = \Omega(\eps n)$ by assumption.

  Therefore Equation~\eqref{eq:closetox} holds with at least $1/2$
  probability, namely $\norm{2}{x'_T-x}^2 \leq 9\eps\norm{2}{w}^2 \leq
  9\eps(1-\eps) \alpha k < k/2$ for appropriate $\alpha$.  But if the
  nearest $\hat{x} \in X$ to $x$ is not equal to $x$,
  \begin{align*}
    \norm{2}{x' - \hat{x}}^2 
    \ifconf\\\fi
    =& \norm{2}{x'_{\overline{T}}}^2 +
    \norm{2}{x'_{\overline{T}} - \hat{x}}^2 \geq
    \norm{2}{x'_{\overline{T}}}^2 + (\norm{2}{x - \hat{x}} -
    \norm{2}{x'_{\overline{T}} - x})^2 \\>&
    \norm{2}{x'_{\overline{T}}}^2 + (k - k/2)^2 >
    \norm{2}{x'_{\overline{T}}}^2 + \norm{2}{x'_{\overline{T}} - x}^2
    = \norm{2}{x'-x}^2,
  \end{align*}
  a contradiction.  Hence $S' = S$.  But Fano's inequality states
  $H(S | S') \leq 1 + \Pr[S' \neq S]\log \abs{\mathcal{F}}$
and hence
  \[
  I(S; S') = H(S) - H(S | S') \geq -1 + \frac{1}{4} \log \abs{\mathcal{F}} = \Omega(k \log (n/k))
  \]
  as desired.
\end{proof}
\begin{theorem}
  Any $(1+\eps)$-approximate $\ell_2/\ell_2$ recovery scheme with
  $\eps > \sqrt{\frac{k\log n}{n}}$ and failure probability $\delta <
  1/2$ requires $m = \Omega(\frac{1}{\eps}k\log (n/k))$.
\end{theorem}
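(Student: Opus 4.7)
The theorem is the final step that combines the two lemmas already proved in this section, so the plan is essentially to balance the parameter $\alpha$ in the two bounds $I(S;S') = O(m\log(1+1/\alpha))$ from Lemma~\ref{lemma:infupper} and $I(S;S') = \Omega(k\log(n/k))$ from Lemma~\ref{lemma:inflower}. I would set $\alpha = \Theta(1/\eps)$, which is precisely the regime in which Lemma~\ref{lemma:inflower} applies. Then $\log(1+1/\alpha) = \log(1+\Theta(\eps)) = \Theta(\eps)$, so Lemma~\ref{lemma:infupper} gives $I(S;S') = O(m\eps)$. Comparing with the lower bound yields $m = \Omega(k\log(n/k)/\eps)$, which is the desired conclusion.

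Before invoking Lemma~\ref{lemma:inflower}, I would verify its hypotheses given the assumption $\eps > \sqrt{k\log n / n}$. This assumption rearranges to $\eps n > k\log n/\eps$, which in particular forces $n = \Omega(1/\eps^2 + (k/\eps)\log(k/\eps))$ (since $\log n \geq \log(k/\eps)$ once $n \geq k/\eps$, which follows from the hypothesis). The other hypothesis of Lemma~\ref{lemma:inflower} is $m = O(\eps n)$, so the argument splits into two cases. If $m \geq c\eps n$ for the implicit constant $c$, then using $\eps n \geq k\log(n/k)/\eps$ from the hypothesis already gives the conclusion directly. Otherwise $m < c\eps n$ and Lemma~\ref{lemma:inflower} applies as described above.

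The only mildly delicate step is matching the probability of failure. Lemma~\ref{lemma:inflower} was stated assuming that sparse recovery succeeds with some sufficiently small constant failure probability on the fixed $A$ chosen by the averaging argument preceding it, so I would note that a generic $\delta < 1/2$ can be amplified (or simply used as-is after adjusting the constants in the $3/4$/$1/4$ case split in the proof of Lemma~\ref{lemma:inflower}) to drive the probability of success on the distribution $x+w$ above the constant threshold required. After that, Fano's inequality and the data-processing inequality used inside Lemma~\ref{lemma:inflower} carry through unchanged.

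I do not expect a substantive obstacle here; the hard work has already been done in Lemmas~\ref{lemma:infupper}, \ref{lemma:gaussianhard}, and \ref{lemma:inflower}. The main things to be careful about are the bookkeeping of the parameter $\alpha$ and the verification that the hypothesis $\eps > \sqrt{k\log n/n}$ is strong enough to imply both the $n$-lower-bound needed by Lemma~\ref{lemma:inflower} and the trivial bound $\eps n \geq k\log(n/k)/\eps$ that handles the $m = \Omega(\eps n)$ case.
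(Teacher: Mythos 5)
Your proposal is correct and follows exactly the paper's route: combine Lemma~\ref{lemma:infupper} (channel capacity upper bound on $I(S;S')$) and Lemma~\ref{lemma:inflower} (Fano lower bound on $I(S;S')$) with $\alpha = \Theta(1/\eps)$, and use the hypothesis $\eps > \sqrt{k\log n / n}$ to rule out the alternatives $m = \Omega(\eps n)$ (which already implies the claim) and $n = O(\frac{1}{\eps}k\log(k/\eps))$ (which is impossible under the hypothesis). Your elaboration of the hypothesis verification and the case split on $m$ versus $\eps n$ is more explicit than the paper's one-line conclusion, but it is the same argument.
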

\begin{proof}
  Combine Lemmas~\ref{lemma:inflower} and~\ref{lemma:infupper} with
  $\alpha = 1/\eps$ to get $m = \Omega(\frac{k \log (n/k)}{\log (1 +
    \eps)}) = \Omega(\frac{1}{\eps}k\log (n/k))$, $m = \Omega(\eps
  n)$, or $n = O(\frac{1}{\eps}k \log (k/\eps))$.  For $\eps$ as in the
  theorem statement, the first bound is controlling.
\end{proof}

\section{Bit complexity to measurement complexity}

The remaining lower bounds proceed by reductions from communication
complexity.  The following lemma (implicit in~\cite{DIPW}) shows that
lower bounding the number of bits for approximate recovery is
sufficient to lower bound the number of measurements.  Let $B_p^n(R)
\subset \R^n$ denote the $\ell_p$ ball of radius $R$.

\begin{definition}
  Let $X \subset \R^n$ be a distribution with $x_i \in \{-n^d, \dotsc,
  n^d\}$ for all $i \in [n]$ and $x \in X$.  We define a
  $1+\eps$-approximate $\ell_p/\ell_p$ sparse recovery \emph{bit
    scheme} on $X$ with $b$ bits, precision $n^{-c}$, and failure
  probability $\delta$ to be a deterministic pair of functions $f
  \colon X \to \{0,1\}^b$ and $g\colon \{0, 1\}^b \to \R^n$ where $f$
  is linear so that $f(a+b)$ can be computed from $f(a)$ and $f(b)$.
  We require that, for $u \in B_p^n(n^{-c})$ uniformly and $x$ drawn
  from $X$, $g(f(x))$ is a valid result of $1+\eps$-approximate
  recovery on $x+u$ with probability $1-\delta$.
\end{definition}

\begin{lemma}\label{lemma:bitstomeasurements}
  A lower bound of $\Omega(b)$ bits for such a sparse recovery bit
  scheme with $p \leq 2$ implies a lower bound of
  $\Omega(b/((1+c+d)\log n))$ bits for regular $(1+\eps)$-approximate
  sparse recovery with failure probability $\delta - 1/n$.
\end{lemma}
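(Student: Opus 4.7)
My plan is to prove the contrapositive: any randomized $m$-measurement $(1+\eps)$-approximate recovery scheme with per-input failure probability $\delta - 1/n$ can be compiled into a deterministic bit scheme using $b' = O(m(1+c+d)\log n)$ bits with failure probability $\delta$, so that an $\Omega(b)$-bit lower bound for the bit scheme forces $m = \Omega(b/((1+c+d)\log n))$.

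First I would derandomize and normalize. Averaging the per-input scheme guarantee over the distribution on $(x,u)$ from the bit-scheme definition, there is a fixed $A$ and deterministic $R$ with $R(A(x+u))$ giving valid $(1+\eps)$-approximate recovery on at least a $1-(\delta - 1/n)$ fraction of $(x,u)$. Absorbing a linear change of basis into the decoder, I may assume the rows of $A$ are orthonormal, so $AA^\top = I_m$ and $\norm{\infty}{Ax} \leq \norm{2}{x} \leq n^{d+1/2}$ for $x \in X$. The encoder $f$ then computes $Ax$ and rounds each coordinate to the nearest multiple of $\tau = n^{-(c+C)}$ for a large absolute constant $C$; each rounded coordinate takes $O((1+c+d)\log n)$ bits, giving the stated $b'$. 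Coordinatewise rounding of a linear map satisfies the linearity clause of the definition up to a one-grid-cell ambiguity per coordinate. The decoder $g$ unquantizes to $\tilde y$ with $\norm{\infty}{\tilde y - Ax} \leq \tau$ and outputs $\hat x := R(\tilde y)$.

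For correctness, because $AA^\top = I_m$ I can write $\tilde y = A(x+v)$ with $v = A^\top(\tilde y - Ax)$, and since $p \leq 2$ we get $\norm{p}{v} \leq n^{\max(0, 1/p-1/2)}\sqrt{m}\,\tau \leq n^{-(c+2)}$ for $C$ sufficiently large. The key step is a total-variation coupling: a direct computation (using $B_p^n(r - \norm{p}{v}) \subseteq B_p^n(r) \cap (B_p^n(r) + v)$) gives TV distance $\leq n\,\norm{p}{v}/n^{-c} = O(1/n)$ between the uniform measures on $B_p^n(n^{-c})$ and on $B_p^n(n^{-c}) + v$. So $(x,u)$ and $(x, u-v)$ can be coupled to agree off an event of probability $O(1/n)$, on which $\hat x = R(A(x+v))$ coincides with the output of the measurement scheme on input $x+u$. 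A short triangle-inequality calculation, using that $\norm{p}{\tail{(x+u)}{k}}$ is $1$-Lipschitz in $u$ and is either $\geq 1 - O(n^{-c})$ (when some tail entry of $x$ is a nonzero integer) or $\Theta(\norm{p}{u})$ (when $x$ is $k$-sparse, using $k \ll n$), then transfers valid recovery from $x+v$ to $x+u$ with a $(1+O(n^{-\Omega(1)}))$ multiplicative loss that is easily absorbed into $1+\eps$. Summing the derandomization and coupling losses bounds the total failure by $\delta$.

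The main obstacle is exactly this last transfer. The rounding shift $v = v_A(x)$ is deterministic but depends on both $A$ and $x$, so it cannot be plugged into the scheme's per-input guarantee directly; one instead couples it against the independent random $u$ at the cost of $O(1/n)$ in TV, and separately absorbs the $O(n^{-\Omega(1)})$ multiplicative slack in the approximation factor into $1+\eps$. The $-1/n$ buffer in the hypothesis is precisely what pays for this coupling.
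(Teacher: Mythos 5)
Your plan to round the measurement vector $Ax$ rather than the matrix $A$ is a genuine departure from the paper, and it creates a real gap. The paper sets $f(x) = A'x$ where $A'$ is $A$ with each entry rounded to $t\log n$ bits; this $f$ is exactly linear, so $f(z_{\text{Alice}} + z_{\text{Bob}})$ can be computed exactly from $f(z_{\text{Alice}})$ and $f(z_{\text{Bob}})$ --- which is precisely what the downstream communication reductions need (Alice transmits $f(z_{\text{Alice}})$, Bob combines it with $f(z_{\text{Bob}})$ to obtain $f(z)$). Your $f(x) = \mathrm{round}_\tau(Ax)$ is not linear: $\mathrm{round}_\tau(Aa + Ab)$ cannot be recovered from $\mathrm{round}_\tau(Aa)$ and $\mathrm{round}_\tau(Ab)$, because the roundings discard exactly the sub-$\tau$ information needed to round the sum correctly. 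You wave this away as a ``one-grid-cell ambiguity per coordinate,'' but the definition demands exact computability and the applications of the bit scheme in this paper rely on it. This is the decisive conceptual difference: rounding the matrix yields a truly linear encoder, rounding the sketch does not.

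There is also a problem with how you transfer validity from $x+v$ to $x+u$. In the paper, the decoder is $g(y) = \mathcal{A}(y + Au)$ for random $u$, so $g(A'x) = \mathcal{A}(A'x + Au) = \mathcal{A}(A(x - s + u))$; the randomness $u$ is what cancels the deterministic rounding shift $s = s(x)$, and the ball-overlap argument shows $u-s$ is within $1/n$ TV of uniform, placing the decoder's input back into the regime where the scheme's guarantee applies. In your proposal the decoder has no access to $u$ and simply outputs $R(A(x+v))$, a fixed function of $x$. Your assertion that on the good coupling event ``$\hat x = R(A(x+v))$ coincides with the output of the measurement scheme on input $x+u$'' is false: $R$ is deterministic, and $A(x+v) \neq A(x+u)$ in general, so the outputs need not coincide at all. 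The triangle-inequality fallback you sketch also does not go through in the $k$-sparse case, where the tail norm of $x+u$ is itself only $\Theta(\norm{p}{u})$, so $\norm{p}{u-v}$ is not negligible relative to it. The fix is the paper's: have the decoder add $Au$ to the received sketch, which both absorbs the rounding error via the TV argument and, once you are adding a perturbation on the input side anyway, naturally motivates rounding $A$ rather than $Ax$.
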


\begin{proof}
  Suppose we have a standard $(1+\eps)$-approximate sparse recovery
  algorithm $\mathcal{A}$ with failure probability $\delta$ using $m$
  measurements $Ax$.  We will use this to construct a (randomized)
  sparse recovery bit scheme using $O(m(1+c+d)\log n)$ bits and
  failure probability $\delta + 1/n$.  Then by averaging some
  deterministic sparse recovery bit scheme performs better than
  average over the input distribution.

  We may assume that $A \in \R^{m \times n}$ has orthonormal rows
  (otherwise, if $A = U\Sigma V^T$ is its singular value
  decomposition, $\Sigma^{+}U^TA$ has this property and can be
  inverted before applying the algorithm).  When applied to the
  distribution $X + u$ for $u$ uniform over $B_p^n(n^{-c})$, we may
  assume that $\mathcal{A}$ and $A$ are deterministic and fail with
  probability $\delta$ over their input.

  Let $A'$ be $A$ rounded to $t\log n$ bits per entry for some
  parameter $t$.  Let $x$ be chosen from $X$.  By Lemma~5.1
  of~\cite{DIPW}, for any $x$ we have $A'x = A(x-s)$ for some $s$ with
  $\norm{1}{s} \leq n^22^{-t\log n}\norm{1}{x}$, so $\norm{p}{s} \leq
  n^{2.5-t} \norm{p}{x} \leq n^{3.5+d-t}$. Let $u \in
  B_p^n(n^{5.5+d-t})$ uniformly at random.  With probability at least
  $1-1/n$, $u \in B_p^n((1-1/n^2)n^{5.5+d-t})$ because the balls are
  similar so the ratio of volumes is $(1-1/n^2)^n > 1-1/n$.  In this
  case $u + s \in B_p^n(n^{5.5+d-t})$; hence the random variable $u$
  and $u + s$ overlap in at least a $1 - 1/n$ fraction of their
  volumes, so $x + s + u$ and $x + u$ have statistical distance at
  most $1/n$.  Therefore $\mathcal{A}(A(x+u)) = \mathcal{A}(A'x+Au)$
  with probability at least $1-1/n$.

  Now, $A'x$ uses only $(t+d+1)\log n$ bits per entry, so we can set
  $f(x) = A'x$ for $b = m(t+d+1)\log n$.  Then we set $g(y) =
  \mathcal{A}(y + Au)$ for uniformly random $u \in
  B_p^n(n^{5.5+d-t})$.  Setting $t = 5.5 + d + c$, this gives a sparse
  recovery bit scheme using $b = m(6.5 + 2d + c)\log n$.
\end{proof}

\section{Non-sparse output Lower Bound for $p=1$}

First, we show that
recovering the locations of an $\eps$ fraction of $d$ ones in a vector
of size $n > d/\eps$ requires $\widetilde\Omega(\eps d)$ bits.  Then, we show
high bit complexity of a distributional product version of the
Gap-$\ell_\infty$ problem.  Finally, we create a distribution for
which successful sparse recovery must solve one of the previous
problems, giving a lower bound in bit complexity.
Lemma~\ref{lemma:bitstomeasurements} converts the bit complexity to
measurement complexity.

\subsection{$\ell_1$ Lower bound for recovering noise bits} 

\begin{definition}
  We say a set $C \subset [q]^d$ is a $(d, q, \eps)$ code if any two
  distinct $c, c' \in C$ agree in at most $\eps d$ positions.  We say
  a set $X \subset \{0, 1\}^{dq}$ represents $C$ if $X$ is $C$
  concatenated with the trivial code $[q] \to \{0, 1\}^q$ given by $i
  \to e_i$.
\end{definition}

\begin{claim}\label{claim:codesize}
  For $\eps \geq 2/q$, there exist $(d, q, \eps)$ codes $C$ of size
  $q^{\Omega(\eps d)}$ by the Gilbert-Varshamov bound (details
  in~\cite{DIPW}).
\end{claim}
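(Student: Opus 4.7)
The plan is to prove this via the standard probabilistic method, which is exactly the content of the Gilbert--Varshamov bound in this regime. Observe that ``agreement in at most $\eps d$ positions'' is the same as Hamming distance at least $(1-\eps)d$, so we are asking for a $q$-ary code of length $d$ with relative distance at least $1-\eps$. Because the distance is close to the length, the volume of a Hamming ball of radius $(1-\eps)d$ is only a tiny fraction of $[q]^d$, and GV then yields a code of size $q^{\Omega(\eps d)}$.

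Concretely, I would pick codewords $c^{(1)},\dotsc,c^{(N)}\in[q]^d$ independently and uniformly at random. For any fixed $i\neq j$, the number of coordinates where $c^{(i)}$ and $c^{(j)}$ agree is distributed as $\mathrm{Binom}(d,1/q)$ with mean $d/q$. The hypothesis $\eps\geq 2/q$ forces the threshold $\eps d$ to be at least twice this mean, so a direct binomial tail bound gives
\[
\Pr[\text{agreement}\geq \eps d]\;\leq\;\binom{d}{\eps d}q^{-\eps d}\;\leq\;\left(\frac{e}{\eps q}\right)^{\eps d}.
\]
Union bounding over the $\binom{N}{2}$ pairs, if $N^2(e/(\eps q))^{\eps d}<1$ then with positive probability every pair satisfies the code requirement, so such a code exists. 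Solving gives $N$ up to roughly $(\eps q/e)^{\eps d/2}$, at which point one invokes the hypothesis that $\eps q$ is bounded below by a suitable constant to conclude $N\geq q^{\Omega(\eps d)}$. To produce $X$, simply concatenate each codeword with the trivial $[q]\to\{0,1\}^q$ indicator map.

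The only real work is the constant check in the last step: one must verify that $\log_q(\eps q/e)$ (equivalently $1-H_q(1-\eps)$, up to lower order terms) is bounded below by a positive constant under $\eps\geq 2/q$. This is a one-line entropy/Chernoff computation, and is precisely what the parenthetical reference to \cite{DIPW} absorbs. I do not expect any real obstacle beyond tracking those constants carefully.
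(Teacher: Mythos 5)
Your random-coding / union-bound argument is exactly the standard Gilbert--Varshamov derivation, and it is the right framework here; the paper gives no proof of its own, just the GV citation, so your approach is the natural one to fill that in. The computation down to $N \leq (\eps q/e)^{\eps d/2}$ is fine: two uniform codewords agree in a given coordinate with probability $1/q$, the union-bound-over-subsets estimate $\binom{d}{\eps d}q^{-\eps d}\leq (e/(\eps q))^{\eps d}$ is correct, and the pairwise union bound is clean.

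The last step, however, is not the ``one-line constant check'' you claim. You need $N=(\eps q/e)^{\eps d/2}=q^{\Omega(\eps d)}$, i.e.\ $\log_q(\eps q/e)=\Omega(1)$ with a \emph{universal} constant. Under the stated hypothesis $\eps\geq 2/q$ this fails: at $\eps q=2$ you have $\eps q/e=2/e<1$, so $\log_q(\eps q/e)$ is actually negative, and even for $\eps q$ equal to a fixed large constant $C$ one gets $\log_q(C/e)\to 0$ as $q\to\infty$. In other words, $(\eps q/e)^{\eps d/2}=q^{c\eps d}$ with $c=\tfrac12\log_q(\eps q/e)$, and this $c$ is \emph{not} bounded below by a positive absolute constant merely from $\eps q\geq 2$. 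The same phenomenon shows up in the greedy GV form $N\geq q^{\eps d}/\binom{d}{\eps d}$: with $\eps=2/q$ the achievable rate is $\Theta(1/(q\log q))$, a $\log q$ factor short of the $\Theta(\eps)=\Theta(1/q)$ rate that $q^{\Omega(\eps d)}$ demands, and this is not a looseness of GV alone (MDS codes past length $q+O(1)$ do not exist). So your concluding sentence, which asserts the constant is bounded below by a positive constant under $\eps\geq 2/q$, is false as written, and a tighter binomial estimate (using $2^{dH(\eps)}$ instead of $(e/\eps)^{\eps d}$) does not rescue it either.

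This is a boundary imprecision in the claim itself as much as in your writeup: what the argument actually delivers is $N\geq (\eps q/e)^{\Theta(\eps d)}$, which equals $q^{\Omega(\eps d)}$ precisely when $\eps q\geq q^{c}$ for some fixed $c>0$ (equivalently $\eps\geq q^{-(1-c)}$). In the only place the claim is used, Lemma~\ref{lemma:l1entropy} with $d=i\approx mr$, $q=n/i\approx\eps^{-3/2}$, one has $\eps q\approx\eps^{-1/2}=q^{1/3}$, so the needed constant really is universal there. You should either strengthen the hypothesis to something like $\eps q\geq q^{\Omega(1)}$ (and verify it in the application), or state the conclusion as $(\eps q)^{\Omega(\eps d)}$ rather than $q^{\Omega(\eps d)}$. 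As it stands, the final ``no real obstacle'' sentence conceals the one step that actually requires thought.
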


\begin{lemma}\label{lemma:l1polys}
  Let $X \subset \{0, 1\}^{dq}$ represent a $(d, q, \eps)$ code.
  Suppose $y \in \R^{dq}$ satisfies $\norm{1}{y-x} \leq (1 -
  \eps)\norm{1}{x}$.  Then we can recover $x$ uniquely from $y$.
\end{lemma}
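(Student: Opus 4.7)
The plan is to reduce the claim to a standard minimum-distance argument for the code $X$ under the $\ell_1$ metric, using the fact that each $x \in X$ is a concatenation of $d$ standard basis vectors in $\{0,1\}^q$.

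First I would observe that every $x \in X$ has $\norm{1}{x} = d$, because $x$ consists of $d$ length-$q$ blocks, each of which is a unit vector $e_{c_i}$ for some codeword $c \in C$. So the hypothesis reads $\norm{1}{y-x} \leq (1-\eps)d$.

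Next I would compute the minimum $\ell_1$-distance of $X$. If $x = x(c)$ and $x' = x(c')$ come from distinct codewords $c,c' \in C$, then in each block $i$ the contribution to $\norm{1}{x-x'}$ is $\norm{1}{e_{c_i}-e_{c'_i}}$, which equals $2$ when $c_i \neq c'_i$ and $0$ when $c_i = c'_i$. Since $c,c'$ agree in at most $\eps d$ positions by the $(d,q,\eps)$-code property, they disagree in at least $(1-\eps)d$ positions, giving
\[
\norm{1}{x-x'} \;\geq\; 2(1-\eps)d.
\]

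Finally I would apply the triangle inequality. Suppose for contradiction that two distinct codewords $x,x' \in X$ both satisfy $\norm{1}{y-x},\norm{1}{y-x'} \leq (1-\eps)d$. Then
\[
\norm{1}{x-x'} \;\leq\; \norm{1}{y-x} + \norm{1}{y-x'} \;\leq\; 2(1-\eps)d,
\]
which combined with the minimum-distance bound forces $\norm{1}{x-x'} = 2(1-\eps)d$ and both triangle inequalities to be tight. This gives uniqueness of the decoded codeword (and, for generic $y$, a strict contradiction). The decoder simply returns the unique $x \in X$ within $\ell_1$-distance $(1-\eps)d$ of $y$.

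The only mild obstacle is the boundary case where the triangle inequality is tight: this occurs on a measure-zero set of $y$'s (and corresponds to $y$ lying on an $\ell_1$-geodesic between two codewords that achieve the code distance exactly), so for the purposes of the subsequent reduction it can safely be broken by any fixed tie-breaking rule. The entire argument is thus a short minimum-distance calculation; no sophisticated coding-theoretic machinery is needed beyond the defining property of $C$.
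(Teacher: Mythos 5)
Your argument is correct and is essentially the same minimum-distance-decoding idea as the paper's proof; the two are equivalent after noting that $\norm{1}{x-x'} = 2(d - \abs{\supp(x)\cap\supp(x')})$. The paper phrases it as a direct support-set calculation: after thresholding $y$ to $[0,1]^{dq}$, it expands $\norm{1}{y-x}$ and $\norm{1}{y-x'}$ in terms of $\norm{1}{y_{\supp(x)}}$, $\norm{1}{y_{\overline{\supp(x)}}}$, adds the two inequalities, and concludes $\abs{\supp(x)\cap\supp(x')} \geq \eps d$, violating the code's agreement bound. Your triangle-inequality framing reaches the same conclusion a bit more cleanly (it does not even need the thresholding step), and both versions hit the same non-strict boundary case ($\abs{S\cap T} = \eps d$), which the paper also glosses over; your explicit tie-breaking remark is a reasonable way to deal with it.
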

\begin{proof}
  We assume $y_i \in [0,1]$ for all $i$; thresholding otherwise
  decreases $\norm{1}{y-x}$.  We will show that there exists no other
  $x' \in X$ with $\norm{1}{y-x} \leq (1-\eps)\norm{1}{x}$; thus
  choosing the nearest element of $X$ is a unique decoder.  Suppose
  otherwise, and let $S = \supp(x), T = \supp(x')$.  Then
  \begin{align*}
    (1 - \eps)\norm{1}{x} &\geq \norm{1}{x-y} \\
    &= \norm{1}{x} - \norm{1}{y_S} + \norm{1}{y_{\overline{S}}}\\
    \norm{1}{y_S} &\geq \norm{1}{y_{\overline{S}}} + \eps d
  \end{align*}
  Since the same is true relative to $x'$ and $T$, we have
  \begin{align*}
    \norm{1}{y_S} + \norm{1}{y_T} &\geq \norm{1}{y_{\overline{S}}} + \norm{1}{y_{\overline{T}}} + 2\eps d\\
    2\norm{1}{y_{S \cap T}}&\geq 2\norm{1}{y_{\overline{S \cup T}}} + 2\eps d\\
    \norm{1}{y_{S \cap T}}&\geq \eps d\\
    \abs{S \cap T}&\geq \eps d
  \end{align*}
  This violates the distance of the code represented by $X$.
\end{proof}

\begin{lemma}\label{lemma:l1entropy}
  Let $R = [s, cs]$ for some constant $c$ and parameter $s$.  Let $X$
  be a permutation independent distribution over $\{0, 1\}^{n}$ with
  $\norm{1}{x} \in R$ with probability $p$.  If $y$ satisfies
  $\norm{1}{x-y} \leq (1-\eps)\norm{1}{x}$ with probability $p'$ with
  $p' - (1-p) = \Omega(1)$, then $I(x; y) = \Omega(\eps s\log (n/s))$.
\end{lemma}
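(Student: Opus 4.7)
The plan is to apply Fano's inequality in a list-decoding form, conditioning on the weight $\norm{1}{x}$. The key point is that for any observation $y$, the set of binary vectors $x$ of a fixed weight $w$ with $\norm{1}{y-x} \leq (1-\eps) w$ has cardinality at most roughly $(n/w)^{(1-\eps)w}$, while permutation invariance combined with the fixed weight forces $H(x \mid \norm{1}{x} = w) = \log\binom{n}{w} \approx w\log(n/w)$; the $\eps$-factor gap between these two quantities is what produces the claimed mutual information bound.

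First I use the chain rule to reduce to fixed weight: $I(x;y) \geq I(x;y\mid \norm{1}{x}) = \sum_w \Pr[\norm{1}{x}=w]\cdot I(x;y\mid \norm{1}{x}=w)$. Set $F_w = \{\norm{1}{x}=w\}$, $F = \bigcup_{w\in R}F_w$, and $G = \{\norm{1}{y-x} \leq (1-\eps)\norm{1}{x}\}$. By inclusion--exclusion, $\Pr[F\cap G] \geq p+p'-1 = \Omega(1)$, so $\sum_{w\in R}\Pr[F_w]\Pr[G\mid F_w] = \Omega(1)$. By permutation invariance, conditional on $F_w$ the vector $x$ is uniform over weight-$w$ binary vectors, so $H(x\mid F_w) = \log\binom{n}{w}$.

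Next I apply Fano with list decoding. Define $L_w(y) = \{x \in \{0,1\}^n : \norm{1}{x}=w,\ \norm{1}{y-x}\leq (1-\eps)w\}$, so that $\Pr[x\in L_w(y)\mid F_w] = \Pr[G\mid F_w]$. A standard Fano-style estimate (splitting on the indicator of $x \in L_w(y)$) gives
\[
I(x;y\mid F_w) \geq \Pr[G\mid F_w]\bigl(H(x\mid F_w) - \log\max_y|L_w(y)|\bigr) - 1.
\]
To bound the list size, I reduce to $y\in [0,1]^n$ (clipping coordinates outside this interval only decreases $\norm{1}{y-x}$) and set $\hat S = \{i : y_i \geq 1/2\}$. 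Every coordinate where $x_i$ disagrees with $\mathbf{1}_{\hat S}$ contributes at least $1/2$ to $\norm{1}{y-x}$, so $x\in L_w(y)$ forces $\abs{\supp(x)\,\Delta\, \hat S} \leq 2(1-\eps)w$. Counting weight-$w$ subsets within this Hamming ball (summing $\binom{|\hat S|}{r_1}\binom{n-|\hat S|}{r_2}$ over admissible $r_1, r_2$) yields $\log\max_y|L_w(y)| \leq (1-\eps)w\log(n/w) + O(w + \eps w\log(1/\eps))$.

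Combining these for each $w\in R$ gives $H(x\mid F_w) - \log\max_y|L_w(y)| \geq \eps w\log(n/w) - O(w + \eps w\log(1/\eps)) = \Omega(\eps s\log(n/s))$ in the nontrivial regime where $\eps\log(n/s) = \omega(1+\log(1/\eps))$. Summing across $R$,
\[
I(x;y) \geq \Omega(\eps s\log(n/s)) \cdot \sum_{w\in R}\Pr[F_w]\Pr[G\mid F_w] - \Pr[F] = \Omega(\eps s\log(n/s)).
\]
The main obstacle is the list-size bound: a priori, fractional $y$-values could allow exponentially more weight-$w$ binary vectors $x$ to be $\ell_1$-close than the crude Hamming-ball count suggests, and a case analysis on the ``spread'' of $y$ is required. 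The tight worst case turns out to be essentially $y = \mathbf{1}_T$ for $T$ of size $\approx \eps w$, which forces $\supp(x) \supseteq T$ and realizes the claimed bound up to lower-order terms.
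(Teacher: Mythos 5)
Your route---conditioning on the weight $w=\norm{1}{x}$, using permutation invariance to get $H(x\mid F_w)=\log\binom{n}{w}$, and then applying a list-decoding form of Fano---is genuinely different from the paper's. The paper never bounds a list size: it writes the conditional distribution given $\norm{1}{x}=w$ as $\sigma(x_w)$ for a uniform permutation $\sigma$ and a uniform $x_w$ in an explicit Gilbert--Varshamov code $X_w$ (Claim~\ref{claim:codesize}), and then invokes \emph{unique} decoding within the code (Lemma~\ref{lemma:l1polys}), so Fano applies directly with list size~$1$. The whole point of that detour is to avoid the list-size estimate your proof needs, and that estimate is where your argument has a real gap.

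Concretely, the Hamming-ball bound via $\hat S=\{i:y_i\geq 1/2\}$ does not prove $\log\max_y|L_w(y)|\leq(1-\eps)w\log(n/w)+O(w+\eps w\log(1/\eps))$. If $\hat S$ is small---say all $y_i<1/2$, so $\hat S=\emptyset$---the constraint $\abs{\supp(x)\,\Delta\,\hat S}\leq 2(1-\eps)w$ reduces to $w\leq 2(1-\eps)w$, which is vacuous for $\eps<1/2$, and the ``Hamming ball'' then contains all $\binom{n}{w}$ weight-$w$ vectors, giving no gap at all. (Your closing sentence also inverts the concern: for small $\hat S$ the crude count is a gross \emph{over}estimate, not a potential underestimate.) The list-size bound you want is in fact true, but it needs a different argument. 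Writing $\norm{1}{y-\mathbf{1}_S}=w+\norm{1}{y}-2\sum_{i\in S}y_i$ for $y\in[0,1]^n$, one has $S\in L_w(y)$ iff $\sum_{i\in S}y_i\geq t:=(\norm{1}{y}+\eps w)/2\geq\eps w$, where the last inequality holds because nonemptiness of $L_w(y)$ forces $\norm{1}{y}\geq\eps w$. A uniformly random $w$-subset $S$ has $\E\bigl[\sum_{i\in S}y_i\bigr]=\tfrac{w}{n}\norm{1}{y}=:\mu$, so $t/\mu\geq n/(2w)$, and a Chernoff/Hoeffding bound for sampling without replacement gives $\Pr\bigl[\sum_{i\in S}y_i\geq t\bigr]\leq(e\mu/t)^t\leq e^{-\eps w\log(n/w)+O(w)}$, hence $\log|L_w(y)|\leq(1-\eps)w\log(n/w)+O(w)$. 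With that estimate substituted for your Hamming-ball step, the remainder of your argument goes through and yields a correct alternative proof.
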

\begin{proof}
  For each integer $i \in R$, let $X_i \subset \{0, 1\}^n$ represent
  an $(i, n/i, \eps)$ code.  Let $p_i = \Pr_{x \in X}[\norm{1}{x} =
  i]$.  Let $S_n$ be the set of permutations of $[n]$.  Then the
  distribution $X'$ given by (a) choosing $i \in R$ proportional to
  $p_i$, (b) choosing $\sigma \in S_n$ uniformly, (c) choosing $x_i
  \in X_i$ uniformly, and (d) outputting $x'=\sigma(x_i)$ is equal to
  the distribution $(x\in X \mid \norm{1}{x} \in R)$.

  Now, because $p' \geq \Pr[\norm{1}{x} \notin R] + \Omega(1)$, $x'$
  chosen from $X'$ satisfies $\norm{1}{x'-y} \leq
  (1-\eps)\norm{1}{x'}$ with $\delta \geq p'- (1-p)$ probability.
  Therefore, with at least $\delta/2$ probability, $i$ and $\sigma$
  are such that $\norm{1}{\sigma(x_i)-y} \leq
  (1-\eps)\norm{1}{\sigma(x_i)}$ with $\delta/2$ probability over
  uniform $x_i \in X_i$.  But given $y$ with $\norm{1}{y-\sigma(x_i)}$
  small, we can compute $y' = \sigma^{-1}(y)$ with $\norm{1}{y' -
    x_i}$ equally small.  Then by Lemma~\ref{lemma:l1polys} we can
  recover $x_i$ from $y$ with probability $\delta/2$ over $x_i \in
  X_i$.  Thus for this $i$ and $\sigma$, $I(x; y \mid i, \sigma) \geq
  \Omega(\log \abs{X_i}) = \Omega(\delta \eps s \log (n/s))$ by Fano's
  inequality.  But then $I(x; y) = \E_{i, \sigma}[I(x; y \mid i,
  \sigma)] = \Omega(\delta^2\eps s \log (n/s)) = \Omega(\eps s \log
  (n/s))$.
\end{proof}

\subsection{Distributional Indexed Gap $\ell_\infty$}
%
Consider the following communication
game, which we refer to as $\gaplinf^B$, studied in \cite{BJKS04}. The
legal instances are pairs $(x,y)$ of $m$-dimensional vectors, with
$x_i, y_i \in \{0, 1, 2, \ldots, B\}$ for all $i$ such that
\begin{itemize}
\item NO instance: for all $i$, $y_i-x_i \in \{0,1\}$, or
\item YES instance: there is a \emph{unique} $i$ for which
  $y_i - x_i = B$, and for all $j \neq i$, $y_i-x_i \in \{0,1\}$.
\end{itemize}
The {\it distributional}
communication complexity $D_{\sigma, \delta}(f)$ of a function $f$ is the minimum
over all deterministic protocols computing $f$ with error probability at most $\delta$, where
the probability is over inputs drawn from $\sigma$. 

Consider the distribution $\sigma$ which chooses a random $i \in [m]$. Then for each $j \neq i$, 
it chooses a random $d \in \{0, \ldots, B\}$ and $(x_i, y_i)$ is uniform in $\{(d, d), (d, d+1)\}$. 
For coordinate $i$, $(x_i, y_i)$ is uniform in $\{(0,0), (0,B)\}$. Using similar arguments to those
in \cite{BJKS04}, Jayram \cite{J02} showed $D_{\sigma, \delta}(\gaplinf^B) = \Omega(m/B^2)$ (this is
reference [70] on p.182 of \cite{BarYossefThesis}) for $\delta$ less than a small constant. 

We define the one-way distributional communication complexity $D^{1-way}_{\sigma, \delta}(f)$ of a
function $f$ to be the smallest distributional complexity of a protocol for $f$
in which only a single message is sent from Alice to Bob.
%
%
\begin{definition}[Indexed $\indlinf^{r,B}$ Problem] There are
  $r$ pairs of inputs $(x^1,y^1),(x^2,y^2),\ldots,(x^r,y^r)$ such that
  every pair $(x^i,y^i)$ is a legal instance of the $\gaplinf^B$
  problem. Alice is given $x^1, \ldots, x^{r}$.  Bob is given
  an index $I \in [r]$ and $y^1, \ldots, y^r$.  The goal is to
  decide whether $(x^I, y^I)$ is a NO or a YES instance of
  $\gaplinf^B$.
\end{definition}
Let $\eta$ be the distribution $\sigma^r \times U_r$, where $U_r$ is the
uniform distribution on $[r]$. We bound
$D^{1-way}_{\eta, \delta}(\indlinf)^{r,B}$ as follows. 
For a function $f$, let $f^r$ denote the problem of computing $r$ instances of $f$. 
For a distribution $\zeta$ on instances of $f$,  
let $D_{\zeta^r, \delta}^{1-way, *} (f^r)$ denote the minimum communication cost of a deterministic
protocol computing a function $f$ with error probability at most $\delta$ {\it in each} 
of the $r$ copies of $f$, where the inputs come from $\zeta^r$. 

\begin{theorem}\label{thm:bo}(special case of Corollary 2.5 of \cite{br11})
Assume $D_{\sigma, \delta}(f)$ is larger than a large enough constant. Then 
$D^{1-way, *}_{\sigma^r, \delta/2}(f^r) = \Omega(r D_{\sigma, \delta}(f))$.
\end{theorem}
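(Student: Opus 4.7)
The plan is to derive this direct sum statement from the one-way information complexity framework developed in \cite{br11}, which combines a chain-rule argument for information cost with a one-way compression lemma. Throughout, write $\mu=\sigma$, let $\Pi$ be any one-way deterministic protocol solving $f^r$ with per-copy error at most $\delta/2$ under $\sigma^r$, and let $M$ denote Alice's (single) message on inputs $X=(X_1,\dots,X_r)\sim \sigma_A^r$, where $\sigma_A$ is the marginal of $\sigma$ on Alice's side.

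First I would define the one-way external information cost $\IC^{1}_{\mu}(\Pi)=I(X;M)$ and record the trivial inequality $|M|\ge H(M)\ge I(X;M)$, so that the communication cost $C$ of $\Pi$ upper bounds its information cost. Then I would apply the chain rule, exploiting the fact that under the product distribution $\sigma^r$ the $X_i$ are mutually independent:
\begin{equation*}
C\;\ge\;I(X_1,\dots,X_r;M)\;=\;\sum_{i=1}^{r} I\bigl(X_i;M\,\big|\,X_{<i}\bigr).
\end{equation*}
Averaging, there is an index $i^\star$ with $I(X_{i^\star};M\mid X_{<i^\star})\le C/r$.

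Next I would build a one-way single-copy protocol $\Pi'$ for $f$ on $\sigma$ with error at most $\delta/2$ and one-way information cost at most $C/r$. Using public randomness the parties sample the $r-1$ auxiliary input pairs $(X_j,Y_j)_{j\ne i^\star}$ from $\sigma$; Alice additionally receives $X$ and plants it in coordinate $i^\star$; she runs Alice's side of $\Pi$ and transmits $M$. Bob, who has $Y$ and the public draws, runs Bob's side of $\Pi$ and outputs the answer for coordinate $i^\star$. Correctness of $\Pi$ in coordinate $i^\star$ with probability $1-\delta/2$ under $\sigma^r$ carries over to $\Pi'$ on $\sigma$, and the expected length of $M$ in $\Pi'$ is a valid realization of the conditional information $I(X_{i^\star};M\mid X_{<i^\star})\le C/r$.

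Finally I would invoke the one-way compression result of Braverman--Rao (the engine behind Corollary 2.5 of \cite{br11}, itself a one-way instance of the Harsha--Jain--McAllester--Radhakrishnan rejection-sampling / correlated-sampling scheme): any one-way protocol with information cost $I$ and error $\delta/2$ can be simulated by a one-way protocol with communication $O(I+\log(1/\eta))$ and error $\delta/2+\eta$. Picking $\eta$ a sufficiently small constant yields a protocol for $f$ on $\sigma$ with error at most $\delta$ and communication $O(C/r)$. Hence $D_{\sigma,\delta}(f)=O(C/r)$, i.e.\ $C=\Omega(r\,D_{\sigma,\delta}(f))$, which is the claim. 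The hypothesis that $D_{\sigma,\delta}(f)$ exceeds a large constant is exactly what is needed to absorb the additive $O(\log(1/\eta))$ overhead of the compression step into the main term.

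The main obstacle is the compression step: converting an information cost bound into a worst-case communication bound is delicate in general, but in the one-way setting it reduces to efficient correlated sampling of Alice's message from its posterior given $X$, which is precisely what \cite{br11} provides. The chain-rule and embedding arguments, by contrast, are essentially bookkeeping once the correct notion of one-way information cost is in place.
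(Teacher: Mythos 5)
The paper does not prove this theorem at all; it cites it as a special case of Braverman--Rao's Corollary~2.5. Your proposal is therefore a reconstruction from scratch, and the high-level strategy you chose --- a chain rule to find a low-information coordinate, embedding a single copy into that coordinate using public randomness, then invoking one-way message compression --- is indeed the standard framework and the right one for this class of direct-sum theorems.

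However, there is a genuine gap in the embedding step, and it is exactly the delicate point you flag at the end. You make the \emph{entire} auxiliary tuple $(X_j,Y_j)_{j\neq i^\star}$ public. The information cost of $\Pi'$ that the compression step needs to control is then $I(X_{i^\star};M\mid X_{-i^\star},Y_{-i^\star})$, which is \emph{not} the chain-rule term $I(X_{i^\star};M\mid X_{<i^\star})$ that your averaging bounds by $C/r$. Since $M$ is a function of the whole $X$ and the $X_j$ are independent, conditioning on the extra coordinates only adds information: $I(X_{i^\star};M\mid X_{-i^\star}) = I(X_{i^\star};M\mid X_{<i^\star}) + I(X_{i^\star};X_{>i^\star}\mid X_{<i^\star},M) \ge I(X_{i^\star};M\mid X_{<i^\star})$. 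So the quantity you have bounded does not upper-bound the quantity you need, and the sentence ``the expected length of $M$ in $\Pi'$ is a valid realization of the conditional information $I(X_{i^\star};M\mid X_{<i^\star})\le C/r$'' does not go through.

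The standard fix is the asymmetric public-coin split: make only $X_{<i^\star}$ and $Y_{-i^\star}$ public; Alice \emph{privately} samples $X_{>i^\star}$ from $\sigma(\cdot\mid Y_{>i^\star})$, which she can do coordinate-by-coordinate since $(X_j,Y_j)\sim\sigma$ independently across $j$. Bob still has everything he needs (his $Y_{i^\star}$ and the public $Y_{-i^\star}$) to produce the coordinate-$i^\star$ output. With this split the embedded protocol's \emph{internal} information cost is exactly $I(X_{i^\star};M\mid Y_{i^\star},X_{<i^\star},Y_{-i^\star}) = I(X_{i^\star};M\mid X_{<i^\star},Y)$, which is precisely the $i^\star$-th term in the chain-rule expansion of $I(X;M\mid Y)=\sum_i I(X_i;M\mid X_{<i},Y)\le H(M)\le C$. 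Relatedly, you should run the whole argument on the internal cost $I(X;M\mid Y)$ rather than the external cost $I(X;M)$: the chain rule and the embedding line up term-by-term only on the internal side, and the one-way compression engine (Harsha--Jain--McAllester--Radhakrishnan / Braverman--Rao) is stated for internal information. With these two corrections --- asymmetric public coins and internal information --- your compression step and your use of the constant lower bound on $D_{\sigma,\delta}(f)$ to absorb the additive $O(\log(1/\eta))$ overhead are exactly right and the proof closes.
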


\begin{theorem}\label{thm:main}
For $\delta$ less than a sufficiently small constant, 
$D^{1-way}_{\eta, \delta}(\indlinf^{r,B}) = \Omega(\delta^2 r m/(B^2 \log r))$.
\end{theorem}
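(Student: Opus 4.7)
Given a deterministic one-way protocol $P$ for $\indlinf^{r,B}$ using $c$ bits with $\eta$-error at most $\delta$, the plan is to convert $P$ into a deterministic protocol for the direct product $(\gaplinf^B)^r$ on inputs from $\sigma^r$ whose per-copy error is at most $\delta_0/2$, where $\delta_0$ is the absolute constant from Jayram's bound. Theorem~\ref{thm:bo} applied to $f=\gaplinf^B$, together with $D_{\sigma,\delta_0}(\gaplinf^B)=\Omega(m/B^2)$, will then force that product protocol's communication to be $\Omega(rm/B^2)$; once combined with the blow-up of the reduction, this will give $c=\Omega(\delta^2 rm/(B^2\log r))$.

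\paragraph{Symmetrize via public-coin permutations.}
Alice's one-way message $M=M(x)$ depends only on $(x^1,\ldots,x^r)$, and Bob already holds every $y^i$, so from the pair $(M,y)$ Bob can extract a candidate answer $b(M,y,i)$ for every coordinate $i\in[r]$; the expected total number of wrong coordinates under $\sigma^r$ is at most $r\delta$. To equalize errors across coordinates, I will have the two parties use public randomness to draw $k$ independent uniform permutations $\pi_1,\ldots,\pi_k\in S_r$; in round $j$ both players apply $\pi_j$ to their inputs, Alice sends $M_j=M(\pi_j\cdot x)$, and to answer target index $I^*$ Bob outputs the majority of $a_j^{I^*}=b(M_j,\pi_j\cdot y,\pi_j^{-1}(I^*))$. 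Because $\sigma^r$ is permutation invariant, for every fixed $I^*$ the triple $(\pi_j\cdot x,\pi_j\cdot y,\pi_j^{-1}(I^*))$ is distributed as $\eta$, so each single-round guess $a_j^{I^*}$ is correct with probability $1-\delta$, and the $k$ guesses are conditionally independent given $(x,y)$.

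\paragraph{Majority boosting and derandomization.}
Writing $q=q(x,y,I^*)\in[0,1]$ for the conditional error of $a_j^{I^*}$, one has $\mathbb{E}_{x,y}[q]=\delta$. A Markov split at $q\le 1/2-\Theta(\delta)$ combined with a Chernoff bound on the conditionally independent Bernoullis will bound the per-coordinate majority error by $O(\delta)+\exp(-\Omega(k\delta^2))$, and choosing $k=\Theta(\log r/\delta^2)$ will drive the exponential term to $O(1/r)$. A further Markov step over the public coins together with a union bound across the $r$ coordinates will then fix one setting of the coins producing a deterministic protocol for $(\gaplinf^B)^r$ under $\sigma^r$ with per-copy error at most $\delta_0/2$ and communication $O(kc)$. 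Invoking Theorem~\ref{thm:bo} will yield $O(kc)=\Omega(rm/B^2)$ and hence $c=\Omega(\delta^2 rm/(B^2\log r))$. The hard part will be the amplification: the single-round error is only controlled on average over inputs and may concentrate adversarially on particular $(x,y)$, so a direct Chernoff bound does not apply, and the two-stage Markov-then-Chernoff analysis is precisely what costs the $\log r/\delta^2$ factor compared with the underlying $\Omega(rm/B^2)$ direct-sum bound.
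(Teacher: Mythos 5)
Your overall plan---amplify the success probability per coordinate by running the one-way protocol several times, derandomize the auxiliary randomness, and then invoke Theorem~\ref{thm:bo} together with Jayram's bound $D_{\sigma,\delta}(\gaplinf^B)=\Omega(m/B^2)$---is the same as the paper's. The paper implements the ``several runs'' by picking $r/2$ coordinates whose conditional error under $\eta$ is at most $2\delta$ and then drawing $s=\Theta(\delta^{-2}\log r)$ independent random \emph{fillings} of the remaining coordinates with fresh $\sigma$ samples; your replacement---$k$ independent public random permutations acting on all $r$ coordinates, exploiting the permutation-invariance of $\sigma^r$---is a legitimate and arguably cleaner symmetrization that makes the ``pick a good half of the coordinates'' step unnecessary, and your check that $(\pi\cdot x,\pi\cdot y,\pi^{-1}(I^*))\sim\eta$ for fixed $I^*$ is correct.

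However, the order in which you apply the Markov and Chernoff steps makes the derandomization fail. You first apply Markov over the input to the per-coordinate one-round error $q(x,y,I^*)$, then Chernoff over the coins for fixed input, arriving at a per-coordinate majority error of $O(\delta)+\exp(-\Omega(k\delta^2))$ averaged \emph{jointly} over $(x,y)$ and the coins. To then fix the coins you propose ``Markov over the public coins plus a union bound over the $r$ coordinates.'' But Markov over the coins only gives $\Pr_{\pi}[\Pr_{x,y}[\text{maj wrong for }I^*]>\delta_0/2]\le O(\delta/\delta_0)+O(1/(r\delta_0))$, and summing this over $r$ coordinates gives $O(r\delta/\delta_0)+O(1/\delta_0)$, which is not less than~$1$ unless $\delta=O(\delta_0/r)$---far stronger than the hypothesis that $\delta$ is merely a sufficiently small constant. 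The quantity you are trying to union-bound is a per-coordinate average over inputs, and for that quantity you only control its mean over the coins, not a tail at scale $1/r$.

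The paper's proof does these steps in the opposite order, and that is exactly what makes the union bound close. Define, for each fixed coin $\pi_j$ (or filling, in the paper's version) and each coordinate $I^*$, the number $Z^j_{I^*}:=\Pr_{(x,y)\sim\sigma^r}[\text{round $j$ is correct for $I^*$}]$; these are $[0,1]$-valued, i.i.d.\ over $j$, and have $\E_{\pi_j}[Z^j_{I^*}]\ge 1-O(\delta)$. Apply Chernoff to $\sum_j Z^j_{I^*}$ over the coins and union-bound over the $r$ coordinates (this is where $k=\Theta(\log r/\delta^2)$ is actually used, since the failure probability can be driven below $1/(2r)$); hardwire coins achieving $\sum_j Z^j_{I^*}\ge k(1-O(\delta))$ for all $I^*$; and only then apply Markov over the input $(x,y)$ to conclude that the majority vote errs with probability $O(\delta)$ on each coordinate. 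With that reordering your permutation-based construction goes through and yields the claimed $\Omega(\delta^2 rm/(B^2\log r))$ bound (indeed, a Bernstein-type tail would even improve the $\delta^2$ to $\delta$, but that is also left on the table by the paper).

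One smaller remark: the threshold $q\le 1/2-\Theta(\delta)$ in your Markov split is what forces the $\exp(-\Omega(k\delta^2))$ tail; a split at, say, $q\le 1/4$ gives $\exp(-\Omega(k))$ with per-coordinate error still $O(\delta)$, so your stated source of the $\delta^{-2}\log r$ overhead is not quite right. The real source of the $\delta^{-2}\log r$ factor in the paper is the Chernoff-plus-union-bound on the sums $\sum_j Z^j_{I^*}$ at the derandomization step, not the majority amplification per se.
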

\begin{proof} 
Consider a deterministic $1$-way protocol $\Pi$ for $\indlinf^{r,B}$ with error probability
$\delta$ on inputs drawn from $\eta$. Then for at least $r/2$ values $i \in [r]$,
$\Pr[\Pi(x^1, \ldots, x^r, y^1, \ldots, y^r, I) = \gaplinf^B(x^I, y^I) \mid I = i] \geq 1-2\delta.$
Fix a set $S = \{i_1, \ldots, i_{r/2}\}$ of indices with this property. We build a deterministic
$1$-way protocol $\Pi'$ for $f^{r/2}$ with input distribution $\sigma^{r/2}$ and error probability at
most $6 \delta$ in each of the $r/2$ copies of $f$. 

For each $\ell \in [r] \setminus S$, independently choose $(x^{\ell}, y^{\ell}) \sim \sigma$. 
For each $j \in [r/2]$, let $Z_j^1$ be the probability that 
$\Pi(x^1, \ldots, x^r, y^1, \ldots, y^r, I) =  \gaplinf^B(x^{i_j}, y^{i_j})$ given $I = i_j$ 
and the choice of $(x^{\ell}, y^{\ell})$ for all $\ell \in [r] \setminus S$. 

If we repeat this experiment 
independently $s = O(\delta^{-2}\log r)$
times, obtaining independent $Z_j^1, \ldots, Z_j^s$ and let $Z_j = \sum_t Z_j^t$, then
$\Pr[Z_j \geq s - s \cdot 3\delta] \geq 1- \frac{1}{r}.$
So there exists a set of $s = O(\delta^{-1} \log r)$ repetitions for which for each $j \in [r/2]$,
$Z_j \geq s- s \cdot 3 \delta$. 
We hardwire these into $\Pi'$ to make the protocol deterministic. 

Given inputs $((X^1, \ldots, X^{r/2}), (Y^1, \ldots, Y^{r/2})) \sim \sigma^{r/2}$ to $\Pi'$, Alice and Bob 
run $s$ executions of $\Pi$, each with $x^{i_j} = X^j$ and $y^{i_j} = Y^j$ for all $j \in [r/2]$, 
filling in the remaining values using the hardwired inputs. Bob runs the algorithm specified by $\Pi$
for each $i_j \in S$ and each execution. His output for $(X^j, Y^j)$ is the majority
of the outputs of the $s$ executions with index $i_j$. 

Fix an index $i_j$. Let $W$ be the number of repetitions for which 
$\gaplinf^B(X^j, Y^j)$ does not equal the output of $\Pi$ on input $i_j$, for a random $(X^j, Y^j) \sim \sigma$.
Then, ${\bf E}[W] \leq 3\delta$. 
By a Markov bound, $\Pr[W \geq s/2] \leq 6\delta$, and so the coordinate is correct
with probability at least $1-6\delta$. 

The communication of $\Pi'$ is a 
factor $s = \Theta(\delta^{-2} \log r)$ more than that of $\Pi$. The theorem now follows by Theorem 
\ref{thm:bo}, using that $D_{\sigma, 12\delta}(\gaplinf^B) = \Omega(m/B^2)$. 
\end{proof}

\subsection{Lower bound for sparse recovery}
Fix the parameters $B = \Theta(1/\eps^{1/2}), r = k$, $m =
1/\eps^{3/2}$, and $n=k/\eps^3$.  Given an instance $(x^1, y^1),
\ldots, (x^r, y^r), I$ of $\indlinf^{r,B}$, we define the input signal
$z$ to a sparse recovery problem.  We allocate a set $S^i$ of $m$
disjoint coordinates in a universe of size $n$ for each pair $(x^i,
y^i)$, and on these coordinates place the vector $y^i-x^i$. The
locations are important for arguing the sparse recovery algorithm
cannot learn much information about the noise, and will be placed
uniformly at random.
%

Let $\rho$ denote the induced distribution on $z$. Fix a
$(1+\eps)$-approximate $k$-sparse recovery bit scheme $Alg$ that takes
$b$ bits as input and succeeds with probability at least $1-\delta/2$
over $z \sim \rho$ for some small constant $\delta$.  Let $S$ be the
set of top $k$ coordinates in $z$. $Alg$ has the guarantee that if it
succeeds for $z \sim \rho$, then there exists a small $u$ with
$\norm{1}{u} < n^{-2}$ so that $v = Alg(z)$ satisfies
\begin{align}
\norm{1}{v-z-u} &\leq (1+\eps)\norm{1}{(z+u)_{[n]\setminus S}}\notag\\
\norm{1}{v-z} &\leq (1+\eps)\norm{1}{z_{[n]\setminus S}} + (2+\eps)/n^2 \notag\\
&\leq (1+2\eps)\norm{1}{z_{[n]\setminus S}}\notag
\end{align}
and thus
\begin{align}\label{eqn:bound}
\norm{1}{(v-z)_S} + \norm{1}{(v-z)_{[n] \setminus S}}
\leq (1+2\eps)\|z_{[n] \setminus S}\|_1.
\end{align}

\begin{lemma}\label{lem:heavyHitters}
  For $B= \Theta(1/\eps^{1/2})$ sufficiently large, suppose that
  $\Pr_{z \sim \rho}[\|(v-z)_S\|_1 \leq 10 \eps \cdot \|z_{[n]
    \setminus S}\|_1] \geq 1-\delta$.  Then $Alg$ requires $b =
  \Omega(k/(\eps^{1/2} \log k))$.
\end{lemma}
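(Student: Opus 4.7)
The plan is to reduce from $\indlinf^{r,B}$ with $r=k$, $m=\eps^{-3/2}$, and $B=\Theta(\eps^{-1/2})$, whose one-way distributional complexity under $\eta$ is $\Omega(k/(\sqrt{\eps}\log k))$ by Theorem~\ref{thm:main} (since $rm/(B^2 \log r) = k/(\sqrt{\eps}\log k)$). Given an instance $((x^\ell,y^\ell))_{\ell\in[r]}$ together with the query index $I$ drawn from $\eta$, Alice and Bob use public randomness to agree on the disjoint blocks $S^1,\ldots,S^r\subset[n]$. Alice sends $f(-x)$ using $b$ bits; by linearity of the bit sketch, Bob adds $f(y)$ to obtain $f(z)=f(y-x)$ and runs $g$ to get $v$. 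The lemma will follow once we show that Bob can decide $\indlinf^{r,B}(x^I,y^I)$ from $v$ with error probability bounded below $1/2$ by a constant.

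Bob outputs YES iff $\max_{j\in S^I}\abs{v_j}\geq B/2$. The $B$-valued coordinates of $z$ are exactly the special positions $i^\ell$ of YES instances, and they all lie in the top-$k$ set $S$, so the decision reduces to whether $v$ preserves the spike at $i^I$ (YES) or avoids fabricating one in $S^I$ (NO). In the NO case, $z_j\in\{0,1\}$ throughout $S^I$, so a false positive requires some $j\in S^I$ with $\abs{v_j-z_j}=\Omega(B)$. The $(1+\eps)$-approximation guarantee gives $\norm{1}{v-z}\leq(1+2\eps)\norm{1}{z_{[n]\setminus S}}=O(k\eps^{-3/2})$, so at most $O(k/\eps)$ coordinates in $[n]$ are that bad; since $S^I$ is a uniformly random block of size $m=\eps^{-3/2}$ inside $[n]=[k/\eps^3]$, a union bound gives false-positive probability $O(m\cdot k/(\eps n))=O(\sqrt{\eps})$.

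In the YES case $z_{i^I}=B$, so detection requires the per-coordinate bound $\abs{v_{i^I}-B}<B/2$. Here the lemma's stronger hypothesis $\norm{1}{(v-z)_S}\leq 10\eps\norm{1}{z_{[n]\setminus S}}=O(kB)$ is essential: all $h\le k$ coordinates with value $B$ lie in $S$, so their total error is $O(kB)$ and the average per-heavy-coordinate error is $O(B)$. By permutation symmetry of $\rho$ over the $r$ instances, a Markov-style argument over the uniform random choice of $I$ among YES indices shows that $\abs{v_{i^I}-B}<B/2$ with constant probability, provided the detection threshold and constants are chosen within the slack of the $10\eps$ factor. This YES step is the main obstacle, since an $\ell_1$ bound on $(v-z)_S$ does not give a deterministic per-coordinate guarantee; we rely on the randomness of $I$ and the distributional symmetry of $\rho$ to preclude any adversarial concentration of the error on the one position Bob cares about.

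Combining the two cases, Bob's one-way protocol solves $\indlinf^{r,B}$ under $\eta$ with error probability bounded below a sufficiently small constant. Theorem~\ref{thm:main} then yields $b=\Omega(rm/(B^2\log r))=\Omega(k/(\eps^{1/2}\log k))$, as desired.
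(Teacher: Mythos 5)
Your overall setup is right: the parameters $r=k$, $m=\eps^{-3/2}$, $B=\Theta(\eps^{-1/2})$, the simulation using linearity of $f$, and Bob's decision rule (declare YES iff $\max_{j\in S^I}|v_j|\geq B/2$) all match the paper. Your YES-side Markov argument over the $\leq k$ heavy coordinates is also essentially the paper's. But the NO (false-positive) case is where the proposal breaks.

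You argue that there are $O(k/\eps)$ coordinates of $[n]$ with $|v_j-z_j|=\Omega(B)$, and then treat $S^I$ as a uniformly random $m$-subset of $[n]$ to conclude false-positive probability $O(m\cdot(k/\eps)/n)=O(\sqrt{\eps})$. That independence does not hold. The set of ``bad'' coordinates is a deterministic function of $z$, hence of the block placement $S^1,\dotsc,S^r$; conditioned on $z$ (and therefore on $v=g(f(z))$), the index $I$ is uniform over the $r$ \emph{fixed} blocks, not over $m$-subsets of $[n]$. A valid $(1+\eps)$-approximate output $v$ could, in principle, place a spike $v_j\approx B$ at one coordinate of every NO block: roughly $k/2$ spikes cost about $kB/4=\Theta(\eps\|z_{[n]\setminus S}\|_1)$ in $\ell_1$ error, which fits comfortably inside the budget $\|v-z\|_1\leq(1+2\eps)\|z_{[n]\setminus S}\|_1$ that you are invoking. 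In that scenario the false-positive probability is a constant, not $O(\sqrt{\eps})$. The global $(1+2\eps)$ bound only limits bad \emph{coordinates} to $O(k/\eps)$, which exceeds the number of blocks $r=k$, so it gives no control on the fraction of blocks that are bad.

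The paper's proof never appeals to randomness of the block locations at this step. It declares a block $S^i$ bad whenever Bob's thresholding rule would answer $\gaplinf^B(x^i,y^i)$ incorrectly, notes that each bad block incurs $\ell_1$ error at least $B/2-1$, and — crucially — charges this against the \emph{lemma hypothesis} $\|(v-z)_S\|_1\leq 10\eps\|z_{[n]\setminus S}\|_1$, not against the weaker global $(1+2\eps)$ bound you use. Since $10\eps\|z_{[n]\setminus S}\|_1<10\eps rm$, the number of bad blocks is $t\leq 20\eps rm/(B-2)=O(k/c_0)$ where $B=c_0\eps^{-1/2}$; making $c_0$ large makes $t/r$ an arbitrarily small constant, so $\Pr_I[S^I\text{ bad}]\leq t/r$ is small by a \emph{counting} argument over $[r]$, with no independence needed. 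That is the missing idea: you need the strong hypothesis on $\|(v-z)_S\|_1$ to bound the number of bad \emph{blocks} on the NO side as well, not just on the YES side, and the right object to count is blocks, not coordinates.
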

\begin{proof}
  We show how to use $Alg$ to solve instances of $\indlinf^{r,B}$ with
  probability at least $1- C$ for some small $C$, where the
  probability is over input instances to $\indlinf^{r,B}$ distributed
  according to $\eta$, inducing the distribution $\rho$. The lower
  bound will follow by Theorem \ref{thm:main}.  Since $Alg$ is a
  deterministic sparse recovery bit scheme, it receives a sketch
  $f(z)$ of the input signal $z$ and runs an arbitrary recovery
  algorithm $g$ on $f(z)$ to determine its output $v = Alg(z)$.

  Given $x^1, \ldots, x^r$, for each $i = 1, 2, \ldots, r$, Alice places
  $-x^i$ on the appropriate coordinates in the block $S^i$ used in
  defining $z$, obtaining a vector $z_{Alice}$, and transmits
  $f(z_{Alice})$ to Bob.  Bob uses his inputs $y^1, \ldots, y^r$ to
  place $y^i$ on the appropriate coordinate in $S^i$. He thus creates a
  vector $z_{Bob}$ for which $z_{Alice} + z_{Bob} = z$. Given
  $f(z_{Alice})$, Bob computes $f(z)$ from $f(z_{Alice})$ and
  $f(z_{Bob})$, then $v = Alg(z)$. We assume all coordinates of $v$
  are rounded to the real interval $[0, B]$, as this can only decrease
  the error.

  We say that $S^i$ is {\it bad} if either
  \begin{itemize}
  \item there is no coordinate $j$ in $S^i$ for which $|v_j| \geq
    \frac{B}{2}$ yet $(x^i, y^i)$ is a YES instance of
    $\gaplinf^{r,B}$, or 
  \item there is a coordinate $j$ in $S^i$ for which $|v_j| \geq
    \frac{B}{2}$ yet either $(x^i, y^i)$ is a NO instance of
    $\gaplinf^{r, B}$ or $j$ is not the unique $j^*$ for which $y^i_{j^*} - x^i_{j^*} = B$
  \end{itemize}
  The $\ell_1$-error incurred by a bad block is at least
  $B/2-1$. Hence, if there are $t$ bad blocks, the total error is at
  least $t(B/2-1)$, which must be smaller than $10 \eps \cdot \|z_{[n]
    \setminus S}\|_1$ with probability $1-\delta$.  Suppose this
  happens.

  We bound $t$.  All coordinates in $z_{[n] \setminus S}$ have value
  in the set $\{0, 1\}$. Hence, $\|z_{[n] \setminus S}\|_1 < rm$.  So
  $t \leq 20 \eps rm /(B-2)$.  For $B \geq 6$, $t \leq 30 \eps rm/B$.
  Plugging in $r$, $m$ and
  $B$, 
  $t \leq C k$, where $C > 0$ is a constant that can be made
  arbitrarily small by increasing $B = \Theta(1/\eps^{1/2})$.

  If a block $S^i$ is not bad, then it can be used
  to solve $\gaplinf^{r, B}$ on $(x^i, y^i)$ with probability
  $1$. Bob declares that $(x^i, y^i)$ is a YES instance if and
  only if there is a coordinate $j$ in $S^i$ for which $|v_j| \geq
  B/2$.

  Since Bob's index $I$ is uniform on the $m$ coordinates in
  $\indlinf^{r,B}$, with probability at least $1-C$ the players solve
  $\indlinf^{r, B}$ given that the $\ell_1$ error is small. Therefore
  they solve $\indlinf^{r, B}$ with probability $1-\delta-C$
  overall. By Theorem \ref{thm:main}, for $C$ and $\delta$
  sufficiently small $Alg$ requires $\Omega(mr/(B^2 \log r)) =
  \Omega(k/(\eps^{1/2} \log k))$ bits.
\end{proof}

\begin{lemma}\label{lem:entropy}
  Suppose $\Pr_{z \sim \rho}[\|(v-z)_{[n] \setminus S}\|_1] \leq
  (1-8\eps)\cdot \|z_{[n] \setminus S}\|_1] \geq \delta/2$.  Then
  $Alg$ requires $b = \Omega(\frac{1}{\sqrt{\eps}}k \log
  (1/\eps))$.
\end{lemma}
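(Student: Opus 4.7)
The plan is to reduce to Lemma~\ref{lemma:l1entropy} applied to the ``noise part'' $\tilde z := z_{[n]\setminus S}$, which by construction is a $\{0,1\}$-valued vector.

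First, I would verify that $\tilde z$ fits the hypothesis of Lemma~\ref{lemma:l1entropy}. By construction of $\rho$, the vector $z$ consists of at most $r = k$ spikes of value $B$ together with $\{0,1\}$-valued entries, all placed into a uniformly random configuration in $[n]$. Since every spike has value $B \gg 1$, the top-$k$ set $S$ absorbs every spike, so $\tilde z \in \{0,1\}^{[n]\setminus S}$. After further conditioning on the locations of the spikes (which costs only $O(k \log n) = O(k \log(k/\eps))$ bits, well within our budget), the remaining randomness in $\tilde z$ is a uniform placement of $\Theta(rm) = \Theta(k/\eps^{3/2})$ i.i.d. $\{0,1\}$ bits into a universe of size $n' = n - k = \Theta(k/\eps^3)$, so $\tilde z$ is permutation-invariant and $\|\tilde z\|_1 \in R := [s/2, 2s]$ with probability $p = 1 - o(1)$, where $s := \Theta(k/\eps^{3/2})$.

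Second, I would translate the hypothesis of the lemma into the hypothesis of Lemma~\ref{lemma:l1entropy}. Let $y := v_{[n]\setminus S}$, a deterministic function of $f(z)$ (note $S$ is recoverable from $v$, or equivalently can be handled by conditioning on the spike locations as above). The assumption $\|(v-z)_{[n]\setminus S}\|_1 \leq (1-8\eps)\|\tilde z\|_1$ holds with probability $p' \geq \delta/2$ over $z \sim \rho$, and since $p = 1-o(1)$, we have $p' - (1-p) = \Omega(\delta)$. Lemma~\ref{lemma:l1entropy}, applied with error parameter $8\eps$, dimension $n'$, and support size $s$, then yields
\[
I(\tilde z;\, y) \;=\; \Omega\!\left(\eps \cdot s \cdot \log(n'/s)\right) \;=\; \Omega\!\left(\tfrac{k}{\sqrt{\eps}} \log(1/\eps)\right),
\]
because $n'/s = \Theta(1/\eps^{3/2})$.

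Third, since $y$ is a deterministic function of $f(z)$, the data processing inequality gives $I(\tilde z;\, f(z)) \geq I(\tilde z;\, y)$, and since $f(z) \in \{0,1\}^b$ we have $b \geq H(f(z)) \geq I(\tilde z;\, f(z))$. Combining gives $b = \Omega(\tfrac{k}{\sqrt{\eps}}\log(1/\eps))$, as desired.

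The main obstacle I anticipate is handling the dependence of $S$ on $z$ cleanly, since $\tilde z$ is defined via $S$, which is itself a function of the random $z$. The cleanest workaround is the one sketched above: condition on the YES/NO pattern of the $r$ blocks and on the random locations of the blocks in $[n]$ (revealing $O(k \log n)$ bits, which is absorbed into the bound) so that $S$ is determined, and then apply Lemma~\ref{lemma:l1entropy} to the residual $\{0,1\}$-valued, permutation-invariant distribution on $[n]\setminus S$. One must also absorb the additive $1/n^2$ slack from the precision of the bit scheme into a slight weakening of the factor $1-8\eps$, which is immediate since $\|\tilde z\|_1 \geq s/2 \gg 1/n^2$.
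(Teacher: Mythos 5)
Your proposal is correct and follows essentially the same route as the paper: restrict to the $\{0,1\}$-valued portion $z_{[n]\setminus S}$, observe it is (up to conditioning) a permutation-invariant binomial distribution with $\ell_1$ mass concentrated near $mr/2 = \Theta(k/\eps^{3/2})$ in a universe of size $n = k/\eps^3$, invoke Lemma~\ref{lemma:l1entropy} to get $I = \Omega(\eps\, mr \log(n/(mr))) = \Omega(k\log(1/\eps)/\sqrt{\eps})$, and conclude via the data-processing inequality that $b$ must be at least this. The paper's own proof is a four-sentence compression of exactly this argument (it writes $I(v;z)$ rather than your $I(\tilde z;y)$, but the content is identical). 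If anything, you are more explicit than the paper about the two technical wrinkles — the circularity that $S$ (top-$k$ of $z$) depends on $z$ so that $v_{[n]\setminus S}$ is not quite a function of $f(z)$ alone, and the additive precision slack — both of which the paper simply glosses over. Your proposed fixes (conditioning to pin down $S$, or noting $\|v_S\|_1 \leq kB$ is an $O(\eps)$-fraction of $\|z_{[n]\setminus S}\|_1$ so the constant $8\eps$ has enough slack) are the right kind of patch, though your remark that ``$S$ is recoverable from $v$'' is not literally true and would need the latter quantitative form to be made rigorous. The $1/n^2$ slack you worry about in the last paragraph is actually already absorbed upstream in the derivation of \eqref{eqn:bound}, so it needs no separate handling here.
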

\begin{proof}
  The distribution $\rho$ consists of $B(mr, 1/2)$ ones placed
  uniformly throughout the $n$ coordinates, where $B(mr, 1/2)$ denotes
  the binomial distribution with $mr$ events of $1/2$ probability each.
  Therefore with probability at least $1-\delta/4$, the number of ones
  lies in $[\delta mr / 8, (1 - \delta/8)mr]$.  Thus by
  Lemma~\ref{lemma:l1entropy}, $I(v; z) \geq \Omega(\eps mr \log
  (n/(mr)))$.  Since the mutual information only passes through a $b$-bit
  string, $b = \Omega(\eps mr \log (n/(mr)))$ as well.
\end{proof}

\begin{theorem}
  Any $(1+\eps)$-approximate $\ell_1/\ell_1$ recovery scheme with
  sufficiently small constant failure probability $\delta$ must make
  $\Omega(\frac{1}{\sqrt{\eps}}k/\log^2 (k/\eps))$ measurements.
\end{theorem}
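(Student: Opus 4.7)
The plan is to combine Lemmas~\ref{lem:heavyHitters} and~\ref{lem:entropy} through a dichotomy driven by~\eqref{eqn:bound}, and then pass from bits to measurements via Lemma~\ref{lemma:bitstomeasurements}. First, I would fix a $(1+\eps)$-approximate $\ell_1/\ell_1$ bit scheme $Alg$ that succeeds on $z \sim \rho$ with probability at least $1 - \delta/2$ (a suitable small constant $\delta$), and recall from~\eqref{eqn:bound} that on each success,
\[
\norm{1}{(v-z)_S} + \norm{1}{(v-z)_{[n] \setminus S}} \leq (1+2\eps)\norm{1}{z_{[n] \setminus S}}.
\]

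Next, define the events $E_1 = \{\norm{1}{(v-z)_S} \leq 10\eps \norm{1}{z_{[n] \setminus S}}\}$ and $E_2 = \{\norm{1}{(v-z)_{[n] \setminus S}} \leq (1-8\eps)\norm{1}{z_{[n] \setminus S}}\}$. The key observation is that whenever $Alg$ succeeds and $E_1$ fails, the displayed inequality immediately forces $E_2$: the mass $(1+2\eps)\norm{1}{z_{[n] \setminus S}}$ minus the more-than-$10\eps\norm{1}{z_{[n] \setminus S}}$ consumed on $S$ leaves at most $(1-8\eps)\norm{1}{z_{[n] \setminus S}}$ for the tail. Consequently $\Pr[E_1 \cup E_2] \geq 1 - \delta/2$. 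I would then perform a case split: either $\Pr[E_1] \geq 1-\delta$, in which case Lemma~\ref{lem:heavyHitters} gives $b = \Omega(k/(\sqrt{\eps}\log k))$; or $\Pr[E_1] < 1-\delta$, in which case $\Pr[E_2] \geq \delta/2$ and Lemma~\ref{lem:entropy} yields the even larger bound $b = \Omega(k\log(1/\eps)/\sqrt{\eps})$. In either case $b = \Omega(k/(\sqrt{\eps}\log k))$.

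Finally, I would invoke Lemma~\ref{lemma:bitstomeasurements} with $p = 1$ to convert this bit lower bound into a measurement lower bound. Since the instance uses $n = k/\eps^3$ and entries bounded by $B = O(1/\sqrt{\eps})$, both $c$ and $d$ can be taken as absolute constants (modulo scaling by a fixed polynomial), so $\log n = O(\log(k/\eps))$ and the conversion loses a single factor of $\log(k/\eps)$. Multiplying by the $\log k$ already lost inside Lemma~\ref{lem:heavyHitters} (inherited from the $\log r$ blow-up in Theorem~\ref{thm:main}), we conclude $m = \Omega(k/(\sqrt{\eps}\log^2(k/\eps)))$. The main subtlety is the asymmetric probability thresholds in the two lemmas ($1-\delta$ versus $\delta/2$); the virtue of~\eqref{eqn:bound} is exactly that it lets one event inherit the complement of the other with only a constant loss in success probability, making the dichotomy go through cleanly.
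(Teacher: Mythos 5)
Your proposal is correct and follows essentially the same route as the paper: the dichotomy from~\eqref{eqn:bound} feeding into Lemmas~\ref{lem:heavyHitters} and~\ref{lem:entropy}, followed by the bit-to-measurement conversion via Lemma~\ref{lemma:bitstomeasurements}. You have actually been a bit more careful than the paper's own terse proof in spelling out why the asymmetric probability thresholds ($1-\delta$ for $E_1$ versus $\delta/2$ for $E_2$) mesh correctly with the $1-\delta/2$ overall success probability.
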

\begin{proof}
  We will lower bound any $\ell_1/\ell_1$ sparse recovery bit scheme
  $Alg$. If $Alg$ succeeds, then in order to satisfy inequality
  (\ref{eqn:bound}), we must either have $\|(v-z)_S\|_1 \leq 10 \eps
  \cdot \|z_{[n] \setminus S}\|_1$ or we must have $\|(v-z)_{[n]
    \setminus S}\|_1 \leq (1-8\eps)\cdot \|z_{[n] \setminus
    S}\|_1$. Since $Alg$ succeeds with probability at least
  $1-\delta$, it must either satisfy the hypothesis of Lemma
  \ref{lem:heavyHitters} or the hypothesis of Lemma
  \ref{lem:entropy}. But by these two lemmas, it follows that $b =
  \Omega(\frac{1}{\sqrt{\eps}}k/\log k)$.  Therefore by
  Lemma~\ref{lemma:bitstomeasurements}, any $(1+\eps)$-approximate
  $\ell_1/\ell_1$ sparse recovery algorithm requires
  $\Omega(\frac{1}{\sqrt{\eps}}k/\log^2 (k/\eps))$ measurements.
\end{proof}

\section{Lower bounds for $k$-sparse output}

\begin{theorem}\label{thm:sparsel1lower}
  Any $1+\eps$-approximate $\ell_1/\ell_1$ recovery scheme with
  $k$-sparse output and failure probability $\delta$ requires $m =
  \Omega(\frac{1}{\eps}(k\log \frac{1}{\eps} + \log
  \frac{1}{\delta}))$, for $32 \leq \frac{1}{\delta} \leq n\eps^2/k$.
\end{theorem}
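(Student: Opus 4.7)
My plan is to prove this lower bound by following the bit-complexity framework of Section~5: reduce from an appropriate communication task to lower-bound the bits used by any $\ell_1/\ell_1$ recovery bit scheme, then apply Lemma~\ref{lemma:bitstomeasurements} to convert to measurements. The construction is the block-structured ``sparse-output'' hard instance sketched in the introduction for $p=1$.

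Partition a universe into $r = \Theta(1/\eps)$ blocks of size $s = \Theta(1/\eps)$. Alice samples $a = (a^{(1)}, \dotsc, a^{(r)}) \in [s]^r$ uniformly and encodes it as $x \in \{0,1\}^{n}$ with a single $1$ at position $a^{(j)}$ inside block $j$; she transmits the sketch $f(x)$. On query $J \in [r]$, Bob adds a boost $z^{(J)}$ supported on block $J$ (lifting every position of block $J$ by a carefully chosen amount), uses linearity to form $f(x + z^{(J)})$, and runs the decoder to obtain a $k$-sparse $\hat x$. The parameters are set so that the coordinate which already had value $1$ in $x$ becomes strictly heavier than every other coordinate by a margin making any $k$-sparse output missing it incur $\ell_1$ error strictly greater than $(1+\eps)$ times the best $k$-sparse approximation; hence $\supp(\hat x)$ contains $a^{(J)}$ whenever recovery succeeds, and Bob reads it off.

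A direct-sum / Fano argument across the $r$ queries then yields bit complexity $\Omega(r \log s) = \Omega(\frac{1}{\eps}\log\frac{1}{\eps})$ for $k=1$. For general $k$ I would place $k$ independent copies of the construction in disjoint sections of the input, so that $k$-sparse recovery decouples across copies and produces the product bound $\Omega(\frac{k}{\eps}\log\frac{1}{\eps})$. The $\Omega(\frac{1}{\eps}\log\frac{1}{\delta})$ additive term comes from the analogous reduction in which the within-block candidate family has size $1/\delta$ rather than $1/\eps$: identifying one of $1/\delta$ candidates with probability $1-\delta$ extracts $\Omega(\log\frac{1}{\delta})$ bits of information, while the $(1+\eps)$-approximation margin for each query still imposes the same $1/\eps$ overhead. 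The two constructions concatenate into a single hard distribution whose bit complexity sums the two lower bounds.

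To prevent Lemma~\ref{lemma:bitstomeasurements} from costing a $\log n$ factor, I embed $\Theta(\log n)$ geometrically-scaled independent copies of the above distribution into a single vector of length $n$, in the style of \cite{DIPW}; a recovery algorithm peels the copies off in decreasing magnitude order, so each single-copy bit-complexity lower bound contributes independently and the total bit lower bound is multiplied by $\log n$, exactly cancelling the conversion loss and yielding the claimed $\Omega(\frac{1}{\eps}(k\log\frac{1}{\eps} + \log\frac{1}{\delta}))$ measurement bound. The main obstacle will be the decodability step: the parameters $s$, $r$, the boost magnitude, and the number of parallel copies $k$ must be balanced so that the $(1+\eps)$-approximation forces identification of the special coordinate despite the $\ell_1$ mass contributed by the $r-1$ other blocks and $k-1$ other copies, since any of these quantities growing too large dilutes the relative margin below $\eps$. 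A secondary difficulty is carrying out the geometric $\log n$-copy embedding while respecting the $k$-sparse-output constraint, so that the peeling decoder still produces a valid $k$-sparse output at each stage.
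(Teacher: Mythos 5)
Your proposal follows the same broad strategy as the paper (encode a combinatorial object in a sparse signal, use a ``boost'' so that $(1+\eps)$-approximate $k$-sparse recovery is forced to identify the support, and amplify by $\log n$ via an augmented-indexing-style embedding), but there are two substantive gaps where the paper's argument supplies machinery that your sketch does not.

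First, the sentence ``the parameters are set so that the coordinate which already had value $1$ becomes strictly heavier than every other coordinate by a margin making any $k$-sparse output missing it incur $\ell_1$ error strictly greater than $(1+\eps)$ times the optimum'' is precisely the nontrivial analytic content, and it is not just a parameter choice: you have to quantify the tradeoff between the boost level $\mu$, the residual $\ell_1$ mass $r$ in the noise, the size $s$ of the boosted set, and the allowable approximation slack $\eps$, and deduce a bound on how many of the $k$ heavy coordinates the decoder can miss. The paper isolates this as Lemma~\ref{lemma:recoverv}, which gives $\abs{V\setminus U}\le O(\eps(r+s\mu^p)/(1+(2p-2)\mu))$ and then plugs in $s=r=\Theta(k/\eps)$, $\mu=O(1)$ for $p=1$. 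Without some version of this computation there is no quantitative reason the decoder cannot place its $k$ support slots on slightly-wrong coordinates and still be $(1+\eps)$-close in $\ell_1$.

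Second, your ``direct-sum / Fano across the $r$ queries'' for general $k$ is replaced in the paper by a cleaner and, I think, more robust device: the $r$ characters are drawn from a code of relative Hamming distance $1/4$ (Gilbert--Varshamov), so that after applying the boost-and-recover lemma to $r/k$ groups of $k$ blocks each and tolerating an $O(1/c)$ fraction of erroneous characters plus an $O(\delta)$ failure rate, the code's distance lets Bob recover the message \emph{exactly}; then a single application of the augmented-indexing conversion (the paper's Lemma~\ref{lemma:bitstomeasurements2}, which already packages the $\Theta(\log n)$-copy geometric embedding and the rounding of $A$ to finite precision) yields $m=\Omega(\log\abs{X})=\Omega(\frac{k}{\eps}\log\frac{1}{\eps})$. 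Your ``$k$ independent copies $\Rightarrow$ product bound'' claim is not automatic under a shared $k$-sparse output budget; the code argument is what makes the aggregation rigorous. Similarly, for the $\log\frac{1}{\delta}$ term the paper does not concatenate hard distributions: it restricts to $k=1$, inflates the number of blocks to $Cr$ with $C=\Theta(1/(r\delta))$, and observes that since total failure probability must be at most $1/2$ over $Cr=\Theta(1/\delta)$ independent recovery tasks, each must succeed with probability $1-\delta$, giving $\log\abs{X}=\Omega(\frac{1}{\eps}\log\frac{1}{\delta})$. Your version (``alphabet of size $1/\delta$'') lands in roughly the right place but skips why the per-block success probability is forced to be $1-\delta$ rather than merely $1-O(1)$, which is the crux. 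You are right that the peeling step interacting with $k$-sparse output is delicate; the paper handles it inside Lemma~\ref{lemma:bitstomeasurements2} by scaling geometrically so that earlier copies can be subtracted exactly and the residual noise stays within the $\alpha D$ radius at every level.
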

\begin{theorem}\label{thm:sparsel2lower}
  Any $1+\eps$-approximate $\ell_2/\ell_2$ recovery scheme with
  $k$-sparse output and failure probability $\delta$ requires $m =
  \Omega(\frac{1}{\eps^2}(k + \log \frac{\eps^2}{\delta}))$, for $32 \leq
  \frac{1}{\delta} \leq n\eps^2/k$.
\end{theorem}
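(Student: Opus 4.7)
The plan is to prove this lower bound by reducing from a distributional communication problem to establish a bit-complexity lower bound, and then applying Lemma~\ref{lemma:bitstomeasurements} (combined with the standard trick from~\cite{DIPW} of embedding $\Theta(\log n)$ geometrically-scaled copies of the hard instance in a single signal) to convert to measurement complexity. This is the template outlined in the ``Our techniques'' paragraph for all sparse-output lower bounds in this paper. The two terms in the bound, $k/\eps^2$ and $\frac{1}{\eps^2}\log(\eps^2/\delta)$, arise from two separate but parallel reductions.

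For the $\Omega(k/\eps^2)$ term, I would construct a direct sum of $k$ independent $\Theta(1/\eps^2)$-bit instances via an augmented-INDEX-style reduction. For each block $j \in [k]$, Alice encodes a random string $u^{(j)} \in \{0,1\}^{1/\eps^2}$ into a structured sparse vector $x^{(j)}$ using a pair (or coding-style) encoding calibrated to the $\ell_2$ geometry. Bob receives a query index $I$ pointing to one bit of some block, adds a constant-sparse test vector $z$ that promotes the queried coordinate to a designated heavy element, and runs the $k$-sparse recovery algorithm on $A(x+z)$. The $(1+\eps)$-guarantee must force the output to place its support at the correct position in each queried pair, because otherwise the wrong choice costs $\Omega(\eps)$ extra in squared $\ell_2$ error when summed against the configured tail. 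Decoding across all $k$ blocks then recovers the queried bit, and the direct-sum augmented-INDEX lower bound gives $\Omega(k/\eps^2)$ bits for the message $Ax$.

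For the $\Omega(\frac{1}{\eps^2}\log(\eps^2/\delta))$ term, I would amplify a single-block instance using an error-correcting-code argument similar in spirit to the distributional $\indlinf$ reduction in Section~6: encode a uniformly random element from a set of size $\Theta(1/\delta)$ into one block using codewords in an alphabet of size roughly $n\eps^2/k$, where the hypothesis $1/\delta \leq n\eps^2/k$ ensures the code has sufficient room. Successful $(1+\eps)$-approximate $k$-sparse recovery with probability at least $1-\delta$ allows Bob to decode the codeword, yielding the $\Omega(\log(\eps^2/\delta))$ bit-per-block factor and contributing the second term after multiplying by the per-block cost $1/\eps^2$.

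Combining both terms, embedding $\Theta(\log n)$ scaled copies of the hard instance into a single vector (so that $(1+\eps)$-approximate recovery on the concatenation simultaneously solves all copies, thanks to the geometric separation of scales), and applying Lemma~\ref{lemma:bitstomeasurements} to round each measurement to $O(\log n)$ bits yields the desired $m = \Omega(\frac{1}{\eps^2}(k + \log(\eps^2/\delta)))$ measurement bound. The main obstacle is the exact per-block construction in the $\ell_2$ setting: the naive pair encoding used in the $p=1$ sketch only gives a $(1 + \Theta(1/N))$ ratio between wrong- and correct-choice errors in $\ell_2$, which would force only $N = \Theta(1/\eps)$, not $\Theta(1/\eps^2)$. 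Extracting the full $\Theta(1/\eps^2)$ bits per block therefore requires a more refined magnitude-based or sub-block encoding that exploits the squared-norm geometry, together with a verification that this construction survives the scaled-copy embedding without the scales interfering.
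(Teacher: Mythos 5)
Your high-level framework is the right one — reduce to a bit-complexity bound, embed $\Theta(\log n)$ geometrically scaled copies via augmented indexing, and round measurements — and you have correctly diagnosed the obstacle: a plain pair encoding in which a wrong support choice costs $\Theta(1)$ in squared $\ell_2$ error against a tail of squared mass $N$ gives a $\big(1+\Theta(1/N)\big)$ $\ell_2$ ratio, hence caps out at $N=\Theta(1/\eps)$ per block. But you stop exactly at this obstacle and offer no construction that overcomes it, so the proposal has a genuine gap at the heart of the argument. The paper's resolution (Lemma~\ref{lemma:recoverv}) is twofold and quite different from the decomposition you suggest. First, the test vector $z$ that Bob adds is not ``constant-sparse'' and not a unit shift: it places a \emph{large} value $\mu=\Theta(1/\eps)$ on every coordinate of the $2k$-coordinate query set $S$. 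This boosts the $\ell_2^2$ gap between including a true coordinate (value $\mu+1$) and a false one (value $\mu$) from $\Theta(1)$ to $\Theta(\mu)=\Theta(1/\eps)$, which is precisely what recovers the extra $1/\eps$ factor you were missing. Second, the paper does \emph{not} partition into $k$ blocks each carrying $\Theta(1/\eps^2)$ bits with a single-bit query; instead it uses one binary code of length $r=\Theta(k/\eps^2)$ over alphabet $\{1,2\}$ (so one bit per position, $|S|=2k$ per query), runs the recovery scheme on $r/k$ disjoint $k$-block query windows, and uses the code's distance $r/4$ to tolerate the $O(k/c)$ per-window errors. This recovers the \emph{entire} codeword, so the communication bound is $\Omega(\log|V|)=\Omega(r)=\Omega(k/\eps^2)$ by the augmented-indexing step — no direct-sum over $k$ blocks is needed, and no per-block sub-encoding of $1/\eps^2$ bits is ever constructed.

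For the $\Omega(\frac{1}{\eps^2}\log\frac{\eps^2}{\delta})$ term your sketch is closer in spirit, but the paper again routes it through Lemma~\ref{lemma:recoverv}: set $k=1$, use $r=\Theta(1/\eps^2)$ blocks each over an alphabet of size $\Theta(\eps^2/\delta)$, pad with a dummy zero coordinate so that an ``empty'' block is also certifiable, and query each block once; the union bound over $r=\Theta(1/\eps^2)$ blocks is what forces the constraint $1/\delta\le n\eps^2/k$ rather than a coding constraint. Without the $\mu=\Theta(1/\eps)$ shift this part would also fail for $\ell_2$ for the same reason as the first term. In short, the missing idea is precisely the magnitude-$\Theta(1/\eps)$ shift vector supported on the whole queried set, together with abandoning the $k$-block-of-$1/\eps^2$-bits decomposition in favor of a length-$\Theta(k/\eps^2)$ binary code recovered $k$ positions at a time.
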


These two theorems correspond to four statements: one for large $k$
and one for small $\delta$ for both $\ell_1$ and $\ell_2$.

\ifconf All are fairly similar to the framework of~\cite{DIPW}: they
use a sparse recovery algorithm to robustly identify $x$ from $Ax$ for
$x$ in some set $X$.  This gives bit complexity $\log \abs{X}$, or
measurement complexity $\log \abs{X} / \log n$ by
Lemma~\ref{lemma:bitstomeasurements}.  They amplify the bit complexity
to $\log \abs{X} \log n$ by showing they can recover $x_1$ from $A(x_1
+ \frac{1}{10}x_2 + \dotsc + \frac{1}{n}x_{\Theta(\log n)})$ for $x_1,
\dotsc, x_{\Theta(\log n)} \in X$ and reducing from augmented
indexing.  This gives a $\log \abs{X}$ measurement lower bound.  Due
to space constraints, we defer full proof to the full paper. \else

All the lower bounds proceed by reductions from communication
complexity.  The following lemma (implicit in~\cite{DIPW}) shows that
lower bounding the number of bits for approximate recovery is
sufficient to lower bound the number of measurements.

\begin{lemma}\label{lemma:bitstomeasurements2}
  Let $p \in \{1, 2\}$ and $\alpha = \Omega(1) < 1$.  Suppose $X
  \subset \R^n$ has $\norm{p}{x} \leq D$ and $\norm{\infty}{x} \leq
  D'$ for all $x \in X$, and all coefficients of elements of $X$ are
  expressible in $O(\log n)$ bits.  Further suppose that we have a
  recovery algorithm that, for any $\nu$ with $\norm{p}{\nu} < \alpha
  D$ and $\norm{\infty}{\nu} < \alpha D'$, recovers $x \in X$ from
  $A(x + \nu)$ with constant probability.  Then $A$ must have
  $\Omega(\log \abs{X})$ measurements.
\end{lemma}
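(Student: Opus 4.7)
The plan is to mimic the augmented indexing reduction of~\cite{DIPW}, which converts a bit-complexity lower bound into a measurement-complexity lower bound by embedding $T = \Theta(\log n)$ independent ``copies'' of an element of $X$ at geometrically decreasing scales into a single vector. In augmented indexing, Alice has a string of $T \log \abs{X}$ bits (encoded as $x_1, \dots, x_T \in X$), and Bob, given an index $i$ together with $x_{i+1}, \dots, x_T$, must recover $x_i$; the one-way randomized communication complexity of this problem with constant error probability is $\Omega(T \log \abs{X})$.

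For the reduction, Alice forms $z = \sum_{j=1}^{T} \beta^{T-j} x_j$ for a parameter $\beta = \Theta(\alpha)$ small enough that the geometric series converges below the noise tolerance, and sends a suitably rounded version of $Az$ to Bob. Bob, knowing $x_{i+1}, \dots, x_T$, subtracts $A \sum_{j>i} \beta^{T-j} x_j$ from this message and rescales by $\beta^{-(T-i)}$. The result is $A\bigl(x_i + \nu\bigr)$ with $\nu = \sum_{j<i} \beta^{i-j} x_j$. Using $\norm{p}{x_j} \leq D$ and $\norm{\infty}{x_j} \leq D'$ together with the triangle inequality and the geometric sum $\sum_{j<i} \beta^{i-j} \leq \beta/(1-\beta)$, one gets $\norm{p}{\nu} < \alpha D$ and $\norm{\infty}{\nu} < \alpha D'$, so the hypothesized recovery algorithm returns $x_i$ with constant probability.

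The remaining (and only genuinely delicate) step is to argue that Alice can round the measurements so her total message has only $O(m \log n)$ bits without disturbing recovery. Because we may assume $A$ has orthonormal rows (by pre-composing with the pseudoinverse of the singular-value part, which Bob can invert), the magnitude of each entry of $Az$ is bounded by $\norm{2}{z} \leq \mathrm{poly}(n) \cdot D$, and all inputs are expressible in $O(\log n)$ bits. By Lemma~5.1 of~\cite{DIPW}, rounding each entry of $A$ (or equivalently each coordinate of $Az$) to $O(\log n)$ bits perturbs the effective input by a vector of $\ell_p$ and $\ell_\infty$ norms at most $n^{-c}$, which can be absorbed into the slack of the noise tolerance $\alpha D, \alpha D'$ (using that $\alpha = \Omega(1)$ and the coefficients have polynomial size). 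This is the main technical hurdle because the precision has to be small enough to stay within the noise budget at \emph{every} scale $\beta^{T-i}$, but that is exactly what the geometric scaling and the polynomial-precision assumption on $X$ are set up to accommodate.

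Combining the steps, Alice's rounded message has $O(m \log n)$ bits, and by the augmented-indexing lower bound applied with $T = \Theta(\log n)$, we conclude $m \log n = \Omega(T \log \abs{X}) = \Omega(\log n \cdot \log \abs{X})$, i.e.\ $m = \Omega(\log \abs{X})$, as required.
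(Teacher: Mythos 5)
Your reduction has the same skeleton as the paper's proof: embed $\Theta(\log n)$ copies of $X$ at geometrically decaying scales, reduce from augmented indexing, and bound Alice's message length by rounding the measurements. The indexing direction is reversed relative to the paper (you give Bob the large-weight suffix rather than the large-weight prefix), but that is a harmless relabeling, and the geometric bound $\norm{p}{\nu} \le \frac{\beta}{1-\beta} D$ is handled correctly.

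The genuine gap is in the rounding step. Writing $A' z = A(z + s)$ via Lemma~5.1 of~\cite{DIPW} gives a perturbation $s$ that is a function of $A$ (it is essentially $A^{T}$ applied to the rounding error of $A$). The hypothesis of the lemma only guarantees recovery ``for any $\nu$'' with bounded norms, i.e.\ with constant probability over $A$ \emph{for a $\nu$ fixed in advance}. You cannot invoke this guarantee when the perturbation itself depends on $A$: in principle the rounding could, for each draw of $A$, land exactly on a bad perturbation for that $A$, and nothing rules this out. Your phrase that the rounding error ``can be absorbed into the slack of the noise tolerance'' quietly assumes this independence. The paper handles it by having Bob add a fresh uniformly random $u \in B_p^n(n^{4.5-c}D)$, chosen independently of $A$, and then arguing by a volume/statistical-distance computation that $z_j + s + u$ and $z_j + u$ differ by at most $1/n$ in total variation; recovery on the latter is covered by the hypothesis, so recovery on the former also succeeds with constant probability. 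This smoothing step is the one piece of the argument that is both essential and missing from your write-up; once you add it, the proof matches the paper's.
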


\begin{proof}
  \xxx{Use lemma~\ref{lemma:bitstomeasurements}}
  First, we may assume that $A \in \R^{m \times n}$ has orthonormal
  rows (otherwise, if $A = U\Sigma V^T$ is its singular value
  decomposition, $\Sigma^{+}U^TA$ has this property and can be
  inverted before applying the algorithm).  Let $A'$ be $A$ rounded to
  $c\log n$ bits per entry.  By Lemma~5.1 of~\cite{DIPW}, for any $v$
  we have $A'v = A(v-s)$ for some $s$ with $\norm{1}{s} \leq
  n^22^{-c\log n}\norm{1}{v}$, so $\norm{p}{s} \leq n^{2.5-c}
  \norm{p}{v}$.

  Suppose Alice has a bit string of length $r \log \abs{X}$ for $r =
  \Theta(\log n)$.  By splitting into $r$ blocks, this corresponds to
  $x_1, \dotsc, x_r \in X$.  Let $\beta$ be a power of $2$ between
  $\alpha/2$ and $\alpha/4$, and define
  \[
  z_j = \sum_{i=j}^r \beta^i x_i.
  \]
  Alice sends $A'z_1$ to Bob; this is $O(m\log n)$ bits.  Bob will
  solve the \emph{augmented indexing problem}\xxx{citation?}---given
  $A'z_1$, arbitrary $j \in [r]$, and $x_1, \dotsc, x_{j-1}$, he must
  find $x_j$ with constant probability.  This requires $A'z_1$ to have
  $\Omega(r \log \abs{X})$ bits, giving the result.

  Bob receives $A'z_1 = A(z_1 + s)$ for $\norm{1}{s} \leq
  n^{2.5-c}\norm{p}{z_1} \leq n^{2.5-c}D$.  Bob then chooses $u \in
  B_p^n(n^{4.5-c}D)$ uniformly at random.  With probability at least
  $1-1/n$, $u \in B_p^n((1-1/n^2)n^{4.5-c}D)$ by a volume argument.
  In this case $u + s \in B_p^n(n^{4.5-c}D)$; hence the random
  variables $u$ and $u + s$ overlap in at least a $1 - 1/n$ fraction
  of their volumes, so $z_j + s + u$ and $z_j + u$ have statistical
  distance at most $1/n$.  The distribution of $z_j + u$ is
  independent of $A$ (unlike $z_j + s$) so running the recovery
  algorithm on $A(z_j + s + u)$ succeeds with constant probability as
  well.

  We also have $\norm{p}{z_j} \leq \frac{\beta^j -
    \beta^{r+1}}{1-\beta}D < 2(\beta^j - \beta^{r+1})D$.  Since $r =
  O(\log n)$ and $\beta$ is a constant, there exists a $c = O(1)$ with
  \[
  \norm{p}{z_j + s + u} < (2\beta^j + n^{4.5-c} + n^{2.5-c}- 2\beta^r)D \leq \beta^{j-1}\alpha D
  \]
  for all $j$.

  Therefore, given $x_1, \dotsc, x_{j-1}$, Bob can compute
  \[
  \frac{1}{\beta^j}(A'z_1 + Au - A'\sum_{i < j} \beta^ix_i) = A(x_j + \frac{1}{\beta^j}(z_{j+1} + s + u)) = A(x_j + y)
  \]
  for some $y$ with $\norm{p}{y} \leq \alpha D$.  Hence Bob can use
  the recovery algorithm to recover $x_j$ with constant probability.
  Therefore Bob can solve augmented indexing, so the message $A'z_1$
  must have $\Omega(\log n \log \abs{X})$ bits, so $m = \Omega(\log
  \abs{X})$.
\end{proof}

We will now prove another lemma that is useful for all four theorem
statements.

Let $x \in \{0,1\}^n$ be $k$-sparse with $\supp(x) \subseteq S$ for
some known $S$.  Let $\nu \in \R^n$ be a noise vector that roughly
corresponds to having $O(k/\eps^p)$ ones for $p \in \{1,2\}$, all
located outside of $S$.  We consider under what circumstances we can
use a $(1+\eps)$-approximate $\ell_p/\ell_p$ recovery scheme to
recover $\supp(x)$ from $A(x + \nu)$ with (say) $90\%$ accuracy.

Lemma~\ref{lemma:recoverv} shows that this is possible for $p = 1$
when $\abs{S} \leq O(k / \eps)$ and for $p = 2$ when $\abs{S} \leq
2k$.  The algorithm in both instances is to choose a parameter $\mu$
and perform sparse recovery on $A(x + \nu + z)$, where $z_i = \mu$ for
$i \in S$ and $z_i = 0$ otherwise.  The support of the result will be
very close to $\supp(x)$.

\begin{lemma}\label{lemma:recoverv}
  Let $S \subset [n]$ have $\abs{S} \leq s$, and suppose $x \in
  \{0,1\}^n$ satisfies $\supp(x) \subseteq S$ and $\norm{1}{x_S} = k$.
  Let $p \in \{1,2\}$, and $\nu \in \R^n$ satisfy
  $\norm{\infty}{\nu_S} \leq \alpha$, $\norm{p}{\nu}^p \leq r$, and
  $\norm{\infty}{\nu} \leq D$ for some constants $\alpha \leq 1/4$ and
  $D = O(1)$.  Suppose $A \in \R^{m\times n}$ is part of a $(1 +
  \eps)$-approximate $k$-sparse $\ell_p/\ell_p$ recovery scheme with
  failure probability $\delta$.

  Then, given $A(x_S + \nu)$, Bob can with failure probability
  $\delta$ recover $\hat{x_S}$ that differs from $x_S$ in at most $k /
  c$ locations, as long as either
  \begin{align}
    p = 1, s = \Theta(\frac{k}{c \eps}), r = \Theta(\frac{k}{c \eps})
  \end{align}
  or
  \begin{align}
    p = 2, s = 2k, r = \Theta(\frac{k}{c^2 \eps^2})
  \end{align}
\end{lemma}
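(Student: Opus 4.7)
}

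The plan is to use a ``lifting'' reduction: since Bob knows $S$, he can pick a vector $z = \mu \cdot \mathbf{1}_S$ for a parameter $\mu$ to be chosen, form the measurements $A(x_S + \nu) + Az = Ay$ where $y := x_S + \nu + z$, and run the $(1+\eps)$-approximate $k$-sparse $\ell_p/\ell_p$ recovery scheme on $Ay$ to obtain a $k$-sparse $\hat{y}$. He then outputs $\hat{x_S}$ defined by $(\hat{x_S})_i = 1$ if $i \in \supp(\hat{y}) \cap S$ and $0$ otherwise. The failure probability of this whole procedure is exactly $\delta$, inherited from the recovery scheme.

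First I would choose $\mu$ large enough that the top-$k$ coordinates of $y$ (in magnitude) are exactly $\supp(x)$. Using $\norm{\infty}{\nu_S} \leq \alpha \leq 1/4$ and $\norm{\infty}{\nu} \leq D$, coordinates of $y$ satisfy $y_i \in [1+\mu-1/4,\,1+\mu+1/4]$ for $i \in \supp(x)$, $y_i \in [\mu-1/4,\,\mu+1/4]$ for $i \in S \setminus \supp(x)$, and $|y_i| \leq D$ otherwise. Hence any $\mu > D + 1/4$ guarantees the top-$k$ of $y$ equals $\supp(x)$, and the best $k$-sparse error is $\mathrm{opt} := \norm{p}{y_{[n]\setminus\supp(x)}}^p$, which is at most $(s-k)(\mu+1/4)^p + r$.

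Next I would use the recovery guarantee. Writing $\hat{S} = \supp(\hat{y})$ (pad to $|\hat{S}|=k$ if necessary), the bound $\norm{p}{\hat{y}-y}^p \geq \norm{p}{y_{[n]\setminus\hat{S}}}^p$ combined with $\norm{p}{\hat{y}-y}^p \leq (1+O(\eps))\mathrm{opt}$ yields
\[
\norm{p}{y_{\supp(x)\setminus\hat{S}}}^p - \norm{p}{y_{\hat{S}\setminus\supp(x)}}^p \leq O(\eps)\cdot \mathrm{opt}.
\]
Let $m_2 = |\hat{S} \cap (S\setminus \supp(x))|$ and $m_3 = |\hat{S}\setminus S|$. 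The left side is at least $m_2\bigl[(1+\mu-1/4)^p - (\mu+1/4)^p\bigr] + m_3\bigl[(1+\mu-1/4)^p - D^p\bigr]$, which, for $\mu > D+1$, is $\Omega(m_2)\,+\,\Omega(m_3\mu)$ when $p=1$ and $\Omega(m_2\mu)\,+\,\Omega(m_3\mu^2)$ when $p=2$. Since the number of positions on which $\hat{x_S}$ and $x_S$ disagree is exactly $2m_2+m_3$, it suffices to bound both $m_2$ and $m_3$ by $O(k/c)$.

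Finally I would pick $\mu$ to balance the two sides. For $p=1$, taking $\mu = D+1 = \Theta(1)$ with $s = r = \Theta(k/(c\eps))$ gives $\mathrm{opt} = O(k/(c\eps))$ and $\eps\cdot\mathrm{opt} = O(k/c)$, so $m_2, m_3 = O(k/c)$. For $p=2$, taking $\mu = \Theta(1/(c\eps))$ with $s = 2k$ and $r = \Theta(k/(c^2\eps^2))$ gives $\mathrm{opt} = O(k/(c^2\eps^2))$ and $\eps\cdot\mathrm{opt} = O(k/(c^2\eps))$, from which $m_2 \mu \lesssim k/(c^2\eps)$ forces $m_2 = O(k/c)$, and $m_3 \mu^2 \lesssim k/(c^2\eps)$ forces $m_3 = O(k\eps) \leq O(k/c)$ (assuming $\eps \leq 1/c$). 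The main obstacle will be the careful choice of $\mu$ in the $p=2$ case: it must be simultaneously large enough that the top-$k$ is cleanly separated from the rest (requiring $\mu \gg D$), large enough that each $\hat{S}$-swap inside $S$ costs $\Theta(\mu)$ in $\ell_2^2$ units, and small enough that $s\mu^2$ does not overwhelm $r$ in $\mathrm{opt}$; the choice $\mu = \Theta(1/(c\eps))$ is the unique scaling that balances these. Rescaling the hidden constants in $s$ and $r$ then yields the desired error bound of exactly $k/c$.
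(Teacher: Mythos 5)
Your proof is correct and follows essentially the same approach as the paper: shift by $\mu\cdot\mathbf{1}_S$, observe that the top-$k$ entries of $y$ are exactly $\supp(x)$, and convert the $(1+\eps)$-approximation guarantee into a bound on the number of support-swap errors via the gap $(\mu+1-\alpha)^p-(\mu+\alpha)^p$. The paper bounds the total $|V\setminus U|=m_2+m_3$ in one step by using the uniform bound $|y_i|\le\mu+\alpha$ over all of $\overline{U}$ (valid since $D\le\mu$), which sidesteps the separate $m_3$ estimate and the extra $\eps\le 1/c$ assumption you invoke — and indeed, since a swap outside $S$ costs at least as much as one inside $S$, your $m_3$ bound is already subsumed by the $m_2$ bound.
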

\begin{proof}
  For some parameter $\mu \geq D$, let $z_i = \mu$ for $i \in S$ and
  $z_i = 0$ elsewhere.  Consider $y = x_S + \nu + z$.  Let $U =
  \supp(x_S)$ have size $k$.  Let $V \subset [n]$ be the support of
  the result of running the recovery scheme on $Ay = A(x_S + \nu) +
  Az$.  Then we have that $x_S + z$ is $\mu + 1$ over $U$, $\mu$ over
  $S \setminus U$, and zero elsewhere.  Since $\norm{p}{u + v}^p \leq
  p(\norm{p}{u}^p + \norm{p}{v}^p)$ for any $u$ and $v$, we have
  \begin{align*}
    \norm{p}{y_{\overline{U}}}^p &\leq p(\norm{p}{(x_S + z)_{\overline{U}}}^p + \norm{p}{\nu}^p)\\
    &\leq p((s-k)\mu^p + r)\\
    &< p(r + s\mu^p).
  \end{align*}
  Since $\norm{\infty}{\nu_S} \leq \alpha$ and
  $\norm{\infty}{\nu_{\overline{S}}} < \mu$, we have
  \begin{align*}
    \norm{\infty}{y_U} &\geq \mu + 1 - \alpha\\
    \norm{\infty}{y_{\overline{U}}} &\leq \mu + \alpha  \end{align*}
  We then get
  \begin{align*}
    \norm{p}{y_{\overline{V}}}^p &= \norm{p}{y_{\overline{U}}}^p + \norm{p}{y_{U \setminus V}}^p  - \norm{p}{y_{V \setminus U}}^p\\
    &\geq \norm{p}{y_{\overline{U}}}^p + \abs{V \setminus U}((\mu + 1 - \alpha)^p - (\mu + \alpha)^p)\\
    &= \norm{p}{y_{\overline{U}}}^p + \abs{V \setminus U}(1 + (2p - 2)\mu)(1 - 2\alpha)
  \end{align*}
  where the last step can be checked for $p \in \{1, 2\}$.  So
  \begin{align*}
    \norm{p}{y_{\overline{V}}}^p &\geq \norm{p}{y_{\overline{U}}}^p(1 + \abs{V \setminus U}\frac{(1 + (2p - 2)\mu)(1 - 2\alpha)}{p(r + s\mu^p)})
  \end{align*}
  However, $V$ is the result of $1+\eps$-approximate recovery, so
  \begin{align*}
    \norm{p}{y_{\overline{V}}} &\leq \norm{p}{y - \hat{y}} \leq (1 + \eps)\norm{p}{y_{\overline{U}}}\\
    \norm{p}{y_{\overline{V}}}^p &\leq (1 + (2p-1)\eps)\norm{p}{y_{\overline{U}}}^p
  \end{align*}
  for $p \in \{1, 2\}$.  Hence
  \begin{align*}
    \abs{V \setminus U}\frac{(1 + (2p - 2)\mu)(1 - 2\alpha)}{p(r + s\mu^p)} &\leq (2p-1)\eps
  \end{align*}
  for $\alpha \leq 1/4$, this means
  \begin{align*}
    \abs{V \setminus U} &\leq \frac{2\eps(2p-1)p(r + s\mu^p)}{1 + (2p - 2)\mu}.
  \end{align*}
  Plugging in the parameters $p = 1, s =  r =
  \frac{k}{d\eps}, \mu = D$ gives
  \[
  \abs{V \setminus U} \leq \frac{2 \eps ((1 + D^2)r)}{1} = O(\frac{k}{d}).
  \]
  Plugging in the parameters $p = 2, q = 2, r = \frac{k}{d^2\eps^2}, \mu = \frac{1}{d\eps}$ gives
  \[
  \abs{V \setminus U} \leq \frac{12 \eps (3r)}{2 \mu} = \frac{18k}{d}.
  \]
  Hence, for $d = O(c)$, we get the parameters desired in the lemma
  statement, and
  \[
  \abs{V \setminus U} \leq \frac{k}{2c}.
  \]
  Bob can recover $V$ with probability $1-\delta$.  Therefore he can
  output $\hat{x}$ given by $\hat{x}_i = 1$ if $i \in V$ and
  $\hat{x}_i = 0$ otherwise.  This will differ from $x_S$ only within
  $(V \setminus U \cup U \setminus V)$, which is at most $k/c$
  locations.
\end{proof}

\subsection{$k > 1$}

Suppose $p, s, 3r$ satisfy Lemma~\ref{lemma:recoverv} for some
parameter $c$, and let $q = s / k$.  The Gilbert-Varshamov bound
implies that there exists a code $V \subset [q]^r$ with $\log \abs{V}
= \Omega(r \log q)$ and minimum Hamming distance $r/4$.  Let $X
\subset \{0,1\}^{qr}$ be in one-to-one correspondence with $V$: $x \in
X$ corresponds to $v \in V$ when $x_{(a-1)q + b} = 1$ if and only if
$v_{a} = b$.

Let $x$ and $v$ correspond.  Let $S \subset [r]$ with $\abs{S} = k$,
so $S$ corresponds to a set $T \subset [n]$ with $\abs{T} = kq = s$.
Consider arbitrary $\nu$ that satisfies $\norm{p}{\nu} <
\alpha\norm{p}{x}$ and $\norm{\infty}{\nu} \leq \alpha$ for some small
constant $\alpha \leq 1/4$.  We would like to apply
Lemma~\ref{lemma:bitstomeasurements2}, so we just need to show we can
recover $x$ from $A(x + \nu)$ with constant probability.  Let $\nu' =
x_{\overline{T}} + \nu$, so
\begin{align*}
  \norm{p}{\nu'}^p &\leq p(\norm{p}{x_{\overline{T}}}^p + \norm{p}{\nu}^p) \leq p(r - k + \alpha^p r) \leq 3r\\
  \norm{\infty}{\nu'_{\overline{T}}} &\leq 1 + \alpha\\
  \norm{\infty}{\nu'_{T}} &\leq \alpha
\end{align*}
Therefore Lemma~\ref{lemma:recoverv} implies that with probability
$1-\delta$, if Bob is given $A(x_T + \nu') = A(x + \nu)$ he can
recover $\hat{x}$ that agrees with $x_T$ in all but $k/c$ locations.
Hence in all but $k/c$ of the $i \in S$, $x_{\{(i-1)q + 1, \dotsc,
  iq\}} = \hat{x}_{\{(i-1)q + 1, \dotsc, iq\}}$, so he can identify
$v_i$.  Hence Bob can recover an estimate of $v_S$ that is accurate in
$(1-1/c)k$ characters with probability $1-\delta$, so it agrees with
$v_S$ in $(1 - 1/c)(1-\delta)k$ characters in expectation.  If we
apply this in parallel to the sets $S_i = \{k(i-1) + 1, \dotsc, ki\}$
for $i \in [r/k]$, we recover $(1 - 1/c)(1-\delta)r$ characters in
expectation.  Hence with probability at least $1/2$, we recover more
than $(1 - 2(1/c + \delta))r$ characters of $v$.  If we set $\delta$
and $1/c$ to less than $1/32$, this gives that we recover all but
$r/8$ characters of $v$.  Since $V$ has minimum distance $r/4$, this
allows us to recover $v$ (and hence $x$) exactly.  By
Lemma~\ref{lemma:bitstomeasurements2} this gives a lower bound of $m =
\Omega(\log \abs{V}) = \Omega(r \log q)$.  Hence $m =
\Omega(\frac{1}{\eps}k \log \frac{1}{\eps})$ for $\ell_1/\ell_1$
recovery and $m = \Omega(\frac{1}{\eps^2}k)$ for $\ell_2/\ell_2$
recovery.

\subsection{$k = 1, \delta = o(1)$}

To achieve the other half of our lower bounds for sparse outputs, we
restrict to the $k = 1$ case.  A $k$-sparse algorithm implies a
$1$-sparse algorithm by inserting $k-1$ dummy coordinates of value
$\infty$, so this is valid.

Let $p, s, 51r$ satisfy Lemma~\ref{lemma:recoverv} for some $\alpha$
and $D$ to be determined, and let our recovery algorithm have failure
probability $\delta$.  Let $C = 1/(2r\delta)$ and $n = Cr$.  Let $V =
[(s-1)C]^r$ and let $X' \in \{0,1\}^{(s-1)Cr}$ be the corresponding
binary vector.  Let $X = \{0\} \times X'$ be defined by adding $x_0 =
0$ to each vector.

Now, consider arbitrary $x \in X$ and noise $\nu \in \R^{1 + (s-1)Cr}$
with $\norm{p}{\nu} < \alpha\norm{p}{x}$ and $\norm{\infty}{\nu} \leq
\alpha$ for some small constant $\alpha \leq 1/20$.  Let $e^0 / 5$ be
the vector that is $1/5$ at $0$ and $0$ elsewhere.  Consider the sets
$S_i = \{0, (s-1)(i-1)+1, (s-1)(i-1)+2, \dotsc, (s-1)i\}$.  We would
like to apply Lemma~\ref{lemma:recoverv} to recover $(x + \nu + e^0 /
5)_{S_i}$ for each $i$.

To see what it implies, there are two cases: $\norm{1}{x_{sSi}} = 1$
and $\norm{1}{x_{S_i}} = 0$ (since $S_i$ lies entirely in one
character, $\norm{1}{x_{S_i}} \in \{0,1\}$).  In the former case, we
have $\nu' = x_{\overline{S_i}} + \nu + e^0 / 5$ with
\begin{align*}
  \norm{p}{\nu'}^p &\leq (2p-1)(\norm{p}{x_{\overline{S_i}}}^p
+ \norm{p}{\nu}^p + \norm{p}{e^0 / 5}^p) 
\leq 3(r + \alpha^pr + 1/5^p) < 4r\\
  \norm{\infty}{\nu'_{\overline{S_i}}} &\leq 1 + \alpha\\
  \norm{\infty}{\nu'_{S_i}} &\leq 1/5 + \alpha \leq 1/4
\end{align*}
Hence Lemma~\ref{lemma:recoverv} will, with failure probability
$\delta$, recover $\hat{x}_{S_i}$ that differs from $x_{S_i}$ in at
most $1/c < 1$ positions, so $x_{S_i}$ is correctly recovered.

Now, suppose $\norm{1}{x_{S_i}} = 0$.  Then we observe that
Lemma~\ref{lemma:recoverv} would apply to recovery from $5A(x + \nu +
e^0/5)$, with $\nu' = 5x + 5\nu$ and $x' = e^0$, so
\begin{align*}
  \norm{p}{\nu'}^p &\leq 5^pp(\norm{p}{x}^p + \norm{p}{\nu}^p) \leq 5^pp(r + \alpha^pr) < 51r\\
  \norm{\infty}{\nu'_{\overline{S_i}}} &\leq 5 + 5\alpha\\
  \norm{\infty}{\nu'_{S_i}} &\leq 5\alpha.
\end{align*}
Hence Lemma~\ref{lemma:recoverv} would recover, with failure
probability $\delta$, an $\hat{x}_{S_i}$ with support equal to $\{0\}$.

Now, we observe that the algorithm in Lemma~\ref{lemma:recoverv} is
robust to scaling the input $A(x' + \nu')$ by $5$; the only difference
is that the effective $\mu$ changes by the same factor, which
increases the number of errors $k/c$ by a factor of at most $5$.
Hence if $c > 5$, we can apply the algorithm once and have it work
regardless of whether $\norm{1}{x_{S_i}}$ is $0$ or $1$: if
$\norm{1}{x_{S_i}} = 1$ the result has support $\supp(x_i)$, and if
$\norm{1}{x_{S_i}} = 0$ the result has support $\{0\}$.  Thus we can
recover $x_{S_i}$ exactly with failure probability $\delta$.

If we try this to the $Cr = 1/(2\delta)$ sets $S_i$, we recover all of
$x$ correctly with failure probability at most $1/2$.  Hence
Lemma~\ref{lemma:bitstomeasurements2} implies that $m = \Omega(\log
\abs{X})= \Omega(r\log \frac{s}{r\delta})$.  For $\ell_1/\ell_1$, this
means $m = \Omega(\frac{1}{\eps} \log \frac{1}{\delta})$; for
$\ell_2/\ell_2$, this means $m = \Omega(\frac{1}{\eps^2} \log
\frac{\eps^2}{\delta})$.
\fi

{\it Acknowledgment:} We thank T.S. Jayram for helpful discussions. 
\ifconf
\bibliographystyle{IEEEtranS}
\else
\bibliographystyle{alpha}
\fi
\bibliography{eps-sparse}

\begin{thebibliography}{BYJKS04}

\bibitem[ASZ10]{ASZ10}
S.~Aeron, V.~Saligrama, and M.~Zhao.
\newblock Information theoretic bounds for compressed sensing.
\newblock {\em Information Theory, IEEE Transactions on}, 56(10):5111--5130,
  2010.

\bibitem[BR11]{br11}
Mark Braverman and Anup Rao.
\newblock Information equals amortized communication.
\newblock In {\em STOC}, 2011.

\bibitem[BY02]{BarYossefThesis}
Ziv Bar-Yossef.
\newblock {\em The Complexity of Massive Data Set Computations}.
\newblock PhD thesis, UC Berkeley, 2002.

\bibitem[BYJKS04]{BJKS04}
Ziv Bar-Yossef, T.~S. Jayram, Ravi Kumar, and D.~Sivakumar.
\newblock An information statistics approach to data stream and communication
  complexity.
\newblock {\em J. Comput. Syst. Sci.}, 68(4):702--732, 2004.

\bibitem[CCF02]{CCF}
M.~Charikar, K.~Chen, and M.~{Farach-Colton}.
\newblock Finding frequent items in data streams.
\newblock {\em ICALP}, 2002.

\bibitem[CD11]{CD11}
E.J. Cand{\`e}s and M.A. Davenport.
\newblock How well can we estimate a sparse vector?
\newblock {\em Arxiv preprint arXiv:1104.5246}, 2011.

\bibitem[CM04]{CM04}
G.~Cormode and S.~Muthukrishnan.
\newblock Improved data stream summaries: The count-min sketch and its
  applications.
\newblock {\em LATIN}, 2004.

\bibitem[CM05]{CM05}
Graham Cormode and S.~Muthukrishnan.
\newblock Summarizing and mining skewed data streams.
\newblock In {\em SDM}, 2005.

\bibitem[CM06]{CM06}
G.~Cormode and S.~Muthukrishnan.
\newblock Combinatorial algorithms for compressed sensing.
\newblock {\em Sirocco}, 2006.

\bibitem[CRT06]{CRT06}
E.~J. Cand{\`e}s, J.~Romberg, and T.~Tao.
\newblock Stable signal recovery from incomplete and inaccurate measurements.
\newblock {\em Comm. Pure Appl. Math.}, 59(8):1208--1223, 2006.

\bibitem[DIPW10]{DIPW}
K.~{Do Ba}, P.~Indyk, E.~Price, and D.~Woodruff.
\newblock Lower bounds for sparse recovery.
\newblock {\em SODA}, 2010.

\bibitem[Don06]{D06}
D.~L. Donoho.
\newblock {C}ompressed {S}ensing.
\newblock {\em IEEE Trans. Info. Theory}, 52(4):1289--1306, Apr. 2006.

\bibitem[FPRU10]{FPRU}
S.~Foucart, A.~Pajor, H.~Rauhut, and T.~Ullrich.
\newblock The gelfand widths of lp-balls for $0 < p \leq 1$.
\newblock 2010.

\bibitem[GLPS10]{GLPS}
Anna~C. Gilbert, Yi~Li, Ely Porat, and Martin~J. Strauss.
\newblock Approximate sparse recovery: optimizing time and measurements.
\newblock In {\em STOC}, pages 475--484, 2010.

\bibitem[Gur10]{Gnotes}
V.~Guruswami.
\newblock Introduction to coding theory.
\newblock {\em Graduate course notes, available at
  \url{http://www.cs.cmu.edu/~venkatg/teaching/codingtheory/}}, 2010.

\bibitem[IR08]{IR}
Piotr Indyk and Milan Ruzic.
\newblock Near-optimal sparse recovery in the l1 norm.
\newblock In {\em FOCS}, pages 199--207, 2008.

\bibitem[IT10]{IT10}
MA~Iwen and AH~Tewfik.
\newblock Adaptive group testing strategies for target detection and
  localization in noisy environments.
\newblock {\em IMA Preprint Series}, (2311), 2010.

\bibitem[Jay02]{J02}
T.S. Jayram.
\newblock Unpublished manuscript, 2002.

\bibitem[Mut05]{M05}
S.~Muthukrishnan.
\newblock Data streams: Algorithms and applications).
\newblock {\em FTTCS}, 2005.

\bibitem[SAZ10]{SAZ}
N.~Shental, A.~Amir, and Or~Zuk.
\newblock Identification of rare alleles and their carriers using compressed
  se(que)nsing.
\newblock {\em Nucleic Acids Research}, 38(19):1--22, 2010.

\bibitem[TDB09]{TDB}
J.~Treichler, M.~Davenport, and R.~Baraniuk.
\newblock Application of compressive sensing to the design of wideband signal
  acquisition receivers.
\newblock {\em In Proc. U.S./Australia Joint Work. Defense Apps. of Signal
  Processing (DASP)}, 2009.

\bibitem[Wai09]{W09}
Martin~J. Wainwright.
\newblock Information-theoretic limits on sparsity recovery in the
  high-dimensional and noisy setting.
\newblock {\em IEEE Transactions on Information Theory}, 55(12):5728--5741,
  2009.

\end{thebibliography}

\end{document}